\newcommand{\RN}[1]{%
	\textup{\uppercase\expandafter{\romannumeral#1}}%
}
\def\bfs{\boldsymbol}
\def\pa{\partial}
\def\wh{\widehat}
\def\wt{\widetilde}
\def\Re{ \mathrm{Re}}
\def\C{\mathbb{C}}
\def\P{\mathbf{P}}
\def\R{\mathbb{R}}
\def\bigO{{\mathcal O}}
\newcommand{\Pf}{{\textup{Pf}}}
\newcommand{\erfc}{\operatorname{erfc}}
\newcommand{\bfR}{\mathbf{R}}
\newcommand{\bfkappa}{{\bm \varkappa}}
\newcommand{\re}{\operatorname{Re}}
\newcommand{\im}{\operatorname{Im}}
\newcommand{\Prob}{{\mathbb{P}}}
\theoremstyle{plain}
\newtheorem*{thm*}{Theorem}
\newtheorem{thm}{Theorem}[section]
\newtheorem{lem}[thm]{Lemma}
\newtheorem{lemma}[thm]{Lemma}
\newtheorem{prop}[thm]{Proposition}
\newtheorem*{prop*}{Proposition}
\newtheorem*{lem*}{Lemma}
\newtheorem{remark}[thm]{Remark}
\theoremstyle{definition}
\newtheorem*{eg*}{Example}
\newtheorem*{egs*}{Examples}
\newtheorem*{def*}{Definition}
\newtheorem*{Q*}{Question}
\theoremstyle{remark}
\newtheorem*{rmk*}{Remark}
\newtheorem*{rmks*}{Remarks}
\newcommand{\abs}[1]{\lvert#1\rvert}
\numberwithin{equation}{section}
\begin{document}
\title[Symplectic induced Ginibre ensemble in the almost-circular regime]{On the almost-circular symplectic induced Ginibre ensemble}
\author{Sung-Soo Byun}
\address{School of Mathematics, Korea Institute for Advanced Study, 85 Hoegiro, Dongdaemun-gu, Seoul 02455, Republic of Korea}
\email{sungsoobyun@kias.re.kr} 

\author{Christophe Charlier}
\address{Centre for Mathematical Sciences, Lund University, 22100 Lund, Sweden}
\email{christophe.charlier@math.lu.se}



\begin{abstract}
We consider the symplectic induced Ginibre process, which is a Pfaffian point process on the plane. Let $N$ be the number of points. We focus on the almost-circular regime where most of the points lie in a thin annulus $\mathcal{S}_{N}$ of width $O(\frac{1}{N})$ as $N \to \infty$. Our main results are the scaling limits of all correlation functions near the real axis, and also away from the real axis. Near the real axis, the limiting correlation functions are Pfaffians with a new correlation kernel, which interpolates the limiting kernels in the bulk of the symplectic Ginibre ensemble and of the anti-symmetric Gaussian Hermitian
ensemble of odd size. Away from the real axis, the limiting correlation functions are determinants, and the kernel is the same as the one appearing in the bulk limit of almost-Hermitian random matrices. Furthermore, we obtain precise large $N$ asymptotics for the probability that no points lie outside $\mathcal{S}_{N}$, as well as of several other ``semi-large" gap probabilities. 
\end{abstract}


\maketitle
\vspace{-0.5cm} \noindent
{\small{\sc AMS Subject Classification (2020)}: 60B20, 33C45.}

\noindent
{\small{\sc Keywords}: Universality, Random matrix theory, Asymptotic analysis.}

\section{Introduction and main results}
The symplectic induced Ginibre ensemble is the Pfaffian point process for $N$ points $\bfs{\zeta}= \{ \zeta_j \}_{j=1}^N$ on the plane whose joint probability distribution $\P_N$ is given by 
\begin{equation}\label{Gibbs}
d\P_N(\boldsymbol{\zeta}) = \frac{1}{N!Z_N} \prod_{1 \leq j<k \leq N} \abs{\zeta_j-\zeta_k}^2 \abs{\zeta_j-\overline{\zeta}_k}^2 \prod_{j=1}^{N} \abs{\zeta_j-\overline{\zeta}_j}^2 e^{ -N  Q_{N}(\zeta_j) } \,  dA(\zeta_j),
\end{equation}
where $dA(\zeta):=d^2\zeta/\pi$, and $Z_N$ is the normalisation constant. Here the potential $Q_N$ depends strongly on $N$ and is given by 
\begin{equation} \label{Q aN bN}
 Q_N(\zeta):=a_N |\zeta|^2-2 b_N \log |\zeta|, \qquad a_N=\frac{N}{\rho^2}, \qquad b_N=\frac{N}{\rho^2}-1,
\end{equation}
where $\rho \in (0,\infty)$ is independent of $N$. If $b_N$ is replaced by $0$, then \eqref{Gibbs} is the eigenvalue distribution of a symplectic Ginibre matrix, i.e. a Gaussian random matrix with quaternion entries \cite{ginibre1965statistical}. More generally, for integer values of $b_{N}$, \eqref{Gibbs} is the eigenvalue distribution of a symplectic Ginibre matrix conditioned to have $b_{N}$ zero eigenvalues \cite{MR2180006} (such matrices are called symplectic induced Ginibre matrices \cite{MR2881072}). 

The ensemble \eqref{Gibbs} also admits a statistical mechanics interpretation \cite{kiessling1999note,forrester2016analogies} as a two-dimensional Coulomb gas with an additional complex conjugation symmetry. 
From this point of view, the parameters $a_N$ and $b_N$ play different roles in the statistics of \eqref{Gibbs}: $a_N$ corresponds to the repulsion between the points and $\infty$, whereas $b_N$ corresponds to the repulsion between the points and the origin. Since $a_{N}$ and $b_{N}$ are of order $N$, these repulsions from $0$ and $\infty$ are much stronger than the point-point repulsion. As a consequence, for large $N$ the points are confined in a thin annulus $\mathcal{S}_N$ with high probability. $\mathcal{S}_{N}$ is called the droplet, and by \cite[Theorem 3.1]{MR2934715} and \cite[Section \RN{4}.5]{ST97} it is given by
\begin{figure}
	\centering
	\includegraphics[width=\textwidth]{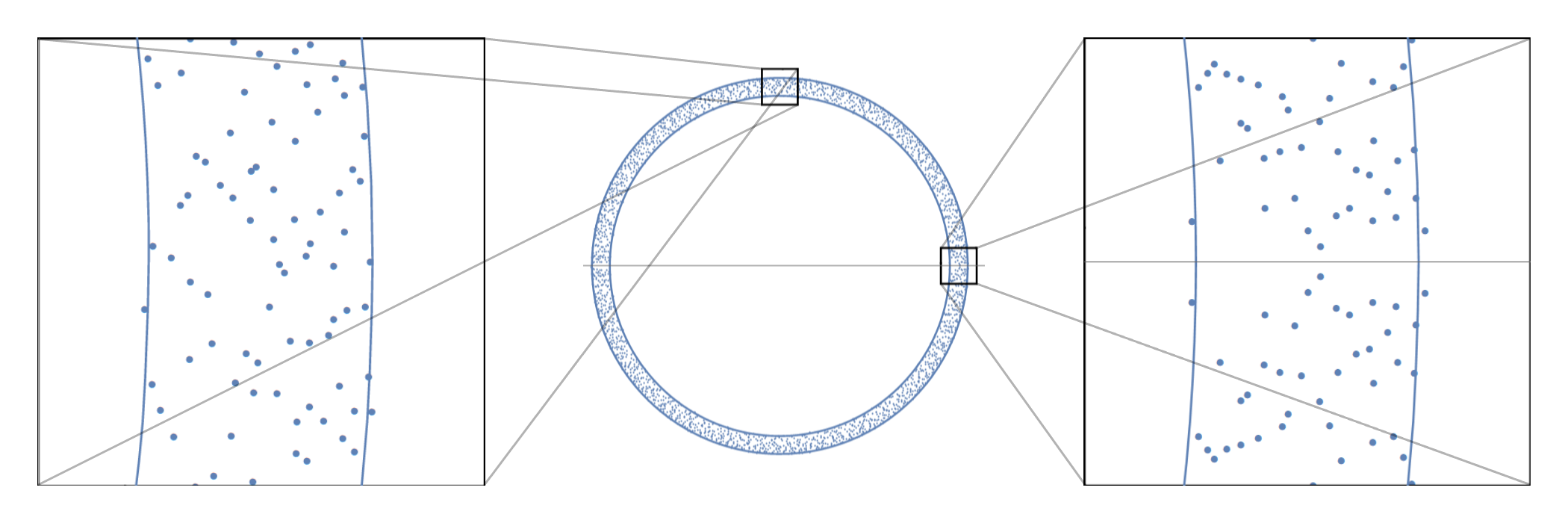}
	\caption{ Illustration of $(\bfs{\zeta},\overline{\bfs{\zeta}})$, where $\bfs{\zeta}$ is drawn from the symplectic induced Ginibre ensemble in the almost-circular regime, with $N=1000$ and $\rho = 10$. The zooms are taken near $1$ and $i$. (We are not aware of a method to simulate \eqref{Gibbs} for $b_{N} \neq 0$, so the above figure has been generated in a somewhat naive way, is inexact, and therefore should be taken with a grain of salt.) 
} 
\label{Fig_AUE}
\end{figure}
\begin{equation} \label{droplet annulus}
\mathcal{S}_N:=\{ \zeta \in \C : r_1 \le |\zeta| \le r_2 \},\qquad r_1=\sqrt{\frac{b_N}{a_N}}, \qquad r_2=\sqrt{ \frac{2+b_N}{a_N} }. 
\end{equation}
With the choice \eqref{Q aN bN} of $a_{N}$ and $b_{N}$, it follows that
\begin{equation} \label{radius}
	r_1 = 1-\frac{\rho^2}{2N}+O\Big(\frac{1}{N^{2}}\Big), \qquad  r_2   = 1+\frac{\rho^2}{2N}+O\Big(\frac{1}{N^{2}}\Big), \qquad \mbox{as } N \to  \infty,
\end{equation}
and therefore $\mathcal{S}_N$ is a thin annulus of width $\frac{\rho^2}{N}+O(N^{-2})$ as $N \to \infty$, see Figure~\ref{Fig_AUE}. 
For this reason, we will refer to the choice of parameters \eqref{Q aN bN}---which is the focus of this paper---as the \emph{almost-circular regime}. 

In this work $N$ is large and $\rho$ is fixed. As an interesting aside, we mention that in the other regime where $\rho \to 0$ while $N$ is kept fixed, the droplet is also thin, and by taking formally $\rho=0$, \eqref{Gibbs} becomes the eigenvalue distribution on the unit circle of a symplectic unitary random matrix, see \cite[Section 2.6]{forrester2010log}.

Planar ensembles with thin droplets were introduced in the works \cite{MR1634312, fyodorov1997almost, MR1431718} on the complex elliptic Ginibre ensemble and were recently studied in \cite{AB21, byun2021random} for more general random normal matrix models. The work \cite{AB21} treats universality questions for general bandlimited point processes using Ward's equation, while \cite{byun2021random} deals with almost-circular ensembles associated with general radially symmetric potentials. In the earlier works \cite{MR1634312, fyodorov1997almost, MR1431718,AB21, byun2021random}, the considered point processes are determinantal. A major difference with our case is that \eqref{Gibbs} is Pfaffian.

We also mention that if $Q_{N}$ is replaced by $2(1+\frac{L+1}{N})\log(1+|\zeta|^{2})-\frac{4L}{N}\log |\zeta|$, then \eqref{Gibbs} is called the symplectic induced spherical ensemble and was studied in \cite{MR3612266}.

In this work, we study scaling limits and gap probabilities of \eqref{Gibbs} as $N \to \infty$ while $\rho$ is kept fixed. Our results can be summarized as follows.

\begin{enumerate}[label=(\roman*)]
	\item \textbf{(Scaling limits)}
	One might expect from Figure \ref{Fig_AUE} that \eqref{Gibbs} will enjoy different limiting correlation structures as $N \to \infty$ depending on whether we look at the statistics near the real line or not. Our results confirm this expectation. Theorem~\ref{thm:main} (a) deals with the limiting correlation structure of \eqref{Gibbs} near a point $p$ on the unit circle, $p \neq -1, 1$, while Theorem~\ref{thm:main} (b) deals with the other cases $p=\pm 1$. For $p=\pm 1$, the Pfaffian structure is preserved in the limit, and it involves a new skew pre-kernel $\kappa^{\mathbb{R}}$ (see \eqref{kappa Wronskian} below) which interpolates between the limiting pre-kernels in the bulk of the symplectic Ginibre ensemble and of the anti-symmetric Gaussian Hermitian
ensemble of odd size (see Remark \ref{remark: mathcing with symplectic Ginibre} and Proposition \ref{Cor_Chiral lim}). For the other case $|p|=1$, $p \neq -1, 1$, we find that the Pfaffian structure of \eqref{Gibbs} simplifies in the limit and becomes determinantal, see \eqref{RNk complex case}--\eqref{def of RkC new}. The limiting correlation kernel $K^\C$ is given by \eqref{K AH} and already appeared in \cite{MR1634312, fyodorov1997almost, MR1431718, akemann2016universality,MR4030288, AB21, byun2021random}, but the way it arises in this work, namely as the large $N$ limit of the kernel of a Pfaffian point process, is new to our knowledge. In Theorem \ref{thm:main} (c), we study a transition between the two families of correlation functions associated with $\kappa^{\mathbb{R}}$ and $K^\C$.
	\smallskip 
	\item\label{item gap} \textbf{(Gap probabilities)} For typical configurations, most of the points of \eqref{Gibbs} lie in $\mathcal{S}_{N}$. In fact, by analogy with e.g. \cite[Eq.(70)]{MR3450566}, we expect about $\sim \sqrt{N}$ of the $\zeta_{j}$'s to lie slightly outside $\mathcal{S}_{N}$. In Theorem~\ref{Thm_partition functions} below, we obtain precise asymptotics for the probability $\mathbb{P}_{N}^{12}$ that \textit{all} $\zeta_{j}$ lie in $\mathcal{S}_{N}$, up and including the term of order $1$. We also obtain similar results for the probability $\mathbb{P}_{N}^{1}$ that $\min_{j \in \{1,\ldots,N\}}|\zeta_{j}| \geq r_{1}$, and for the probability $\mathbb{P}_{N}^{2}$ that $\max_{j \in \{1,\ldots,N\}}|\zeta_{j}| \leq r_{2}$. 
\end{enumerate}

In the following subsections, we give more background and state our main results.

\subsection{Scaling limits}
An interesting feature of \eqref{Gibbs} is that it is integrable and provides one of the few known examples of Pfaffian point processes in the plane \cite{MR1928853}.  The $k$-point correlation function of \eqref{Gibbs} is defined by
\begin{equation}\label{bfRNk def}
	\bfR_{N,k}(\zeta_1,\dots, \zeta_k) := \frac{N!}{(N-k)!} \int_{\C^{N-k}} \P _N(\bfs{\zeta}) \prod_{j=k+1}^N  dA(\zeta_j).
\end{equation}
Since \eqref{Gibbs} is a Pfaffian point process, all correlation functions can be expressed as Pfaffians involving a (skew) pre-kernel $\bfkappa_N$. More precisely, we have
\begin{equation} \label{bfR Pfa}
	\bfR_{N,k}(\zeta_1,\dots, \zeta_k) =  \Pf \Big[ 
	e^{ -\frac{N}{2}(Q_N(\zeta_j)+Q_N(\zeta_l)) } 
	\begin{pmatrix} 
		\bfkappa_N(\zeta_j,\zeta_l) & \bfkappa_N(\zeta_j,\bar{\zeta}_l)
		\smallskip 
		\\
		\bfkappa_N(\bar{\zeta}_j,\zeta_l) & \bfkappa_N(\bar{\zeta}_j,\bar{\zeta}_l) 
	\end{pmatrix}  \Big]_{ j,l=1 }^k \prod_{j=1}^{k} (\bar{\zeta}_j-\zeta_j),
\end{equation}
where
\begin{equation} \label{bfkappaN GN}
	\begin{split}
		\bfkappa_N(\zeta,\eta):=\boldsymbol{G}_N(\zeta,\eta)-\boldsymbol{G}_N(\eta,\zeta),
	\end{split}
\end{equation}
and $\boldsymbol{G}$ is given in terms of the standard $\Gamma$ function by
\begin{equation} 
	\begin{split} \label{GN}
		\boldsymbol{G}_N(\zeta,\eta):=  \sqrt{\pi} \Big( \frac{a_NN}{2} \Big)^{b_NN+\frac32}  \sum_{k=0}^{N-1} \frac{ ( \sqrt{\frac{a_NN}{2}}\, \zeta )^{2k+1} }{  \Gamma(k+\tfrac{3}{2}+\tfrac{b_N N}{2}) }  \sum_{l=0}^k \frac{ ( \sqrt{ \frac{a_NN}{2} } \eta )^{2l} }{ \Gamma(l+\frac{b_N N }{2}+1) }.
	\end{split}
\end{equation}
We prove \eqref{bfR Pfa}--\eqref{GN} in Lemma~\ref{Lem_bfkappa exp} below. In this subsection, we obtain scaling limits of all correlation functions $\{\bfR_{N,k}\}_{k=1}^{\infty}$ as $N \to \infty$ at different points of the droplet. 

The problem of deriving scaling limits for correlation functions of point processes is a classical topic in random matrix theory in connection with the local universality conjecture. This conjecture asserts, roughly speaking, that the limiting correlations between the points should only depend on the symmetry class of the random matrix model and on the points around which these correlations are studied, see e.g. \cite{kuijlaars2011universality} for a survey.

For planar Pfaffian point processes, such problems began to be addressed in the pioneering works of Mehta \cite{Mehta} and Kanzieper \cite{MR1928853}, who studied the limiting local statistics of the symplectic Ginibre ensemble at the origin. In recent years, there has been a growing number of works in that direction. To name a few, the limiting local statistics of the symplectic Ginibre ensemble at the real bulk/edge have been studied in \cite{akemann2021scaling, Lysychkin,akemann2019universal}. 
These results have been extended to the elliptic Ginibre ensemble, see \cite{akemann2021skew} for the limiting kernel at the origin, and \cite{byun2021universal} for the limiting kernel anywhere on the real line. 
For the elliptic Ginibre ensemble in the almost-Hermitian regime, the scaling limits for the kernel at the origin and at the real edge were discovered in \cite{MR1928853} and \cite{MR3192169} respectively. 
These scaling limits have then been extended to the entire real bulk/edge in \cite{byun2021wronskian}.
Similar problems were studied in \cite{khoruzhenko2021truncations} for the truncated symplectic ensemble and in \cite{byun2021wronskian} for the Ginibre ensemble with boundary confinements. We also refer to \cite{akemann2021scaling, MR2180006, MR2302902, MR3066113} for various results on scaling limits of correlation kernels in the context of Mittag-Leffler ensembles, Laguerre ensembles, and products of Ginibre matrices.

For the point process \eqref{Gibbs}, it is natural to expect important differences between the limiting statistics near the real line and those away from the real line. Indeed, the factor $\abs{\zeta_j-\overline{\zeta}_j}^2$ implies that the $\zeta_{j}$'s repel from the real line. 
Furthermore, when the local statistics around the real line are considered, both interaction terms $|\zeta_j-\zeta_k|^2$ and $|\zeta_j-\bar{\zeta}_k|^2$ in \eqref{Gibbs} should contribute as $N \to \infty$. One therefore expect a limiting Pfaffian structure to emerge in the large $N$ limit. On the other hand, when the local statistics are considered away from the real line, only one of the terms $|\zeta_j-\zeta_k|^2$ and $|\zeta_j-\bar{\zeta}_k|^2$ makes a non-trivial contribution; the other one behaves like a constant background charge which leads to a trivial contribution in the large $N$ limit. Therefore, away from the real axis, the limiting local statistics are expected to be determinantal and to be the same as the ones appearing in the random normal matrix model. We now state our first main results, which confirm these expectations.

\begin{thm}\label{thm:main}
Let $\rho \in (0,\infty)$ and $k \in \mathbb{N}_{>0}$ be fixed, let $Q_N$ be as in \eqref{Q aN bN}, and let $\gamma_N:= \sqrt{2} \, \rho/N$.
\begin{itemize}
\item[(a)] \textbf{\textup{Scaling limit of $\bfR_{N,k}$ away from the real line.}} \newline
Let $p:=e^{i\theta}$, where $\theta \in [0,2\pi) \setminus \{0,\pi \}$. As $N \to \infty$, we have 
\begin{align}\label{RNk complex case}
\hspace{1.2cm} \gamma_N^{2k} \,  \bfR_{N,k}\Big(p+p\gamma_{N}z_{1},\dots,p+p\gamma_{N}z_{k}\Big) = R_k^{\mathbb{C}}(z_1,\dots,z_k)+o(1),
\end{align}
uniformly for $z_{1},\ldots,z_{k}$ in compact subsets of $\mathbb{C}$, where
\begin{align}
& R_k^{\mathbb{C}}(z_1,\dots,z_k) := \det \Big[ e^{-|z_j|^2-|z_l|^2}\, K^{\mathbb{C}}(z_j,z_l) \Big]_{j,l=1}^k, \label{def of RkC new} \\
& K^{\mathbb{C}}(z,w):= \frac{e^{2z\bar{w}} }{2}\Big( \erfc(z+\bar{w}-2a)-\erfc(z+\bar{w}+2a) \Big), \quad a:= \frac{\rho}{2\sqrt{2}}. \label{K AH}
\end{align}
\item[(b)] \textbf{\textup{Scaling limit of $\bfR_{N,k}$ near the real line.}} \newline
Let $p = p_N:= e^{i\theta_N}$, where $\theta_N=\frac{\sqrt{2}\rho}{N} t$ and $t \in \R$ is fixed. As $N \to \infty$, we have 
\begin{align}
\gamma_N^{2k} \,  \bfR_{N,k}\Big(p+p\gamma_{N}z_{1},\dots,p+p\gamma_{N}z_{k}\Big) = R_k^{\mathbb{R}}(z_1+it,\dots,z_k+it) + o(1), \label{RNk near real}
\end{align}
uniformly for $z_{1},\ldots,z_{k}$ in compact subsets of $\mathbb{C}$, where 
\begin{align}
& \hspace{1cm} R_k^{\mathbb{R}}(z_1,\dots,z_k):=\Pf \Big[ e^{-|z_j|^2-|z_l|^2} 
		\begin{pmatrix} 
			\kappa^{\mathbb{R}}(z_j,z_l) & \kappa^{\mathbb{R}}(z_j,\bar{z}_l)
			\smallskip \label{def of RkR} \\
\kappa^{\mathbb{R}}(\bar{z}_j,z_l) & \kappa^{\mathbb{R}}(\bar{z}_j,\bar{z}_l) 
		\end{pmatrix} 
		\Big]_{j,l=1}^k\prod_{j=1}^k (\bar{z}_j-z_j) , \\
& \hspace{1cm} \kappa^{\mathbb{R}}(z,w):=\sqrt{\pi} e^{z^2+w^2} \Big( \int_{-a}^a W(f_{w},f_{z})(u) \, du +f_w(a)f_z(-a) -f_z(a)f_w(-a) \Big), 
\label{kappa Wronskian}
\end{align}
$W(f,g):=fg'-gf'$ is the Wronskian, $a:= \frac{\rho}{2\sqrt{2}}$ and 
\begin{equation} \label{fz a}
f_z(u):=\tfrac12 \erfc(\sqrt{2}(z-u)). 
\end{equation}
Statement (b) above also holds with $p = p_N:= -e^{i\theta_N}$.
\smallskip 
\item[(c)] \textbf{\textup{The $R_k^{\mathbb{R}}$-to-$R_k^{\mathbb{C}}$ transition}} 
\newline
As $t \to \infty$, we have 
\begin{equation} \label{Pf to det}
R_k^{\mathbb{R}}(z_1+it,\dots,z_k+it)= R_k^{\mathbb{C}}(z_1,\dots,z_k) +o(1),
\end{equation}
uniformly for $z_1,\dots, z_k$ in compact subsets of $\C$.

\end{itemize}
\end{thm}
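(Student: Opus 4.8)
The plan is to substitute the shifted arguments $w_j:=z_j+it$ into the definition \eqref{def of RkR} of $R_k^{\mathbb{R}}$ and to analyze the resulting $2k\times 2k$ Pfaffian as $t\to+\infty$; the mechanism to exploit is that the $w_j$ recede to $+i\infty$ while their conjugates $\overline{w_j}=\overline{z_j}-it$ recede to $-i\infty$, so that the two ``species'' decouple and the Pfaffian degenerates to a $k\times k$ determinant. Write $\omega_{jl}:=e^{-|w_j|^2-|w_l|^2}$. Using the antisymmetry of $\kappa^{\mathbb{R}}$ (immediate from \eqref{kappa Wronskian}), reordering the rows and columns of the matrix in \eqref{def of RkR} from the interleaved order $(w_1,\overline{w_1},\dots,w_k,\overline{w_k})$ to the block order $(w_1,\dots,w_k,\overline{w_1},\dots,\overline{w_k})$ --- a riffle permutation of sign $(-1)^{k(k-1)/2}$, to be accounted for at the end --- turns it into $\bigl(\begin{smallmatrix}A_{++}&A_{+-}\\-A_{+-}^{\mathsf T}&A_{--}\end{smallmatrix}\bigr)$, where $(A_{++})_{jl}=\omega_{jl}\kappa^{\mathbb{R}}(w_j,w_l)$, $(A_{+-})_{jl}=\omega_{jl}\kappa^{\mathbb{R}}(w_j,\overline{w_l})$, $(A_{--})_{jl}=\omega_{jl}\kappa^{\mathbb{R}}(\overline{w_j},\overline{w_l})$.

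The heart of the argument is the entrywise asymptotics of these blocks. For $z$ in a compact set and $u\in[-a,a]$ the quantities $\sqrt2\,(z+it-u)$ and $\sqrt2\,(\overline z-it-u)$ tend to $+i\infty$, resp.\ $-i\infty$, with arguments bounded away from $\pm\tfrac{3\pi}{4}$, so the classical expansion $\erfc(\zeta)=\tfrac{e^{-\zeta^2}}{\sqrt\pi\,\zeta}(1+O(\zeta^{-2}))$ holds uniformly there; together with $f_z'(u)=\sqrt{2/\pi}\,e^{-2(z-u)^2}$ (from \eqref{fz a}) it yields $f_{w_j}(u)=\tfrac{f_{w_j}'(u)}{4(w_j-u)}(1+O(t^{-2}))$, and likewise for $f_{\overline{w_j}}$. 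Inserting this into the Wronskian $W(f_w,f_z)=f_wf_z'-f_zf_w'$ of \eqref{kappa Wronskian} produces the dichotomy I would rely on: in the $(+,+)$ and $(-,-)$ blocks the leading terms cancel, since $\tfrac1{w_l-u}-\tfrac1{w_j-u}=O(t^{-2})$, whereas in the $(+,-)$ block they reinforce, the denominators being $\sim+it$ and $\sim-it$. The exponential prefactors are tamed by the identities $w^2-|w|^2=2i\,w\,\Im w$ and $\tfrac12(w_j+\overline{w_l})=\tfrac12(z_j+\overline{z_l})$ (the $it$'s cancelling in the latter), which show that after multiplying by $\omega_{jl}$ every genuinely $t$-growing exponential disappears. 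The outcome should be, uniformly for $z_1,\dots,z_k$ on compacts,
\[
(A_{++})_{jl},\ (A_{--})_{jl}=O(t^{-2}),\qquad (A_{+-})_{jl}=e^{-2it\Re z_j}\Bigl(\tfrac{i}{2t}\,e^{-|z_j|^2-|z_l|^2}\,K^{\mathbb{C}}(z_j,z_l)+O(t^{-2})\Bigr)e^{2it\Re z_l},
\]
the constant in the $(+,-)$-estimate being produced by a Gaussian integral: the entry reduces to $\tfrac{i}{\sqrt\pi\,t}\,\omega_{jl}\,e^{2w_j\overline{w_l}}\int_{-a}^a e^{-4(u-\frac{z_j+\overline{z_l}}{2})^2}\,du$ up to smaller terms, and the integral equals $\tfrac{\sqrt\pi}{4}\bigl(\erfc(z_j+\overline{z_l}-2a)-\erfc(z_j+\overline{z_l}+2a)\bigr)$ after rescaling and the reflection $\erfc(-x)=2-\erfc(x)$, which matches \eqref{K AH}.

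To conclude, I would expand the Pfaffian of $\bigl(\begin{smallmatrix}A_{++}&A_{+-}\\-A_{+-}^{\mathsf T}&A_{--}\end{smallmatrix}\bigr)$ over perfect matchings: a matching using $p\ge1$ same-species pairs uses $p$ of each kind and $k-2p$ mixed pairs, hence contributes $O(t^{-2p}\,t^{-2p}\,t^{-(k-2p)})=O(t^{-(k+2)})$, while the $p=0$ terms give the block-antidiagonal Pfaffian $(-1)^{k(k-1)/2}\det A_{+-}$; together with the riffle sign the two factors of $(-1)^{k(k-1)/2}$ cancel, so the Pfaffian in \eqref{def of RkR} equals $\det A_{+-}+O(t^{-(k+2)})$. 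The diagonal conjugations $\operatorname{diag}(e^{\mp2it\Re z_j})$ drop out of $\det A_{+-}$, giving $\det A_{+-}=t^{-k}(i/2)^k R_k^{\mathbb{C}}(z_1,\dots,z_k)+O(t^{-(k+1)})$; and since $\overline{w_j}-w_j=-2i(\Im z_j+t)$, the factor $\prod_{j=1}^k(\overline{w_j}-w_j)$ in \eqref{def of RkR} equals $(-2i)^k t^k(1+O(t^{-1}))$. As $(i/2)(-2i)=1$, multiplying these produces $R_k^{\mathbb{R}}(z_1+it,\dots,z_k+it)=R_k^{\mathbb{C}}(z_1,\dots,z_k)+O(t^{-1})$, uniformly on compacts, which is \eqref{Pf to det}. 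The main obstacle is the second paragraph: securing the uniform $O(t^{-2})$ decay of the diagonal blocks and the precise leading term of the off-diagonal block requires tracking cancellations among the Wronskian, the Gaussian prefactor and the boundary terms of \eqref{kappa Wronskian}, each of which individually grows like $e^{O(t^2)}$; one must group these exponentials before bounding, and verify that the $O(\zeta^{-2})$ error in the $\erfc$-expansion is uniform over the relevant $(z,u)$-compact sets.
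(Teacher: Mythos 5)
Your proposal addresses only part~(c) of the theorem; parts~(a) and~(b), which constitute the bulk of the statement and require quite different machinery in the paper (a Riemann-sum analysis of $\wh{\boldsymbol G}_N$ using the outer asymptotics of $\mathrm Q$ for (a), and a generalised Christoffel--Darboux differential identity combined with an ODE-limit argument for (b)), are not touched at all. That is a substantial gap if the task is the full Theorem~\ref{thm:main}.

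For part~(c) itself, your argument is correct and follows essentially the same route as the paper's proof in Subsection~\ref{Subsec_scaling interpolation}: both use the large-argument asymptotics $\erfc(\zeta)\sim e^{-\zeta^2}/(\sqrt\pi\,\zeta)$ to show that the same-species pre-kernel $e^{-|w_j|^2-|w_l|^2}\kappa^{\mathbb R}(w_j,w_l)$ is $O(t^{-2})$, that the mixed pre-kernel $e^{-|w_j|^2-|w_l|^2}\kappa^{\mathbb R}(w_j,\bar w_l)$ equals $\frac{i}{2t}e^{-|z_j|^2-|z_l|^2}K^{\mathbb C}(z_j,z_l)$ up to an explicit cocycle and $O(t^{-2})$, and then reduce the Pfaffian to $\det A_{+-}$ via the identity \eqref{Pf block det}. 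Your phrasing via the perfect-matching expansion (with the bound $O(t^{-(k+2p)})$ for $p\ge1$ same-species pairs---note a small typo, you wrote $O(t^{-(k+2)})$, which is the resulting bound but not the intermediate expression) is equivalent to the paper's row/column-operation argument, and your explicit diagonal conjugation $e^{\mp 2it\Re z_j}$ is exactly the paper's cocycle $c(z,w)$. One point worth making fully explicit in a final write-up: you should verify, not merely flag, that after multiplication by $\sqrt\pi\,e^{w_j^2+\bar w_l^2}\omega_{jl}$ the boundary terms $f_{\bar w_l}(\pm a)f_{w_j}(\mp a)$ of \eqref{kappa Wronskian} are $O(t^{-2})$; the computation works because the real part of the combined exponent $w_j^2+\bar w_l^2-|w_j|^2-|w_l|^2-2(w_j+a)^2-2(\bar w_l-a)^2$ is bounded in $t$, and each factor $1/(w_j\mp a)$, $1/(\bar w_l\mp a)$ contributes $O(t^{-1})$.
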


\begin{remark}
The $1$-point functions  
\begin{align}
&R_{1}^\C(z)= \frac12 \Big(\erfc(z+\bar{z}-2a)-\erfc(z+\bar{z}+2a)\Big), \label{R1 C}
\\
& R^{\mathbb{R}}_{1} (z)= \sqrt{\pi}  (\bar{z}-z) e^{(z-\bar{z})^2} \Big( \int_{-a}^a W(f_{\bar{z}},f_{z})(u) \, du +f_{\bar{z}}(a)f_z(-a) -f_z(a)f_{\bar{z}}(-a) \Big), \label{R1 R}
\end{align}
are represented in Figure~\ref{Fig_R Pfaffian complex} 
for several choices of $a$.
\end{remark}

\begin{figure}[h!]
	\begin{subfigure}{0.3\textwidth}
		\begin{center}	
			\includegraphics[width=\textwidth]{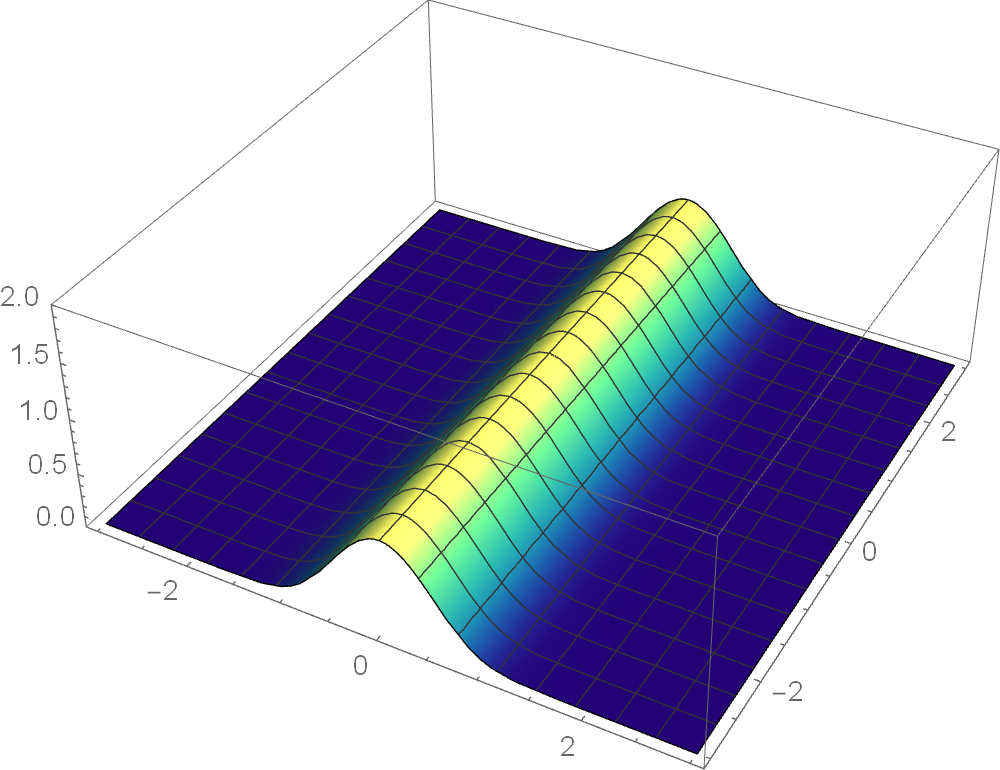}
		\end{center}
		\subcaption{$a=\frac12$}
	\end{subfigure}	
	\begin{subfigure}[h]{0.3\textwidth}
		\begin{center}
			\includegraphics[width=\textwidth]{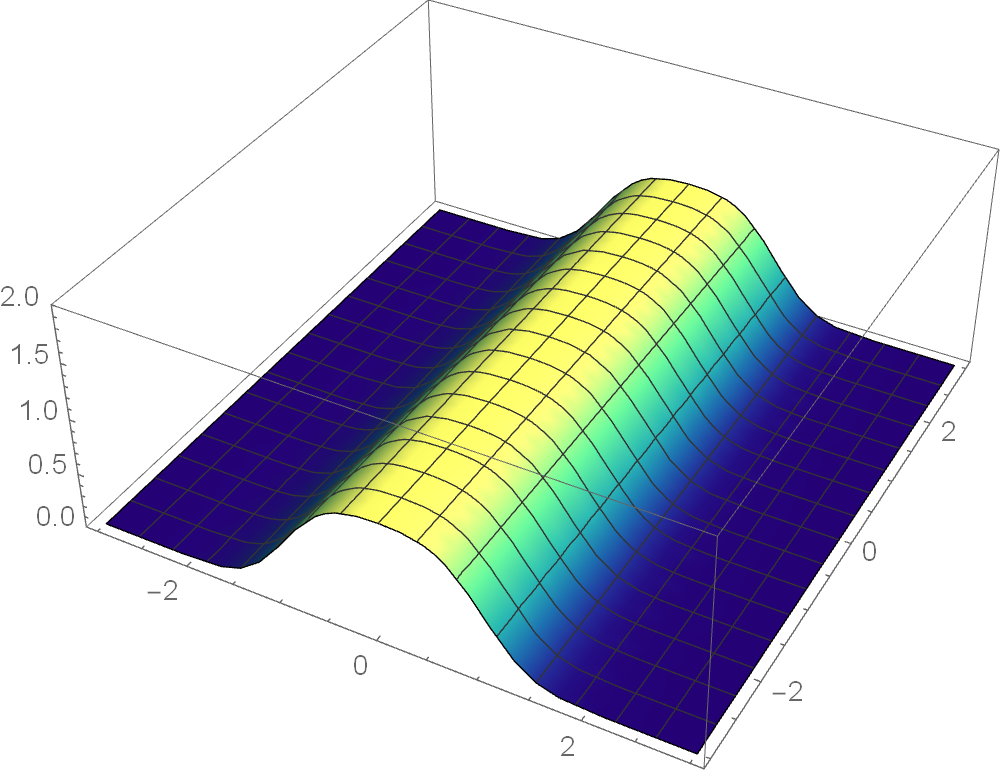}
		\end{center} \subcaption{$a=1$}
	\end{subfigure}
	\begin{subfigure}[h]{0.3\textwidth}
		\begin{center}
			\includegraphics[width=\textwidth]{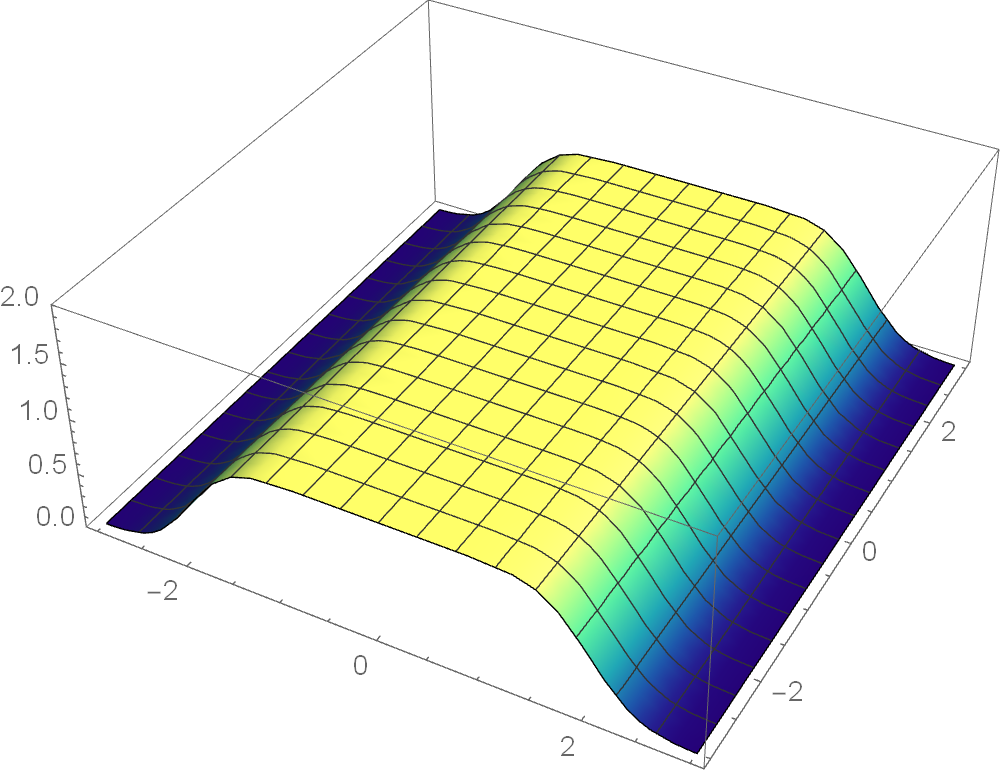}
		\end{center} \subcaption{$a=2$}
	\end{subfigure}
	
	\begin{subfigure}{0.3\textwidth}
		\begin{center}	
			\includegraphics[width=\textwidth]{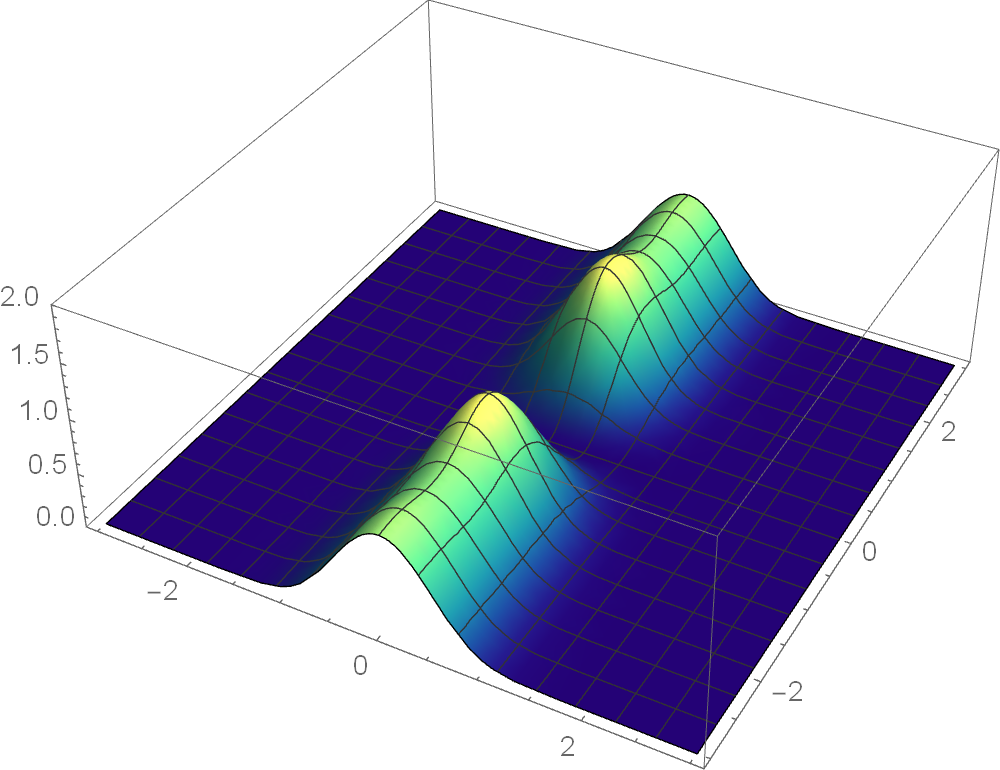}
		\end{center}
		\subcaption{$a=\frac12$}
	\end{subfigure}	
	\begin{subfigure}[h]{0.3\textwidth}
		\begin{center}
			\includegraphics[width=\textwidth]{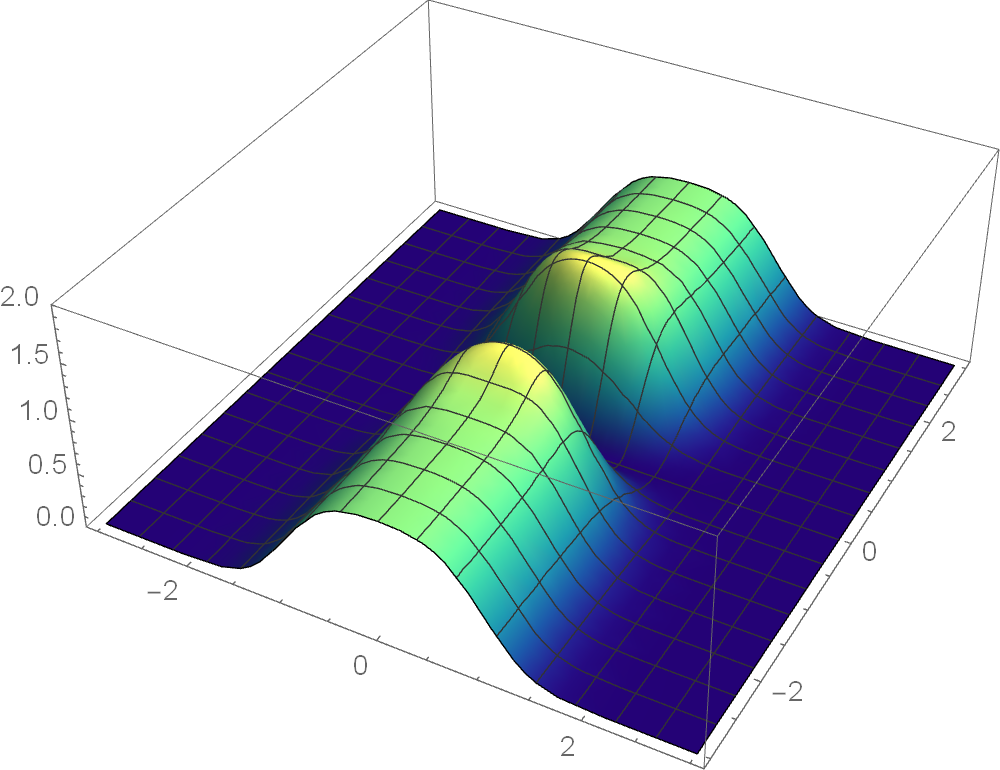}
		\end{center} \subcaption{$a=1$}
	\end{subfigure}
	\begin{subfigure}[h]{0.3\textwidth}
		\begin{center}
			\includegraphics[width=\textwidth]{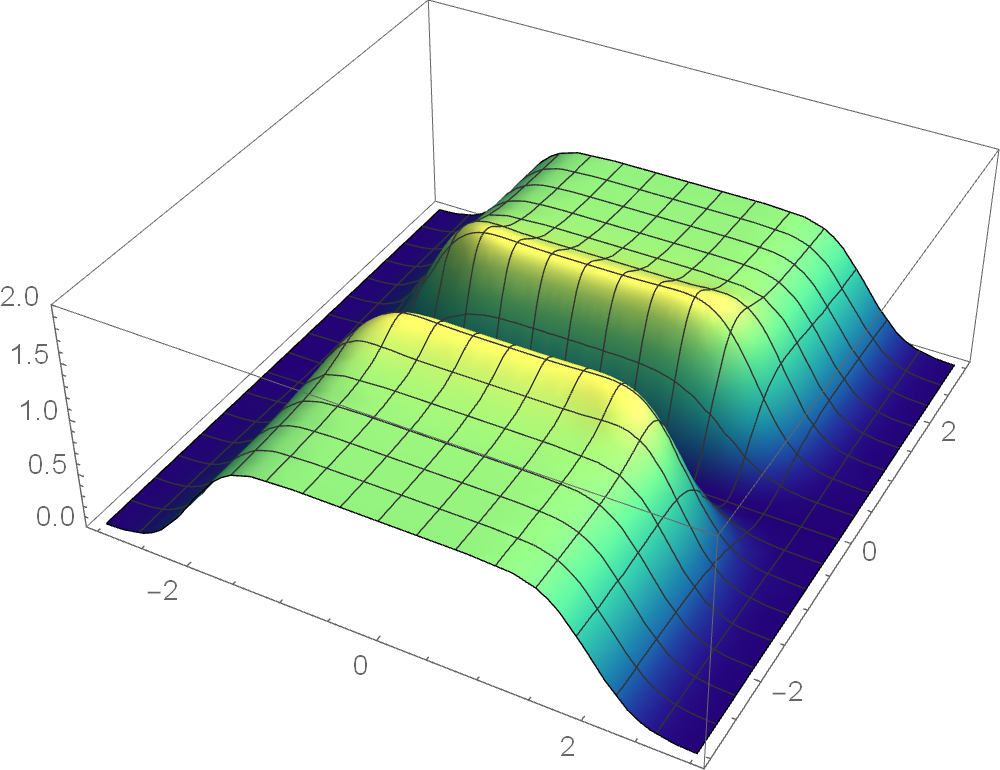}
		\end{center} \subcaption{$a=2$}
	\end{subfigure}
	\caption{ The functions $(x,y) \mapsto R^{\mathbb{C}}_{1} (x+iy)$ (first row) and $(x,y) \mapsto R^{\mathbb{R}}_{1} (x+iy)$ (second row) for the indicated values of $a$. 
	} \label{Fig_R Pfaffian real} \label{Fig_R Pfaffian complex}
\end{figure}

\begin{remark}\label{remark: matching with cplx Ginibre}
The limiting kernel $K^\C$ in Theorem~\ref{thm:main} (a) already appeared in the study of several determinantal processes \cite{MR1634312, fyodorov1997almost, MR1431718, akemann2016universality, MR4030288, AB21, byun2021random}. The way $K^\C$ arises in Theorem~\ref{thm:main} (a) is novel in that \eqref{Gibbs} is Pfaffian.

As is known \cite[Remark 4.(c)]{akemann2016universality}, $K^\C$ interpolates between two other known limiting kernels: as $\rho \to 0$, $K^\C$ converges (after a proper rescaling) to the sine kernel $K^{ \mathrm{sin}}(x,y):=\frac{\sin(\pi(x-y))}{\pi(x-y)}$, while as $\rho \to\infty$ it converges to the limiting kernel in the bulk of the complex Ginibre ensemble, namely $K^{\mathrm{exp}}(z,w) := \exp \big[z \overline{w} - \frac{1}{2}(|z|^{2}+|w|^{2}) \big]$.
\end{remark}

\begin{remark}\label{remark: mathcing with symplectic Ginibre} 
The skew pre-kernel $\kappa^{\mathbb{R}}$ is new to the best of our knowledge. Here we look at some particular limits involving $\kappa^{\mathbb{R}}$ and $R_{k}^{\mathbb{R}}$ when $\rho \to 0$ and $\rho \to \infty$ (although we emphasize that Theorem \ref{thm:main} is valid only for $\rho$ fixed).
\begin{enumerate}
\item \textup{\textbf{Matching with the bulk limit of the symplectic Ginibre ensemble when $\boldsymbol{\rho \to \infty}$.}} \newline Recall that the (classical) symplectic Ginibre ensemble is the planar Pfaffian point process \eqref{Gibbs} with $a_{N}=1$ and $b_{N}=0$. The limiting pre-kernel $\kappa^{W}$ in the bulk of the symplectic Ginibre ensemble is known (see \cite[Remark 2.3 (ii)]{akemann2021scaling}) and given by
\begin{equation} \label{kappa symplectic Ginibre}
\kappa^{W}(z,w):=\sqrt{\pi} e^{z^2+w^2} \int_{-\infty}^\infty W(f_w,f_z)(u)\,du, 
\end{equation}
where $f_z$ is given in \eqref{fz a} and we recall that $W(f,g):=fg'-gf'$ is the Wronskian. Since $\rho$ is proportional to the width of the droplet (see \eqref{radius}), it is natural to expect $\kappa^{\mathbb{R}}$ to tend to $\kappa^{W}$ as $\rho \to + \infty$. A direct analysis shows that this is the case. Indeed, for any fixed $z,w \in \mathbb{C}$, using that $\lim_{a\to + \infty} f_z(-a)=0$ and that $W(f_{w},f_{z})(u)$ has fast decay as $u \to \pm \infty$, we obtain
\begin{align*}
\lim_{\rho \to \infty }\kappa^{\mathbb{R}}(z,w) = \kappa^{W}(z,w).
\end{align*}
\item \textup{\textbf{Matching with the bulk limit of a chiral Gaussian unitary ensemble when $\boldsymbol{\rho \to 0}$.}} \newline
The chiral Gaussian unitary ensemble with parameter $\nu=\frac{1}{2}$ \cite{JacEdward} is the determinantal point process for $N$ points $\boldsymbol{\xi}=\{\xi_{j}\}_{j=1}^{N}$ on $\mathbb{R}_+$ 
whose joint probability distribution $\P_N^{\chi\mathrm{GUE}}$ is defined by 
\begin{equation} \label{Gibbs chiral}
	d\P_N^{\chi\mathrm{GUE}}(\boldsymbol{\xi}) = \frac{1}{N!Z_N^{\chi\mathrm{GUE}}} \prod_{1 \leq j<k\leq N} \abs{\xi_j^2-\xi_k^2}^2 \prod_{j=1}^{N} \xi_j^2\,e^{-N \xi_j^{2} } \,  d\xi_j,
\end{equation}
and whose limiting correlation kernel at the origin is given by 
\begin{equation} \label{K chiral}
K^{\mathrm{\chi sin}}(y_1,y_2):=\frac{ \sin(4(y_1-y_2)) }{ 2(y_1-y_2) }-\frac{\sin(4(y_1+y_2))}{2(y_1+y_2)}, \qquad y_{1},y_{2}\in \mathbb{R}_+,
\end{equation}
see \cite[Section 7.2]{forrester2010log}. We mention that \eqref{Gibbs chiral} is also the ensemble of anti-symmetric Hermitian matrices when their size is odd \cite[Section 13.1]{Mehta}. The kernel \eqref{K chiral} is also equivalent to the well-known Bessel kernel in squared variables with parameter $\nu=\frac12$ (see e.g. \cite[Eq.(7.54)]{forrester2010log}\footnote{There is a typo in \cite[Eq.(7.54)]{forrester2010log}: the $\pm$ sign on the right-hand side should instead be $\mp$.}). When taking $\rho \to 0$, one heuristically expects the correlation functions $R_k^{\mathbb{R}}$ in \eqref{def of RkR} to ``become determinantal" and to be related to $K^{\mathrm{\chi sin}}$. Indeed, by denoting $\zeta_j=1 + i\xi_j$ ($\xi_j \in \R$) and replacing $dA(\zeta_{j})$ by $d\xi_{j}$ in \eqref{Gibbs}, we formally obtain 
\begin{align}\label{lol3}
d\P_N(\boldsymbol{\zeta}) \sim \prod_{1 \leq j<k \leq N} \abs{\xi_j^2-\xi_k^2}^2 \prod_{j=1}^{N} \xi_j^2\,e^{-N Q_{N}(1+i\xi_{j}) } \,  d\xi_j.
\end{align}
The main difference between \eqref{Gibbs chiral} and \eqref{lol3} lies in the potential, but by universality heuristics one expects this difference to be irrelevant for the computation of the limiting kernel. We make this heuristic more precise in Proposition \ref{Cor_Chiral lim} below.
\end{enumerate}
\end{remark}

\begin{prop} \label{Cor_Chiral lim} \textup{} 
Recall that $a= \frac{\rho}{2\sqrt{2}}$ and that $R_{k}^{\mathbb{R}}$ is defined in \eqref{def of RkR}. Let $k \in \mathbb{N}_{>0}$ and $z_{1},\ldots,z_{k}\in \mathbb{C}$ be fixed. As $\rho \to 0$, we have 
\begin{equation} \label{convergence in distribution}
\frac{1}{a^{2k}} R_k^{\mathbb{R}}\Big(\frac{z_1}{a},\dots, \frac{z_k}{a}\Big) \quad \overset{d}{\longrightarrow} \quad \det \Big[ K^{\mathrm{\chi sin}}(y_j,y_l) \Big]_{j,l=1}^k, \qquad (y_k=\im z_k),
\end{equation}
where $K^{\mathrm{\chi sin}}$ is given by \eqref{K chiral}, and where ``$\overset{d}{\longrightarrow}$" in \eqref{convergence in distribution} means that for any bounded and continuous function $f:\C^k \to \R$ with compact support, we have
\begin{align*}
\small \lim_{\rho\to0}\int_{\C^k} f(z_1,\dots,z_k)\frac{1}{a^{2k}} R_k^{\mathbb{R}}\Big(\frac{z_1}{a},\dots, \frac{z_k}{a}\Big) \prod_{j=1}^k \,dA(z_j) = \int_{\R^k} f(y_1,\dots,y_k) \det \Big[ K^{\mathrm{\chi sin}}(y_j,y_l) \Big]_{j,l=1}^k \prod_{j=1}^k \,dy_j.
\end{align*}
\end{prop}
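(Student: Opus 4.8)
The plan is to establish \eqref{convergence in distribution} by a direct asymptotic analysis of $R_k^{\mathbb{R}}$ at arguments of size $1/a$, where $a=\rho/(2\sqrt 2)\to 0$, showing that the $2k\times 2k$ Pfaffian in \eqref{def of RkR} degenerates into a $k\times k$ determinant built out of $K^{\mathrm{\chi sin}}$. First I would analyse the building block $\kappa^{\mathbb{R}}(z/a,w/a)$, for $z,w$ in a compact set, together with the factors $e^{-|z/a|^2-|w/a|^2}$ that dress it in \eqref{def of RkR}. Substituting $u=av$ in the Wronskian integral of \eqref{kappa Wronskian} turns it into $a\int_{-1}^1 W(f_{w/a},f_{z/a})(av)\,dv$, and the large-argument asymptotics $\erfc(\zeta)\sim e^{-\zeta^2}/(\sqrt\pi\,\zeta)$ give $f_{z/a}(av)\approx\tfrac{a}{2\sqrt{2\pi}\,z}e^{-2z^2/a^2}e^{4zv}$ and, by \eqref{fz a}, $f_{z/a}'(av)\approx\sqrt{2/\pi}\,e^{-2z^2/a^2}e^{4zv}$. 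The remaining $v$-integrals are elementary ($\int_{-1}^1 e^{cv}\,dv=\tfrac{2\sinh c}{c}$); combining the integral and boundary terms of \eqref{kappa Wronskian} then yields, uniformly on compacts,
\begin{equation*}
e^{-|z/a|^2-|w/a|^2}\kappa^{\mathbb{R}}\!\Big(\tfrac za,\tfrac wa\Big)=\frac{a^2}{4\sqrt\pi}\;\frac{e^{-2\re(z)\,z/a^2}\,e^{-2\re(w)\,w/a^2}}{zw}\,h(z,w)\,(1+o(1)),\qquad h(z,w):=\frac{(z-w)\sinh4(z+w)}{z+w}-\sinh4(z-w),
\end{equation*}
an entire, antisymmetric function of $z,w$; since $|e^{-2\re(z)z/a^2}|=e^{-2(\re z)^2/a^2}$, this is the source of the concentration.

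Next I would feed this into \eqref{def of RkR}. Pulling the diagonal factors $e^{-2\re(z_j)z_j/a^2}/z_j$ and $e^{-2\re(z_j)\bar z_j/a^2}/\bar z_j$ out of the Pfaffian produces $\prod_j e^{-4(\re z_j)^2/a^2}/|z_j|^2$, while the common $a^2/(4\sqrt\pi)$ from each entry factors out as $a^{2k}/(4\sqrt\pi)^k$; what is left is $\Pf$ of the matrix with $(j,l)$‑block $\bigl(\begin{smallmatrix}h(z_j,z_l)&h(z_j,\bar z_l)\\ h(\bar z_j,z_l)&h(\bar z_j,\bar z_l)\end{smallmatrix}\bigr)$. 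Integrating $\tfrac1{a^{2k}}R_k^{\mathbb{R}}(z/a)$ against a compactly supported $f$ and letting $a\to 0$, the approximate identities $\tfrac1a e^{-4(\re z_j)^2/a^2}\,d(\re z_j)$ localise each $z_j$ to $iy_j$ with $y=(y_1,\dots,y_k)$ ranging over a compact subset of $\mathbb{R}^k$. Using $\sinh(it)=i\sin t$ one checks $h(iy_j,iy_l)=2i(y_l-y_j)K^{\mathrm{\chi sin}}(y_j,y_l)$ and $h(iy_j,-iy_l)=2i(y_j+y_l)K^{\mathrm{\chi sin}}(y_j,y_l)$ (the other two entries follow by antisymmetry), so the $(j,l)$‑block of the limiting Pfaffian matrix is $2iK^{\mathrm{\chi sin}}(y_j,y_l)\bigl(\begin{smallmatrix}y_l-y_j&y_j+y_l\\ -y_j-y_l&y_j-y_l\end{smallmatrix}\bigr)$. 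Re‑ordering the $2k$ indices so the two block‑components are grouped, this matrix becomes $\bigl(\begin{smallmatrix}M^{11}&M^{12}\\ -M^{12}&-M^{11}\end{smallmatrix}\bigr)$ with $M^{11}$ antisymmetric and $M^{12}$ symmetric; the congruence by $\bigl(\begin{smallmatrix}I&0\\ I&I\end{smallmatrix}\bigr)$ sends the bottom‑right block to $0$, so the Pfaffian equals $\pm\det(M^{11}+M^{12})$, and $M^{11}+M^{12}=[\,4i\,y_l\,K^{\mathrm{\chi sin}}(y_j,y_l)\,]_{j,l}$ has determinant $(4i)^k\prod_j y_j\cdot\det[K^{\mathrm{\chi sin}}(y_j,y_l)]$. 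Collecting this with the factors $\bar z_j-z_j=-2iy_j$, $|z_j|^2=y_j^2$, the powers of $a$, and the Gaussian constants accumulated along the way, the $\prod_j y_j$ cancels and one recovers $\det[K^{\mathrm{\chi sin}}(y_j,y_l)]$, which is the content of \eqref{convergence in distribution}.

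To upgrade this to an honest convergence of measures I would work on the support of $f$ (where $\re z_j$ and $\im z_j$ are bounded), showing that the first‑step estimates come with uniform error terms $1+O(a)$ near $\re z_j=0$ and with a uniform Gaussian bound $\bigl|\tfrac1{a^{2k}}R_k^{\mathbb{R}}(z/a)\bigr|\lesssim\prod_j \tfrac1a e^{-c(\re z_j)^2/a^2}$ elsewhere, so dominated convergence applies. The main obstacle is precisely that first step: the arguments $z/a$ are large and, in the region that actually contributes, point essentially along $\pm i\mathbb{R}$ — the borderline direction for the $\erfc$ asymptotics — so that exponentially large and exponentially small factors both occur and must be shown to assemble into an $O(1)$ limit; moreover the $v$‑integrand varies over the full scale, and the Gaussian bound required for domination has to incorporate these cancellations, since a crude pointwise bound on $R_k^{\mathbb{R}}(z/a)$ diverges off the imaginary axis.
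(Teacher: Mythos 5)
Your proposal is correct and follows essentially the same route as the paper: $\erfc$ asymptotics on $\kappa^{\mathbb R}(z/a,w/a)$ give a $\sinh$-expression times a Gaussian in $\re z$, Gaussian localization pins $z_j$ to $iy_j$, and a block manipulation converts the Pfaffian to $\det[K^{\chi\sin}]$. The one place you are slightly quicker than the paper is the Pfaffian-to-determinant step: the paper isolates this as Lemma~\ref{Lem_Pf det iden} (proved via $\Pf^2=\det$ and row/column operations), whereas your congruence by $\bigl(\begin{smallmatrix}I&0\\ I&I\end{smallmatrix}\bigr)$ zeroes the bottom-right block but leaves $M^{11}$ in the top-left, so the assertion $\Pf=\pm\det(M^{11}+M^{12})$ needs one further remark (e.g.\ that with one diagonal block zero every contributing perfect matching must pair across the two blocks, so the top-left block is irrelevant); this is a small fixable omission, not a gap in the method.
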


Our results are summarized in Figure~\ref{Fig_various scaling}, where to lighten the notation we use $y_{j}:=\im z_{j}$ and
$$
K^{W}_{2\times 2}(z,w):=e^{-|z|^2-|w|^2}  \begin{pmatrix} 
	\kappa^{W}(z,w) & \kappa^{W}(z,\bar{w})
	\smallskip  
	\\
	\kappa^{W}(\bar{z},w) & \kappa^{W}(\bar{z},\bar{w}) 
\end{pmatrix}, 
$$
where $\kappa^{W}$ is given by \eqref{kappa symplectic Ginibre}.
\begin{figure}[h!]
\begin{center}
	\begin{tikzcd}
  &[-5em]  \substack{\displaystyle \gamma_N^{2k} \,  \bfR_{N,k}\Big(p+p\gamma_{N}z_{1},\dots,p+p\gamma_{N}z_{k}\Big), \; p = \pm e^{i\frac{\sqrt{2}\rho}{N}t}  \\ \mathrm{Pfaffian}} \arrow[d, "\mathrm{Thm\,\ref{thm:main} (b)}"]{d} \arrow[d, swap, "N \to \infty"] &[-5em]
		\\	
\det \Big[ K^{\mathrm{ \chi sin}}(y_j,y_l) \Big]_{j,l=1}^k 	&	\arrow[l, swap, "\rho \to 0"] \arrow[l, "\mathrm{Prop}\,\ref{Cor_Chiral lim}"]   \substack{\displaystyle R_k^{\mathbb{R}}(z_1+i t,\dots,z_k + i t) \vspace{0.25em} \\ \mathrm{Pfaffian}} \arrow[r, "\rho \to \infty"] \arrow[r, swap, "\mathrm{Rmk}\,\ref{remark: mathcing with symplectic Ginibre}(1)"]  \arrow[d, swap, "t \to \infty"]  \arrow[d, "\mathrm{Thm\,\ref{thm:main}(c)}"] 
		&  \Pf \Big[ 
		K^{W}_{2\times 2}(z_j,z_l)
		\Big]_{j,l=1}^k\prod_{j=1}^k  \frac{2y_j}{i}  
		\\
\det \Big[ K^{\mathrm{ sin}}(y_j,y_l) \Big]_{j,l=1}^k 	& 	\arrow[l, swap, "\rho \to 0"] 	\arrow[l, "\mathrm{Rmk}\,\ref{remark: matching with cplx Ginibre}"]  \substack{\displaystyle R_k^{\mathbb{C}}(z_1,\dots,z_k)  \vspace{0.25em} \\
	\mathrm{Determinant}} \arrow[r, "\rho \to \infty"] \arrow[r, swap, "\mathrm{Rmk}\,\ref{remark: matching with cplx Ginibre}"]  &  \det \Big[ K^{\mathrm{exp}}(z_j,z_l) \Big]_{j,l=1}^k 
		\\
& \ar[u, swap, "\mathrm{Thm\,\ref{thm:main} (a)}"]{u} \ar[u, "N\to \infty"]{u} \substack{\displaystyle  \gamma_N^{2k} \,  \bfR_{N,k}\Big(p+p\gamma_{N}z_{1},\dots,p+p\gamma_{N}z_{k}\Big), \; p=e^{i\theta}, \; p \neq -1,1 \\ \mathrm{Pfaffian}} & 
	\end{tikzcd}  
\end{center} \caption{The second column and the second row summarize our main findings. The third row was already known \cite{MR1634312, akemann2016universality} and is included in this diagram to place our results in their overall context.} \label{Fig_various scaling}
\end{figure}

\subsection{Gap probabilities}

Deriving asymptotic formulas of gap probabilities is a classical problem in random matrix theory with a rich history \cite{F2014, GN2018}. The problem is usually considered challenging when the hole region is large. For a point process with $N$ points, a ``large" hole region refers to a region that contains, with high probability, a number of points proportional to $N$. For the disk hole region in the complex Ginibre point process, this problem was investigated by several authors \cite{GHS1988, MR1181356, MR1239571}. In recent years, these results have been extended to other potentials, hole regions and models in \cite{MR2536111, MR3063493, MR3279619, MR3719476, MR3814242, L2019}, and have been improved to higher precision in \cite{charlier2021large}. For more results on large gap asymptotics of two-dimensional point processes, we refer to \cite{GN2018, charlier2021large}. To our knowledge, the only paper prior this work on large gap probabilities of planar Pfaffian point processes is \cite{MR2536111}. 

\medskip In this subsection, we obtain large $N$ asymptotics, up to and including the term of order $1$, for the following three gap probabilities:
\begin{align}
& \Prob_N^1 :=\mathbb{P}\Big(\# \Big\{ \zeta_j: |\zeta_j| \in [0,r_1] \Big\} =0  \Big)=\mathbb{P}\Big(\# \Big\{ \zeta_j: |\zeta_j| \in [r_1,\infty] \Big\} =N  \Big),  \label{PN 1}
\\
& \mathbb{P}_N^2  :=\mathbb{P}\Big(\# \Big\{ \zeta_j: |\zeta_j| \in [r_2,\infty] \Big\} =0  \Big)=\mathbb{P}\Big(\# \Big\{ \zeta_j: |\zeta_j| \in [0,r_2] \Big\} =N  \Big), \label{PN 2}
\\
& \mathbb{P}_N^{12}  :=\mathbb{P}\Big(\# \Big\{ \zeta_j: |\zeta_j| \in [0,r_1] \cup [r_2,\infty] \Big\} =0  \Big)=\mathbb{P}\Big(\# \Big\{ \zeta_j: |\zeta_j| \in [r_1,r_2] \Big\} =N  \Big), \label{PN 12}
\end{align}
where the $\zeta_{j}$ are distributed according to \eqref{Gibbs} with $Q_N$ as in \eqref{Q aN bN}. Exact expressions for these probabilities in terms of the incomplete gamma function are given in \eqref{PN1 gamma}, \eqref{PN2 gamma} and \eqref{PN12 gamma} respectively. 

\medskip As argued in \ref{item gap}, for typical configurations the hole regions associated with $\Prob_N^1$, $\Prob_N^2$ and $\Prob_N^{12}$ contain about $\sim \sqrt{N}$ points. Hence, the problems of determining the large $N$ asymptotics of $\Prob_N^1$, $\Prob_N^2$, $\Prob_N^{12}$ can be seen as ``semi-large" gap problems (and are simpler than ``large" gap problems such as the ones considered in \cite{MR2536111}). 

\newpage 
\begin{thm} \label{Thm_partition functions}\textup{\textbf{Asymptotics of semi-large gap probabilities}} 

\noindent Let $\rho \in (0,\infty)$ be fixed. As $N \to \infty$, we have
\begin{align}
& \Prob_N^1 =  \exp\Big( N C_1 - C_0 + O(N^{-1}) \Big), 
\qquad 
\Prob_N^2= \exp\Big( N C_1 + C_0 + O(N^{-1}) \Big),      \label{PN 1 and 2 asym thm}
\\
& \Prob_N^{12} =  \exp\Big( N \wt{C}_1 + O(N^{-1}) \Big), \label{PN 12 asym thm}
\end{align}
where 
\begin{align}
& C_1=\int_{0}^{1} \log \Big( 1-\frac{1}{2}\erfc(\sqrt{2}\, \rho x) \Big)\,dx,  \label{C1(rho)}
\\
& \wt{C}_1=\int_{0}^{1} \log \Big( \frac{1}{2}\mathrm{erfc}( \sqrt{2}\, \rho (x-1))-\frac{1}{2}\mathrm{erfc}(\sqrt{2}\, \rho x) \Big)\,dx,   \label{C1(rho) wt} \\
& C_0= \frac{\log \big( 2-\mathrm{erfc}(\sqrt{2}\rho) \big)}{2} - \frac{\rho }{3\sqrt{2\pi}} \int_{0}^{1}\frac{ e^{-2\rho^{2}x^{2}}\,(5+3\rho^{2} x - 2\rho^{2}x^{2})}{ 1-\frac{1}{2}\mathrm{erfc}(\sqrt{2}\, \rho x)}\,dx. \label{C0(rho)}
\end{align}
\end{thm}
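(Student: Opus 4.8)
\emph{Overall strategy.} The plan is to reduce each of the three probabilities to a ratio of partition functions, use the radial symmetry of \eqref{Gibbs} to get an exact finite-product formula, and then perform a large $N$ analysis of the logarithm of that product, in which the leading term is an integral and the $O(1)$ term comes from Euler--Maclaurin boundary contributions together with the subleading term of Temme's uniform asymptotics of the incomplete gamma function.

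\emph{Step 1: exact formulas.} With $Q_N$ as in \eqref{Q aN bN} the unnormalised density in \eqref{Gibbs} is invariant under simultaneous rotations of the $\zeta_j$, and the three hole regions are rotation invariant, so each of $\Prob_N^1,\Prob_N^2,\Prob_N^{12}$ equals the partition function of \eqref{Gibbs} restricted to the corresponding annulus (resp. disk, complement of a disk) divided by the full partition function. By de Bruijn's formula, exactly as in the derivation of \eqref{bfR Pfa}--\eqref{GN} in Lemma~\ref{Lem_bfkappa exp}, each such restricted partition function is a universal constant times the Pfaffian of the skew-Gram matrix $[\langle \zeta^{i-1},\zeta^{j-1}\rangle_{\mathcal R}]_{i,j=1}^{2N}$, where $\langle p,q\rangle_{\mathcal R}:=\int_{\{|\zeta|\in\mathcal R\}}\big(p(\zeta)q(\bar\zeta)-p(\bar\zeta)q(\zeta)\big)(\bar\zeta-\zeta)e^{-NQ_N(\zeta)}\,dA(\zeta)$ and $\mathcal R$ is the radial range of the region. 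The angular integration forces $\langle\zeta^a,\zeta^b\rangle_{\mathcal R}=0$ unless $|a-b|=1$, so this matrix is tridiagonal and its Pfaffian is $\pm\prod_{k=0}^{N-1}\langle\zeta^{2k},\zeta^{2k+1}\rangle_{\mathcal R}$; after the substitution $s=|\zeta|^2$ one has $\langle\zeta^{2k},\zeta^{2k+1}\rangle_{\mathcal R}=c\int_{\mathcal I_{\mathcal R}} s^{\alpha_k-1}e^{-a_NNs}\,ds$ with $\alpha_k:=b_NN+2k+2$ and $c$ universal. Dividing by the full-plane value and using $a_NNr_1^2=b_NN=:x_1$ and $a_NNr_2^2=b_NN+2N=:x_2$ (immediate from \eqref{Q aN bN}--\eqref{droplet annulus}), one obtains the formulas \eqref{PN1 gamma}--\eqref{PN12 gamma}:
\begin{equation}
\Prob_N^1=\prod_{k=0}^{N-1}\frac{\Gamma(\alpha_k,x_1)}{\Gamma(\alpha_k)},\qquad \Prob_N^2=\prod_{k=0}^{N-1}\frac{\gamma(\alpha_k,x_2)}{\Gamma(\alpha_k)},\qquad \Prob_N^{12}=\prod_{k=0}^{N-1}\frac{\gamma(\alpha_k,x_2)-\gamma(\alpha_k,x_1)}{\Gamma(\alpha_k)}.
\end{equation}

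\emph{Step 2: uniform asymptotics of the factors.} Taking logarithms, it remains to analyse $\sum_{k=0}^{N-1}$ of $\log$ of a regularised incomplete gamma $Q(\alpha,x):=\Gamma(\alpha,x)/\Gamma(\alpha)$. Here $\alpha_k\asymp N^2$ while $\alpha_k-x_1=2k+2$ and $\alpha_k-x_2=2(k+1-N)$ are of order $N$, so we are in the transition regime $x/\alpha\to1$ governed by Temme's uniform asymptotic expansion, $Q(\alpha,x)\sim\tfrac12\erfc\big(\eta\sqrt{\alpha/2}\big)+\tfrac{e^{-\alpha\eta^2/2}}{\sqrt{2\pi\alpha}}\,c_0(\eta)+\ldots$, where $\lambda:=x/\alpha$, $\tfrac12\eta^2=\lambda-1-\log\lambda$, $\operatorname{sgn}\eta=\operatorname{sgn}(\lambda-1)$, and $c_0(0)=-\tfrac13$. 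Expanding $\eta$ and $\sqrt{\alpha_k/2}$ in powers of $1/N$ (all the underlying parameters $\alpha_k,x_1,x_2,a_NN$ being known exactly) gives, uniformly for $0\le k\le N-1$,
\begin{equation}
\log\frac{\Gamma(\alpha_k,x_1)}{\Gamma(\alpha_k)}=\log\Big(1-\tfrac12\erfc\big(\sqrt2\,\rho\,\tfrac kN\big)\Big)+\tfrac1N\,h\big(\tfrac kN\big)+O(N^{-2}),
\end{equation}
with $h(x)=\tfrac{\rho}{3\sqrt{2\pi}}\,\dfrac{e^{-2\rho^2x^2}\,(5+3\rho^2x-2\rho^2x^2)}{1-\frac12\erfc(\sqrt2\,\rho x)}$: the Gaussian $e^{-2\rho^2x^2}$ is $e^{-\alpha_k\eta^2/2}$ at leading order and the factor $\tfrac13$ is $-c_0(0)$, while the polynomial collects all the remaining $O(1/N)$ corrections to the argument of $\erfc$. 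The analysis for $\Prob_N^2$ is identical with $x_2$ in place of $x_1$ (equivalently $\tfrac kN\mapsto1-\tfrac kN$ at leading order, using $\erfc(-u)=2-\erfc(u)$), and for $\Prob_N^{12}$ one expands $\log\big(Q(\alpha_k,x_1)-Q(\alpha_k,x_2)\big)$, whose leading term is $\log\phi(\tfrac kN)$ with $\phi(x):=\tfrac12\erfc(\sqrt2\rho(x-1))-\tfrac12\erfc(\sqrt2\rho x)$.

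\emph{Step 3: Euler--Maclaurin and conclusion.} Summing the $O(1)$-per-term leading pieces and applying Euler--Maclaurin gives $\sum_{k=0}^{N-1}\log(1-\tfrac12\erfc(\sqrt2\rho\tfrac kN))=NC_1+\tfrac12\big(\log(1-\tfrac12\erfc(0))-\log(1-\tfrac12\erfc(\sqrt2\rho))\big)+O(N^{-1})=NC_1-\tfrac12\log(2-\erfc(\sqrt2\rho))+O(N^{-1})$, while $\tfrac1N\sum_k h(\tfrac kN)\to\int_0^1 h$; adding these reproduces $NC_1-C_0$ with $C_1,C_0$ as in \eqref{C1(rho)},\eqref{C0(rho)}, whence $\Prob_N^1=\exp(NC_1-C_0+O(N^{-1}))$. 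The same computation for $\Prob_N^2$, with the integrand reflected by $x\mapsto1-x$, gives the identical leading constant $C_1$ but the \emph{opposite} $O(1)$ term $+C_0$. For $\Prob_N^{12}$ the key point is the symmetry $\phi(1-x)=\phi(x)$, which identifies $N\wt C_1$ as the leading term and makes the Euler--Maclaurin boundary term vanish; moreover the $\tfrac1N$-correction in Step 2 is \emph{anti}symmetric under $x\mapsto1-x$ (it is the difference of the inner- and outer-edge corrections, which enter with opposite sign, as witnessed by the $\mp C_0$ above), so its integral over $[0,1]$ vanishes, leaving $\Prob_N^{12}=\exp(N\wt C_1+O(N^{-1}))$.

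\emph{Main obstacle.} The delicate part is Step 2 combined with the bookkeeping in Step 3: Temme's uniform asymptotics must be carried to the right order and combined with all the $O(1/N)$ contributions so that the error in $\log(k\text{-th factor})$ is $O(N^{-2})$ \emph{uniformly} in $k$, including near $k=0$ and $k=N-1$ where $\lambda=x/\alpha$ crosses $1$; an error of size only $o(N^{-2})$ in a single factor would be amplified to $o(N^{-1})$ after summing $N$ of them, which is exactly the claimed order of the remainder. Establishing the antisymmetry of the $\tfrac1N$-correction for $\Prob_N^{12}$, which is what makes its $O(1)$ term disappear, is the other point requiring care.
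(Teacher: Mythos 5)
Your proposal is correct and follows essentially the same route as the paper. Step~1 (reducing each probability to a ratio of partition functions, exploiting radial symmetry to obtain a finite product of ratios of incomplete gamma functions) is the content of the paper's Lemmas~4.1 and~4.2, which yield precisely your formulas $\Prob_N^1=\prod_k Q(\alpha_k,x_1)$, $\Prob_N^2=\prod_k(1-Q(\alpha_k,x_2))$, $\Prob_N^{12}=\prod_k(Q(\alpha_k,x_1)-Q(\alpha_k,x_2))$ with $\alpha_k=b_NN+2k+2$, $x_1=b_NN$, $x_2=b_NN+2N$. (The paper invokes the skew-orthogonal skew-norm product of \cite[Remark~2.5, Cor.~3.3]{akemann2021skew} rather than the de~Bruijn/tridiagonal-Gram-matrix route you sketch, but these are the same calculation.) Step~2 is exactly the paper's use of Temme's uniform expansion (its Lemma~A.2), including the $c_0(0)=-\tfrac13$ coefficient, and Step~3 is exactly the paper's application of the Riemann-sum lemma \cite[Lemma~3.4]{charlier2021large}, which is an Euler--Maclaurin statement; the symmetry $\phi(1-x)=\phi(x)$ and the antisymmetry of the $1/N$ correction under $x\mapsto 1-x$, which you use to kill the $O(1)$ term for $\Prob_N^{12}$, are precisely the two identities the paper invokes at the very end of its proof. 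Your identification of the main obstacle (carrying the uniform asymptotics to order $O(N^{-2})$ per summand, uniformly near the endpoints where $\lambda\to 1$) is also the correct thing to watch.
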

Theorem~\ref{Thm_partition functions} has been verified numerically, see Figure~\ref{Fig_semi large gap}. 

\begin{figure}[h!]
	
		\begin{subfigure}{0.32\textwidth}
		\begin{center}	
			\includegraphics[width=\textwidth]{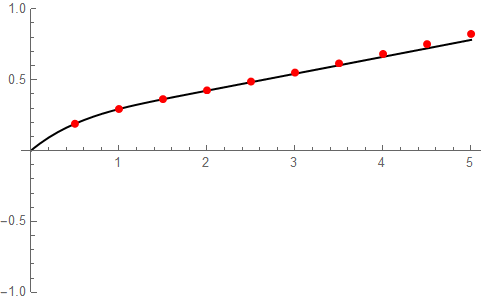}
		\end{center}
		\subcaption{\small $-C_{0}$ and $\log \Prob_N^1-NC_1$}
	\end{subfigure}	
	\begin{subfigure}[h]{0.32\textwidth}
		\begin{center}
			\includegraphics[width=\textwidth]{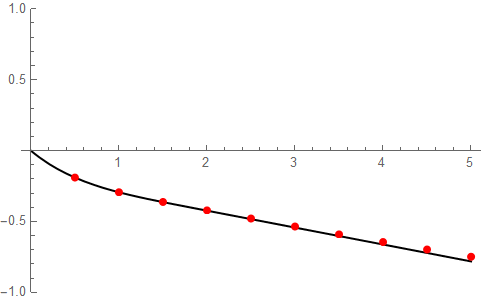}
		\end{center} \subcaption{\small $C_{0}$ and $\log \Prob_N^{2}-NC_1$}
	\end{subfigure}
		\begin{subfigure}[h]{0.32\textwidth}
		\begin{center}
			\includegraphics[width=\textwidth]{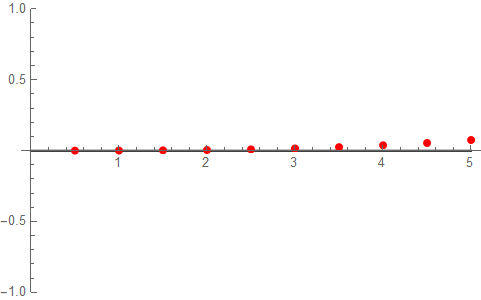}
		\end{center} \subcaption{\small $0$ and $\log \Prob_N^{12}-N\wt{C}_1$}
	\end{subfigure}
	\caption{ The functions $\rho \mapsto -C_0, C_0, 0$ (black lines) versus $\log \Prob_N^1-NC_1$, $\log \Prob_N^2-NC_1$ and $\log \Prob_N^{12}-N\wt{C}_1$ with $N=100$ and various values of $\rho$ (red dots). 
	} \label{Fig_semi large gap} 
\end{figure}

\begin{remark} \textup{ \textbf{Inequalities among $\Prob_N^1$, $\Prob_N^2$, and $\Prob_N^{12}$ for large $N$} }

\noindent It readily follows from the definitions \eqref{PN 1}, \eqref{PN 2} and \eqref{PN 12} that $\Prob_N^1, \Prob_N^2 > \Prob_N^{12}$. Theorem \ref{Thm_partition functions} allows to obtain more quantitative comparisons between these three probabilities. Indeed, it follows from \eqref{C1(rho)}, \eqref{C1(rho) wt} and  $\erfc(x)<2$ ($x \in \R$) that $C_1 > \wt{C}_1$ for all $\rho \in (0,\infty)$.
Moreover, by \eqref{C0(rho)}, $C_0<0$ for all $\rho \in (0,\infty)$ (see also Figure \ref{Fig_semi large gap} (B)).
Thus we have 
\begin{equation} \label{PN 1 2 12 ineq}
\log \Prob_N^1 \underbrace{>}_{O(1)} \log \Prob_N^2 \underbrace{\gg}_{O(N)} \log \Prob_N^{12}, \qquad \mbox{as } N \to \infty.  
\end{equation}
In \eqref{PN 1 2 12 ineq}, the notation below $>$ and $\gg$ indicates the order of the difference between the left- and right-hand sides. 
\end{remark}

%
%

\subsection{Outline} 

Different strategies will be used to prove Theorem \ref{thm:main} (a) and (b). However, for both, the starting point is the same: using skew-orthogonal polynomial techniques, we first express $\gamma_N^{2k} \,  \bfR_{N,k}\big(p+p\gamma_{N}z_{1},\dots,p+p\gamma_{N}z_{k}\big)$ and the associated normalized skew pre-kernel $\wh{\bfkappa}_N$ in terms of the incomplete Gamma function (see \eqref{bfkappa wh},\eqref{incom Gamma} and Lemma \ref{Lem_bfkappa exp}). We then prove a so-called generalised Christoffel-Darboux formula for $\wt{\bfkappa}_N(\zeta,\eta):=e^{-a_NN \zeta \eta}\wh{\bfkappa}_N(\zeta,\eta)$, which in our case is a differential identity expressing $\pa_\zeta \wt{\bfkappa}_N$ in terms of $\wt{\bfkappa}_N$ and the incomplete Gamma function (see Proposition~\ref{Prop_CDI}).

In Section~\ref{Section_Scaling limits} we prove Theorem \ref{thm:main}. Parts (a), (b) and (c) of this theorem are proven in Subsections~\ref{Subsec_cplex limit}, \ref{Subsec_real limit} and \ref{Subsec_scaling interpolation}, respectively.  In short, part (a) is proved via a first order Riemann sum approximation, part (b) is proved using our generalised Christoffel-Darboux formula, while part (c) follows from a direct analysis of $R_k^{\mathbb{R}}$. Proposition~\ref{Cor_Chiral lim} is proved in Subsection~\ref{Subsec_chiral limit}.

Theorem~\ref{Thm_partition functions} is proved in Section~\ref{Section_Parition functions}. Using again skew-orthogonal polynomials, we express of $\mathbb{P}_{N}^{1},\mathbb{P}_{N}^{2},\mathbb{P}_{N}^{12}$ in terms of the incomplete gamma function $\gamma$. We then use some known asymptotics expansion of $\gamma$ (which we recall in Appendix~\ref{appendix_gamma}) together with some precise Riemann sum approximations to complete the proof of Theorem~\ref{Thm_partition functions}.

\subsection{Notation.} 
Throughout this paper, $p=e^{i\theta}$ denotes the base/zooming point on the unit circle around which the local statistics of \eqref{Gibbs} are considered. Given $\boldsymbol{z}=\{ z_j \}_{j=1}^N$, we define $\bfs{\zeta}= \{ \zeta_j \}_{j=1}^N \subset \mathbb{C}$ by
\begin{equation} \label{rescaling}
\zeta_j := p+ p \gamma_N z_j, \qquad \gamma_N=\sqrt{ \frac{2}{a_N N} }, \qquad j=1,\ldots,N,
\end{equation}
and to shorten the notation we also define
\begin{equation} \label{RNk bfRNk}
	R_{N,k}(z_1,\dots, z_k):= \gamma_N^{2k} \,  \bfR_{N,k}\big(p+p\gamma_{N}z_{1},\dots,p+p\gamma_{N}z_{k}\big) = \gamma_N^{2k} \,  \bfR_{N,k}(\zeta_1,\dots,\zeta_k),
\end{equation}
where we recall that $\bfR_{N,k}$ is given by \eqref{bfRNk def}.

\section{Skew-orthogonal polynomials and Christoffel-Darboux formula}\label{Section_Preliminaries}

In this section we first use the skew-orthogonal formalism for planar symplectic ensembles introduced by Kanzieper \cite{MR1928853} to express $\bfkappa_N$ as in \eqref{bfkappaN GN}--\eqref{GN} in terms of $\Gamma$. We then obtain a generalised Christoffel-Darboux formula for a normalized skew pre-kernel (Proposition~\ref{Prop_CDI}).

\medskip \noindent \textbf{Skew-orthogonal polynomials.} Consider the following skew-symmetric form $\langle \cdot , \cdot  \rangle_s$
\begin{equation*}
\langle f, g \rangle_s := \int_{\C} \Big( f(\zeta) g(\bar{\zeta}) - g(\zeta) f(\bar{\zeta}) \Big) (\zeta - \bar{\zeta}) e^{-N Q_{N}(\zeta)} \,dA(\zeta).
\end{equation*}
A family $\{q_{m}\}_{m \geq 0}$ of monic polynomials $q_m$ of degree $m$ is said to be a family of skew-orthogonal polynomials if the following skew-orthogonality conditions hold: for all $k, l \in \mathbb{N}$
\begin{equation}\label{lol1}
\langle q_{2k}, q_{2l} \rangle_s = \langle q_{2k+1}, q_{2l+1} \rangle_s = 0, \qquad \langle q_{2k}, q_{2l+1} \rangle_s = -\langle q_{2l+1}, q_{2k} \rangle_s = r_k  \,\delta_{k, l},
\end{equation}
where $r_k$ is a positive constant called the $k$-th skew-norm and $\delta_{k, l}$ is the Kronecker delta. The existence of $\{q_{m}\}_{m \geq 0}$ follows from a Gram-Schmidt skew-orthogonalisation procedure \cite[Theorem 2.4]{akemann2021skew}, and a sufficient condition to ensure uniqueness is to set the coefficient $z^{2k}$ in $q_{2k+1}(z)$ to $0$ for each $k \in \mathbb{N}$ \cite[Lemma 2.2]{akemann2021skew}. It follows from the general theory \cite{MR1928853} that \eqref{bfR Pfa} holds with
\begin{equation}\label{bfkappaN skewOP}
\bfkappa_N(\zeta,\eta)=\sum_{k=0}^{N-1} \frac{q_{2k+1}(\zeta) q_{2k}(\eta) -q_{2k}(\zeta) q_{2k+1}(\eta)}{r_k}.
\end{equation}
(In fact the above theory from \cite{MR1928853} applies in a much broader setting, for example for smooth potentials $Q:\mathbb{C}\to \mathbb{R}$ of sufficient increase near $\infty$ (not necessarily rotation-invariant), but this will not be needed for us.)

\begin{remark}\label{remark:non-uniqueness of the kernel}
Different pre-kernels can yield the same correlation functions $\bfR_{N,k}$.
For instance, if $\{g_N:\C \to \C\}_{N=0}^{+\infty}$ is a sequence satisfying $g_N(\overline{\zeta}) = 1/g_N(\zeta)$ for each $\zeta \in \mathbb{C}$ and $N \in \mathbb{N}$, then replacing $\bfkappa_N(\zeta, \eta)$ in \eqref{bfR Pfa} by $g_N(\zeta) g_N(\eta) \bfkappa_N(\zeta, \eta)$ does not modify $\bfR_{N,k}$; in other words, the two pre-kernels $\bfkappa_N(\zeta, \eta)$ and $g_N(\zeta) g_N(\eta) \bfkappa_N(\zeta, \eta)$ give rise to same point process.
\end{remark} 

Since $Q_{N}$ is rotation-invariant, by e.g.~\cite[p.7]{MR3066113} and \cite[Corollary 3.3]{akemann2021skew} the following holds: the family $\{q_{k}\}_{k=0}^{+\infty}$ defined by
\begin{equation} \label{skew op_rad}
	q_{2k+1}(\zeta)=\zeta^{2k+1}, \qquad
	q_{2k}(\zeta)=\zeta^{2k}+\sum_{l=0}^{k-1}  \zeta^{2l} \prod_{j=0}^{k-l-1} \frac{h_{2l+2j+2}  }{ h_{2l+2j+1} }, \qquad k=0,1,\ldots
\end{equation}
where $h_{k}$ is given by
\begin{equation} \label{hk Qradial}
    h_k:=\int_\C |\zeta|^{2k} e^{-N Q_{N}(\zeta)} \, dA(\zeta),
\end{equation}
satisfies \eqref{lol1} with
\begin{equation}\label{rk h2k+1}
r_k=2\,h_{2k+1}.
\end{equation}

For definiteness, from now we let $\{q_{k}\}_{k=0}^{+\infty}$ be the family of skew-orthogonal polynomials defined in \eqref{skew op_rad}, and we define $\bfkappa_N$ as in \eqref{bfkappaN skewOP} in terms of those $\{q_{k}\}_{k=0}^{+\infty}$. It follows from \eqref{bfR Pfa}, the rescaling \eqref{RNk bfRNk} and the definition \eqref{Q aN bN} of $Q_{N}$ that 
\begin{equation}
	\begin{split} \label{Rk bfkappa wh}
		R_{N,k}(z_1,\dots, z_k) 
		&= \Pf \Big[ e^{ -\frac{a_NN}{2}(|\zeta_j|^2+|\zeta_l|^2) }   \begin{pmatrix} 
			\wh{\bfkappa}_N(\zeta_j,\zeta_l) & \wh{\bfkappa}_N(\zeta_j,\bar{\zeta}_l)
			\smallskip 
			\\
			\wh{\bfkappa}_N(\bar{\zeta}_j,\zeta_l) & \wh{\bfkappa}_N(\bar{\zeta}_j,\bar{\zeta}_l) 
		\end{pmatrix}   \Big]_{ j,l=1 }^k \prod_{j=1}^{k} \frac{\bar{\zeta}_j-\zeta_j}{\gamma_N}  ,
	\end{split}
\end{equation}
where 
\begin{equation} \label{bfkappa wh}
	\wh{\bfkappa}_N(\zeta,\eta)=\gamma_N^3\, \zeta^{b_NN} \eta^{b_NN} \bfkappa_N(\zeta,\eta), 
\end{equation}
and the principal branches are used for $\zeta^{b_NN}$ and $\eta^{b_NN}$. To obtain \eqref{bfkappa wh}, we have also used Remark \ref{remark:non-uniqueness of the kernel} with $g_{N}(\zeta) = \zeta^{b_NN}/|\zeta|^{b_NN}$ to replace $|\zeta \eta|^{b_NN}$ in the calculations by $\zeta^{b_NN} \eta^{b_NN}$. 

Recall that the incomplete gamma functions $\Gamma(a,\zeta)$, $\gamma(a,\zeta)$ and the regularised gamma function $\mathrm{Q}(a,\zeta)$ are given by 
\begin{equation} \label{incom Gamma}
	\mathrm{Q}(a,\zeta):= \frac{\Gamma(a,\zeta)}{\Gamma(a)}, \quad \Gamma(a,\zeta):=\int_{\zeta}^\infty t^{a-1}e^{-t}\,dt, \quad \gamma(a,\zeta):=\int_{0}^\zeta t^{a-1}e^{-t}\,dt = \Gamma(a)-\Gamma(a,\zeta),
\end{equation}
see also e.g. \cite[Chapter 8]{olver2010nist}.

\begin{lem} \label{Lem_bfkappa exp}
Let $\zeta,\eta \in \mathbb{C}$. We have 
$\wh{\bfkappa}_N(\zeta,\eta)=\wh{\boldsymbol{G}}_N(\zeta,\eta)-\wh{\boldsymbol{G}}_N(\eta,\zeta),$
where 
\begin{align}
& \wh{\boldsymbol{G}}_N(\zeta,\eta):= \sqrt{\pi} \sum_{k=0}^{N-1} \frac{ ( \sqrt{\frac{a_NN}{2}}\, \zeta )^{2k+1+b_NN} }{  \Gamma(k+\tfrac{3}{2}+\tfrac{b_N N}{2}) }  \sum_{l=0}^k \frac{ ( \sqrt{ \frac{a_NN}{2} } \eta )^{2l+b_NN} }{ \Gamma(l+\frac{b_N N }{2}+1) } \nonumber 
\\
& = \sqrt{\pi} \sum_{k=0}^{N-1} \frac{ ( \sqrt{\frac{a_NN}{2}}\, \zeta )^{2k+1+b_NN} }{  \Gamma(k+\tfrac{3}{2}+\tfrac{b_N N}{2}) }  e^{ \frac{a_NN}{2} \eta^2 } \bigg( \mathrm{Q}\Big(k+\tfrac{b_NN}{2}+1, \tfrac{a_NN}{2} \eta^2\Big) -\mathrm{Q}\Big(\tfrac{b_NN}{2}, \tfrac{a_NN}{2} \eta^2\Big)   \bigg).    \label{G bold hat}
\end{align}
\end{lem}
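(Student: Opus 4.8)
The plan is to start from the skew-orthogonal polynomials \eqref{skew op_rad} and the formula \eqref{bfkappaN skewOP} for $\bfkappa_N$, and to translate everything into the normalized pre-kernel $\wh{\bfkappa}_N$ via \eqref{bfkappa wh}. First I would compute the moments $h_k$ in \eqref{hk Qradial} explicitly for the potential \eqref{Q aN bN}. Using polar coordinates and $Q_N(\zeta) = a_N|\zeta|^2 - 2b_N \log|\zeta|$, one has
\begin{equation*}
h_k = \int_0^\infty 2 s^{2k+1}\, s^{2b_N N} e^{-a_N N s^2}\,ds = \Big(\frac{a_N N}{1}\Big)^{-(k+b_N N+1)} \Gamma(k+b_N N + 1),
\end{equation*}
after the substitution $t = a_N N s^2$ (up to a constant I would track carefully). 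Plugging this into the product $\prod_{j} h_{2l+2j+2}/h_{2l+2j+1}$ appearing in \eqref{skew op_rad} telescopes, so that $q_{2k}(\zeta) = \sum_{l=0}^{k} \zeta^{2l}\, c_{k,l}$ with $c_{k,l}$ an explicit ratio of Gamma functions and powers of $a_N N$; and of course $q_{2k+1}(\zeta) = \zeta^{2k+1}$. Also $r_k = 2h_{2k+1}$ by \eqref{rk h2k+1} is explicit.

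Next I would substitute these into \eqref{bfkappaN skewOP}. The term $q_{2k+1}(\zeta)q_{2k}(\eta)/r_k$ gives, after multiplying by $\gamma_N^3 (\zeta\eta)^{b_N N}$ as in \eqref{bfkappa wh} and simplifying the powers of $\gamma_N = \sqrt{2/(a_N N)}$ against the powers of $a_N N$ coming from $h$'s, precisely a double sum of the shape
\begin{equation*}
\sqrt{\pi}\sum_{k=0}^{N-1} \frac{(\sqrt{a_N N/2}\,\zeta)^{2k+1+b_N N}}{\Gamma(k+\tfrac32+\tfrac{b_N N}{2})} \sum_{l=0}^{k} \frac{(\sqrt{a_N N/2}\,\eta)^{2l+b_N N}}{\Gamma(l+\tfrac{b_N N}{2}+1)},
\end{equation*}
which is exactly $\wh{\boldsymbol G}_N(\zeta,\eta)$; the term $-q_{2k}(\zeta)q_{2k+1}(\eta)/r_k$ gives $-\wh{\boldsymbol G}_N(\eta,\zeta)$, establishing the claimed antisymmetric decomposition and, as a by-product, formulas \eqref{GN}--\eqref{bfkappaN GN} for the un-normalized objects. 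This is mostly bookkeeping: the main thing to get right is that the half-integer shift ($\tfrac32$ versus $1$ in the two Gamma arguments) comes from the odd versus even index, and that the $b_N N$-powers distribute symmetrically onto $\zeta$ and $\eta$.

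The second equality in \eqref{G bold hat} is then obtained by performing the inner sum over $l$ in closed form. Writing $x = \tfrac{a_N N}{2}\eta^2$, the inner sum is $\sum_{l=0}^{k} x^{l}/\Gamma(l+\tfrac{b_N N}{2}+1)$ (absorbing the $\eta^{b_N N}$ into $x^{b_N N/2}$). Using the classical identity $e^{x}\,\mathrm Q(\alpha,x) = e^x - e^x\,\mathrm P(\alpha,x)$ together with $\gamma(\alpha+m,x) = \Gamma(\alpha+m)\big(\mathrm P(\alpha,x) - e^{-x}\sum_{j=0}^{m-1} x^{\alpha+j-1}/\Gamma(\alpha+j)\big)$ — equivalently, the finite-sum representation of the incomplete gamma function, $e^{-x}\sum_{l=0}^{k} x^{l}/\Gamma(l+\alpha+1) = \mathrm Q(\alpha, x) - \mathrm Q(k+\alpha+1,x)$ for $\alpha = \tfrac{b_N N}{2}$ — the inner sum collapses to $e^{x}\big(\mathrm Q(k+\tfrac{b_N N}{2}+1,x) - \mathrm Q(\tfrac{b_N N}{2},x)\big)$, which is what appears on the right-hand side after restoring $x = \tfrac{a_N N}{2}\eta^2$. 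The only subtlety worth flagging is the choice of branch: $\eta^{b_N N}$ and $\eta^2 = (\eta^{?})$ must be handled with the principal branch as stipulated after \eqref{bfkappa wh}, and one checks that $(\sqrt{a_N N/2}\,\eta)^{b_N N}$ and $(\tfrac{a_N N}{2}\eta^2)^{b_N N/2}$ agree under that convention, so the rewriting is valid for all $\eta \in \mathbb{C}$ by analytic continuation from $\eta > 0$.

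The main obstacle, such as it is, is purely organizational: matching the powers of $\gamma_N$, $a_N N$, and $2$ across \eqref{skew op_rad}, \eqref{hk Qradial}, \eqref{rk h2k+1}, \eqref{bfkappaN skewOP} and \eqref{bfkappa wh} so that the $\Gamma$-function arguments come out as $k+\tfrac32+\tfrac{b_N N}{2}$ and $l+\tfrac{b_N N}{2}+1$ with no leftover constants, and then invoking the correct incomplete-gamma summation identity with the right parameter. No analytic difficulty arises; everything reduces to identities for $\Gamma$ and the finite partial sums of its series.
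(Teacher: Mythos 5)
Your proposal is correct and follows essentially the same route as the paper: compute $h_k$ in closed form, telescope the ratios in \eqref{skew op_rad}, substitute into \eqref{bfkappaN skewOP} using \eqref{rk h2k+1} and the duplication formula, normalize via \eqref{bfkappa wh}, and finally evaluate the inner sum with the partial-sum representation of $\mathrm{Q}$. One small slip to fix: the identity you quote should read $e^{-x}\sum_{l=0}^{k} x^{l+\alpha}/\Gamma(l+\alpha+1) = \mathrm Q(k+\alpha+1,x) - \mathrm Q(\alpha,x)$ (exponent $l+\alpha$, not $l$, and the opposite sign), which is exactly \eqref{sum to Q} and matches the conclusion you then state.
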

\begin{proof}

Recall that $Q_{N}$ is defined in \eqref{Q aN bN}. Hence, by \eqref{hk Qradial} we have
\begin{equation} \label{hk induced}
h_k= 2\int_{0}^\infty r^{2k+1} e^{-N Q(r)}\,dr = 2 \int_0^\infty r^{2k+1+2b_N N} e^{-a_N N r^2} \,dr =  \frac{ \Gamma(1+k+b_N N)  } { (a_N N)^{1+k+b_N N}  },
\end{equation}
and thus
\begin{align*}
\prod_{j=0}^{k-l-1} \frac{h_{2l+2j+2}  }{ h_{2l+2j+1} } & = \prod_{j=0}^{k-l-1} \frac{ \Gamma(2l+2j+3+b_N N)  } { (a_N N)^{2l+2j+3+b_N N}  }   \frac { (a_N N)^{2l+2j+2+b_N N}  } { \Gamma(2l+2j+2+b_N N)  } 
\\
&=  \prod_{j=0}^{k-l-1}  \frac{ l+j+1+b_N N/2 }{ a_N N/2 } 
= \Big( \frac{2}{a_N N} \Big)^{k-l} \frac{ \Gamma(k+\frac{b_N N}{2}+1) }{ \Gamma(l+\frac{b_N N }{2}+1) }, \quad 0 \leq l < k.
\end{align*}
Substituting the above in the definition \eqref{skew op_rad} of $q_{2k}$ yields
\begin{equation} \label{skew op_induced}
q_{2k}(\zeta)= \sum_{l=0}^k \Big( \frac{2}{a_N N} \Big)^{k-l} \frac{ \Gamma(k+\frac{b_N N}{2}+1) }{ \Gamma(l+\frac{b_N N }{2}+1) } \zeta^{2l}, \qquad k \geq 0.
\end{equation}
Using \eqref{rk h2k+1}, \eqref{hk induced}, and the duplication formula of the gamma function
\begin{equation}\label{Gamma duplication}
 \Gamma(2z)=\frac{2^{2z-1}}{\sqrt{\pi}} \Gamma(z)\Gamma(z+\tfrac12),
\end{equation}
we obtain
\begin{equation} \label{rk induced}
r_k=2\,\frac{ \Gamma(2+2k+b_N N)  } { (a_N N)^{2+2k+b_N N}  }= \frac{1}{\sqrt{\pi}} \Big( \frac{2}{a_NN} \Big)^{2+2k+b_NN} \Gamma(k+\tfrac{b_NN}{2}+1) \Gamma(k+\tfrac{b_NN}{2}+\tfrac{3}{2}). 
\end{equation}
Now, by \eqref{bfkappaN skewOP}, we get \eqref{bfkappaN GN} with 
\begin{equation}
\begin{split}
\boldsymbol{G}_N(\zeta,\eta)=\sum_{k=0}^{N-1} \frac{q_{2k+1}(\zeta) q_{2k}(\eta) }{r_k}
&=\sum_{k=0}^{N-1} \frac{ (a_N N)^{2k+2+b_N N} }{ 2\,\Gamma(2k+2+b_N N) } \zeta^{2k+1} \sum_{l=0}^k \Big( \frac{2}{a_N N} \Big)^{k-l} \frac{ \Gamma(k+\frac{b_N N}{2}+1) }{ \Gamma(l+\frac{b_N N }{2}+1) } \eta^{2l}
\\
&=  \sqrt{\pi} \Big( \frac{a_NN}{2} \Big)^{b_NN+\frac32}  \sum_{k=0}^{N-1} \frac{ ( \sqrt{\frac{a_NN}{2}}\, \zeta )^{2k+1} }{  \Gamma(k+\tfrac{3}{2}+\tfrac{b_N N}{2}) }  \sum_{l=0}^k \frac{ ( \sqrt{ \frac{a_NN}{2} } \eta )^{2l} }{ \Gamma(l+\frac{b_N N }{2}+1) }.
\end{split}
\end{equation}
It then follows from \eqref{bfkappa wh} that $\wh{\bfkappa}_N(\zeta,\eta)=\wh{\boldsymbol{G}}_N(\zeta,\eta)-\wh{\boldsymbol{G}}_N(\eta,\zeta),$ where 
$\wh{\boldsymbol{G}}_N(\zeta,\eta):= \gamma_N^3\zeta^{b_NN} \eta^{b_NN}\boldsymbol{G}_N(\zeta,\eta). $ This proves the first equation in \eqref{G bold hat}. The second expression in \eqref{G bold hat} follows from the recurrence relation of the incomplete Gamma function (see e.g. \cite[Eq.(8.8.9)]{olver2010nist}), namely
\begin{equation}\label{sum to Q}
	\sum_{k=0}^{N-1} \frac{ z^{k+c} }{ \Gamma(k+c+1) } = e^z \Big( \mathrm{Q}(N+c,z)-\mathrm{Q}(c,z) \Big).
\end{equation}
\end{proof}

We now obtain an identity for $\pa_\zeta \wt{\bfkappa}_N(\zeta,\eta)$ in terms of $\wt{\bfkappa}_N(\zeta,\eta)$ and $\mathrm{Q}$. As mentioned earlier, such identities are typically called generalised Christoffel-Darboux formulas (see e.g. \cite[Proposition 2.3]{MR3450566}).

\begin{prop}(Christoffel-Darboux formula) \label{Prop_CDI}
For $\theta \in [0,2\pi),$ and $\zeta,\eta \in \mathbb{C},$ let 
\begin{equation} \label{bfkappa wt}
\wt{\bfkappa}_N(\zeta,\eta):=e^{-a_NN \zeta \eta}\wh{\bfkappa}_N(\zeta,\eta)=e^{-2\mu \nu }\wh{\bfkappa}_N(\zeta,\eta) ,
\end{equation}
where 
\begin{equation} \label{mu nu}
\mu:=\sqrt{ \frac{a_NN}{2} } \zeta, \qquad \nu:=\sqrt{ \frac{a_NN}{2} } \eta.
\end{equation}
Then we have 
\begin{align}
\sqrt{\frac{2}{a_NN}} \pa_\zeta \wt{\bfkappa}_N(\zeta,\eta)& = 2(\mu-\nu) \, \wt{\bfkappa}_N(\zeta,\eta) + 2\, \Big( \mathrm{Q}(2N+b_NN,2\mu \nu)-\mathrm{Q}(b_NN, 2\mu \nu) \Big) \nonumber
\\
&\quad  -2\,  \sqrt{\pi}\, \frac{ \mu^{2N+b_NN} }{  \Gamma(N+\tfrac{1}{2}+\tfrac{b_N N}{2}) }  \,e^{\nu^2-2\mu \nu} \Big( \mathrm{Q}(N+\tfrac{b_NN}{2}, \nu^2)-\mathrm{Q}(\tfrac{b_NN}{2},\nu^2) \Big) \nonumber
\\
&\quad -2\, \sqrt{\pi} \, \frac{ \mu^{b_NN-1} }{ \Gamma(\frac{b_N N }{2}) }  \,e^{\nu^2-2\mu \nu} \Big( \mathrm{Q}(N+\tfrac{b_NN+1}{2}, \nu^2)-\mathrm{Q}(\tfrac{b_NN+1}{2},\nu^2) \Big). \label{CDI bfkappa}
\end{align}
\end{prop}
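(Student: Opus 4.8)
The plan is to reduce \eqref{CDI bfkappa} to an identity for $\wh{\bfkappa}_N$ and then prove that identity by a summation-by-parts argument. Introduce $\mu,\nu$ as in \eqref{mu nu} and write $c:=b_NN/2$ (assuming, as is implicit, that $N$ is large enough that $c>0$). Since $\sqrt{2/(a_NN)}\,\pa_\zeta=\pa_\mu$ and $\wt{\bfkappa}_N=e^{-2\mu\nu}\wh{\bfkappa}_N$ by \eqref{bfkappa wt}, one has $\sqrt{2/(a_NN)}\,\pa_\zeta\wt{\bfkappa}_N = 2(\mu-\nu)\,\wt{\bfkappa}_N + e^{-2\mu\nu}\big(\pa_\mu\wh{\bfkappa}_N-2\mu\,\wh{\bfkappa}_N\big)$, so it suffices to show that $e^{-2\mu\nu}\big(\pa_\mu\wh{\bfkappa}_N-2\mu\,\wh{\bfkappa}_N\big)$ equals the three $\mathrm{Q}$-terms on the right-hand side of \eqref{CDI bfkappa} other than $2(\mu-\nu)\wt{\bfkappa}_N$. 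Using Lemma~\ref{Lem_bfkappa exp} I will write $\wh{\bfkappa}_N(\zeta,\eta)=\wh{\boldsymbol{G}}_N(\zeta,\eta)-\wh{\boldsymbol{G}}_N(\eta,\zeta)$ with $\wh{\boldsymbol{G}}_N(\zeta,\eta)=\sqrt{\pi}\sum_{k=0}^{N-1}\frac{\mu^{2k+1+2c}}{\Gamma(k+c+\frac32)}\,S_k(\nu)$ and $S_k(w):=\sum_{l=0}^{k}\frac{w^{2l+2c}}{\Gamma(l+c+1)}$, and treat the two contributions $\pa_\mu\wh{\boldsymbol{G}}_N(\zeta,\eta)-2\mu\wh{\boldsymbol{G}}_N(\zeta,\eta)$ and $\pa_\mu\wh{\boldsymbol{G}}_N(\eta,\zeta)-2\mu\wh{\boldsymbol{G}}_N(\eta,\zeta)$ separately.

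For the first contribution, the identity $(2k+1+2c)/\Gamma(k+c+\frac32)=2/\Gamma(k+c+\frac12)$ lets me rewrite it as $2\sqrt{\pi}\sum_{k=0}^{N-1}(a_k-a_{k+1})S_k(\nu)$ with $a_k:=\mu^{2k+2c}/\Gamma(k+c+\frac12)$; an Abel summation together with $S_k(\nu)-S_{k-1}(\nu)=\nu^{2k+2c}/\Gamma(k+c+1)$ then produces a ``diagonal'' part $2\sqrt{\pi}\sum_{k=0}^{N-1}\frac{(\mu\nu)^{2k+2c}}{\Gamma(k+c+\frac12)\Gamma(k+c+1)}$ plus one boundary term $-2\sqrt{\pi}\frac{\mu^{2N+2c}}{\Gamma(N+c+\frac12)}S_{N-1}(\nu)$. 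For the second contribution, the key computation is $\pa_\mu S_k(\mu)=2\mu\,S_{k-1}(\mu)+2\mu^{2c-1}/\Gamma(c)$ (from $(2l+2c)/\Gamma(l+c+1)=2/\Gamma(l+c)$ and a shift of the summation index); using also $2\mu\big(S_{k-1}(\mu)-S_k(\mu)\big)=-2\mu^{2k+2c+1}/\Gamma(k+c+1)$, it becomes an ``off-diagonal'' part $-2\sqrt{\pi}\sum_{k=0}^{N-1}\frac{(\mu\nu)^{2k+2c+1}}{\Gamma(k+c+\frac32)\Gamma(k+c+1)}$ plus a boundary term $\frac{2\sqrt{\pi}\,\mu^{2c-1}}{\Gamma(c)}\sum_{k=0}^{N-1}\frac{\nu^{2k+1+2c}}{\Gamma(k+c+\frac32)}$.

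Subtracting, I combine the two polynomial sums into one using the duplication formula \eqref{Gamma duplication}: $\Gamma(k+c+\tfrac12)\Gamma(k+c+1)=\sqrt{\pi}\,2^{-(2k+2c)}\Gamma(2k+2c+1)$ and $\Gamma(k+c+\tfrac32)\Gamma(k+c+1)=\sqrt{\pi}\,2^{-(2k+2c+1)}\Gamma(2k+2c+2)$, so that the even-index and odd-index diagonal sums merge into $2\sum_{j=0}^{2N-1}\frac{(2\mu\nu)^{j+2c}}{\Gamma(j+2c+1)}$. A threefold application of the incomplete-gamma identity \eqref{sum to Q}---to this merged sum (with parameters $2N$, $2c$ and argument $2\mu\nu$), to $S_{N-1}(\nu)$ (parameters $N$, $c$, argument $\nu^2$), and to $\sum_{k=0}^{N-1}\nu^{2k+1+2c}/\Gamma(k+c+\tfrac32)$ (parameters $N$, $c+\tfrac12$, argument $\nu^2$, after writing $\nu^{2k+1+2c}=(\nu^2)^{k+c+1/2}$ and $\Gamma(k+c+\tfrac32)=\Gamma((k+c+\tfrac12)+1)$)---yields respectively $2e^{2\mu\nu}\big(\mathrm{Q}(2N+2c,2\mu\nu)-\mathrm{Q}(2c,2\mu\nu)\big)$ and the two boundary $\mathrm{Q}$-expressions. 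Multiplying by $e^{-2\mu\nu}$ and substituting $c=b_NN/2$ (so $2c=b_NN$, $N+c+\tfrac12=N+\tfrac12+\tfrac{b_NN}{2}$, $c+\tfrac12=\tfrac{b_NN+1}{2}$, etc.) gives exactly \eqref{CDI bfkappa}. The only delicate points are the endpoint bookkeeping in the two summations by parts (the $k=0$ term feeding the $\mu^{2c-1}/\Gamma(c)$ boundary contribution, and the $k=N-1$ term feeding the $S_{N-1}(\nu)$ contribution) and the consistent use of principal branches when rewriting $\mu^{2k+2c+\alpha}=(\mu^2)^{k+c+\alpha/2}$; these are routine, so I do not expect a serious obstacle---the proof is a careful chain of elementary manipulations.
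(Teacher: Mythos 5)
Your proof is correct and follows essentially the same route as the paper's: differentiate the explicit series for $\wh{\boldsymbol{G}}_N$, use $(2k+1+2c)/\Gamma(k+c+\tfrac32)=2/\Gamma(k+c+\tfrac12)$ to shift the gamma index, reorganize the double sum to peel off boundary terms, merge the even- and odd-index diagonal sums via the duplication formula, and convert each of the three resulting sums to $\mathrm{Q}$-differences with \eqref{sum to Q}. The only real difference is stylistic: you isolate $\partial_\mu\wh{\bfkappa}_N-2\mu\wh{\bfkappa}_N$ from the outset and package the index shifts as an Abel summation (which makes the boundary bookkeeping at $k=0$ and $k=N-1$ slightly more transparent), whereas the paper shifts indices directly and only recombines the $2(\mu-\nu)\wt{\bfkappa}_N$ term at the end; the two organizations are algebraically equivalent.
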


\begin{remark} \label{remark:CDI}
Curiously, the term $\mathrm{Q}(2N+b_NN,2\mu \nu)-\mathrm{Q}(b_NN, 2\mu \nu)$ appearing in the right-hand side of \eqref{CDI bfkappa} is equal to the kernel of the complex induced Ginibre ensemble times $e^{\frac{N}{2}(Q_{N}(\nu)+Q_{N}(\mu))}$. 
Similar identities featuring possible relations between orthogonal and skew-orthogonal polynomial kernels have already appeared in the literature, see \cite{byun2021universal, akemann2021scaling, MR2180006, MR3612266} for two-dimensional point processes and \cite{MR1762659, MR1675356} for one-dimensional point processes.  
\end{remark}

\begin{proof}[Proof of Proposition~\ref{Prop_CDI}]
By \eqref{G bold hat}, we have 
\begin{equation}
\wh{\boldsymbol{G}}_N(\zeta,\eta):= \sqrt{\pi} \sum_{k=0}^{N-1} \frac{ \mu^{2k+1+b_NN} }{  \Gamma(k+\tfrac{3}{2}+\tfrac{b_N N}{2}) }  \sum_{l=0}^k \frac{ \nu^{2l+b_NN} }{ \Gamma(l+\frac{b_N N }{2}+1) },
\end{equation}
where $\mu$ and $\nu$ are given by \eqref{mu nu}. 
Differentiating $\wh{\boldsymbol{G}}_N(\zeta,\eta)$ with respect to the $\zeta$-variable yields
\begin{equation}
\begin{split}
\pa_\zeta \wh{\boldsymbol{G}}_N(\zeta,\eta) &=   \sqrt{ \pi \frac{a_NN}{2} }  \sum_{k=0}^{N-1} \frac{ (2k+1+b_NN) \mu^{2k+b_NN} }{  \Gamma(k+\tfrac{3}{2}+\tfrac{b_N N}{2}) }  \sum_{l=0}^k \frac{ \nu^{2l+b_NN} }{ \Gamma(l+\frac{b_N N }{2}+1) } 
\\
& = \sqrt{\pi 2a_NN }  \sum_{k=0}^{N-1} \frac{ \mu^{2k+b_NN} }{  \Gamma(k+\tfrac{1}{2}+\tfrac{b_N N}{2}) }  \sum_{l=0}^k \frac{ \nu^{2l+b_NN} }{ \Gamma(l+\frac{b_N N }{2}+1) }. 
\end{split}
\end{equation}
Rearranging the summation, we have 
\begin{equation}
\begin{split}
\pa_\zeta \wh{\boldsymbol{G}}_N(\zeta,\eta) & =  \sqrt{ \pi 2a_NN } \Bigg[ \sum_{k=1}^{N-1} \frac{ \mu^{2k+b_NN} }{  \Gamma(k+\tfrac{1}{2}+\tfrac{b_N N}{2}) }  \sum_{l=0}^k \frac{ \nu^{2l+b_NN} }{ \Gamma(l+\frac{b_N N }{2}+1) }
+  \frac{ \mu^{b_NN} }{  \Gamma(\tfrac{1}{2}+\tfrac{b_N N}{2}) }  \frac{ \nu^{b_NN} }{ \Gamma(\frac{b_N N }{2}+1) } \Bigg]. 
\end{split}
\end{equation}
Since 
\begin{align*}
& \sum_{k=1}^{N-1} \frac{ \mu^{2k+b_NN} }{  \Gamma(k+\tfrac{1}{2}+\tfrac{b_N N}{2}) }  \sum_{l=0}^k \frac{ \nu^{2l+b_NN} }{ \Gamma(l+\frac{b_N N }{2}+1) } = \mu \sum_{k=0}^{N-2} \frac{ \mu^{2k+1+b_NN} }{  \Gamma(k+\tfrac{3}{2}+\tfrac{b_N N}{2}) }  \sum_{l=0}^{k+1} \frac{ \nu^{2l+b_NN} }{ \Gamma(l+\frac{b_N N }{2}+1) }   
\\
&= \mu \sum_{k=0}^{N-2} \frac{ \mu^{2k+1+b_NN} }{  \Gamma(k+\tfrac{3}{2}+\tfrac{b_N N}{2}) }  \sum_{l=0}^{k} \frac{ \nu^{2l+b_NN} }{ \Gamma(l+\frac{b_N N }{2}+1) }  + \sum_{k=1}^{N-1} \frac{ \mu^{2k+b_NN} }{  \Gamma(k+\tfrac{1}{2}+\tfrac{b_N N}{2}) }  \frac{ \nu^{2k+b_NN} }{ \Gamma(k+\frac{b_N N }{2}+1) } ,
\end{align*}
we have 
\begin{align*}
& \sqrt{\pi} \Bigg[ \sum_{k=1}^{N-1} \frac{ \mu^{2k+b_NN} }{  \Gamma(k+\tfrac{1}{2}+\tfrac{b_N N}{2}) }  \sum_{l=0}^k \frac{ \nu^{2l+b_NN} }{ \Gamma(l+\frac{b_N N }{2}+1) }
+  \frac{ \mu^{b_NN} }{  \Gamma(\tfrac{1}{2}+\tfrac{b_N N}{2}) }  \frac{ \nu^{b_NN} }{ \Gamma(\frac{b_N N }{2}+1) } \Bigg]
\\
&= \mu\,\sqrt{\pi} \sum_{k=0}^{N-2} \frac{ \mu^{2k+1+b_NN} }{  \Gamma(k+\tfrac{3}{2}+\tfrac{b_N N}{2}) }  \sum_{l=0}^{k} \frac{ \nu^{2l+b_NN} }{ \Gamma(l+\frac{b_N N }{2}+1) }  +\sqrt{\pi} \sum_{k=0}^{N-1} \frac{ \mu^{2k+b_NN} }{  \Gamma(k+\tfrac{1}{2}+\tfrac{b_N N}{2}) }  \frac{ \nu^{2k+b_NN} }{ \Gamma(k+\frac{b_N N }{2}+1) }
\\
&= \mu \, \wh{\boldsymbol{G}}_N(\zeta,\eta) -  \sqrt{\pi} \frac{ \mu^{2N+b_NN} }{  \Gamma(N+\tfrac{1}{2}+\tfrac{b_N N}{2}) }  \sum_{l=0}^{N-1} \frac{ \nu^{2l+b_NN} }{ \Gamma(l+\frac{b_N N }{2}+1) }  + \sum_{k=0}^{N-1} \frac{(2\mu \nu)^{2k+b_NN} }{  \Gamma(2k+1+b_N N) },
\end{align*}
where we have used \eqref{Gamma duplication} for the last line. 
We have just shown that 
\begin{equation}
\frac{ \pa_\zeta \wh{\boldsymbol{G}}_N(\zeta,\eta)  }{ \sqrt{2a_NN} }= \mu \, \wh{\boldsymbol{G}}_N(\zeta,\eta) -  \sqrt{\pi} \frac{ \mu^{2N+b_NN} }{  \Gamma(N+\tfrac{1}{2}+\tfrac{b_N N}{2}) }  \sum_{l=0}^{N-1} \frac{ \nu^{2l+b_NN} }{ \Gamma(l+\frac{b_N N }{2}+1) }  + \sum_{k=0}^{N-1} \frac{(2\mu)^{2k+b_NN}\nu^{2k+b_NN} }{  \Gamma(2k+1+b_N N) }.
\end{equation}
In a similar way, we obtain
\begin{equation}
\begin{split}
\frac{\pa_\zeta\wh{\boldsymbol{G}}_N(\eta,\zeta)}{\sqrt{2a_NN}} &= \mu \,\wh{\boldsymbol{G}}_N(\eta,\zeta)- \sum_{k=0}^{N-1} \frac{ (2\mu)^{2k+1+b_NN} \nu^{2k+1+b_NN} }{ \Gamma(2k+2+b_N N) }+\sqrt{\pi} \frac{ \mu^{b_NN-1} }{ \Gamma(\frac{b_N N }{2}) }  \sum_{k=0}^{N-1} \frac{ \nu^{2k+1+b_NN} }{  \Gamma(k+\tfrac{3}{2}+\tfrac{b_N N}{2}) } .
\end{split}
\end{equation}
Combining above equations, we have 
\begin{equation}
\begin{split}
\frac{ \pa_\zeta \wh{\bfkappa}_N(\zeta,\eta)  }{ \sqrt{2a_NN} }&= \mu \, \wh{\bfkappa}_N(\zeta,\eta)  + \sum_{k=0}^{2N-1} \frac{(2\mu)^{k+b_NN} \nu^{k+b_NN} }{  \Gamma(k+1+b_N N) }
\\
& -  \sqrt{\pi} \frac{ \mu^{2N+b_NN} }{  \Gamma(N+\tfrac{1}{2}+\tfrac{b_N N}{2}) }  \sum_{l=0}^{N-1} \frac{ \nu^{2l+b_NN} }{ \Gamma(l+\frac{b_N N }{2}+1) }
- \sqrt{\pi} \frac{ \mu^{b_NN-1} }{ \Gamma(\frac{b_N N }{2}) }  \sum_{k=0}^{N-1} \frac{ \nu^{2k+1+b_NN} }{  \Gamma(k+\tfrac{3}{2}+\tfrac{b_N N}{2}) }. 
\end{split} \label{lol2}
\end{equation}
Substituting \eqref{sum to Q} in the above expression, we obtain
\begin{equation} 
\begin{split} 
\sqrt{\frac{2}{a_NN}} \pa_\zeta \wh{\bfkappa}_N(\zeta,\eta)&= 2\mu \, \wh{\bfkappa}_N(\zeta,\eta) + 2\, e^{2\mu \nu} \Big( \mathrm{Q}(2N+b_NN,2\mu \nu)-\mathrm{Q}(b_NN, 2\mu \nu) \Big)
\\
&\quad  -2\,  \sqrt{\pi}\, \frac{ \mu^{2N+b_NN} }{  \Gamma(N+\tfrac{1}{2}+\tfrac{b_N N}{2}) }  \,e^{\nu^2} \Big( \mathrm{Q}(N+\tfrac{b_NN}{2}, \nu^2)-\mathrm{Q}(\tfrac{b_NN}{2},\nu^2) \Big)
\\
&\quad -2\, \sqrt{\pi} \, \frac{ \mu^{b_NN-1} }{ \Gamma(\frac{b_N N }{2}) }  \,e^{\nu^2} \Big( \mathrm{Q}(N+\tfrac{b_NN+1}{2}, \nu^2)-\mathrm{Q}(\tfrac{b_NN+1}{2},\nu^2) \Big).
\end{split}
\end{equation}
Using then \eqref{bfkappa wt}, we get the desired identity \eqref{CDI bfkappa}. 
\end{proof}

\section{Proofs of Theorem~\ref{thm:main} and Proposition \ref{Cor_Chiral lim}}\label{Section_Scaling limits}
Parts (a), (b) and (c) of Theorem~\ref{thm:main} are proven in Subsections~\ref{Subsec_cplex limit}, \ref{Subsec_real limit} and \ref{Subsec_scaling interpolation}, respectively, and Proposition~\ref{Cor_Chiral lim} is proved in Subsection~\ref{Subsec_chiral limit}. 

\medskip Recall that $p=e^{i\theta}$ is the zooming point around which we rescale the correlation functions, see \eqref{RNk bfRNk}, and that $z_j$ and $\zeta_j$ are related as \eqref{rescaling}. Recall also that Theorem \ref{thm:main} (a) deals with $\theta \in [0,2\pi)\setminus\{0,\pi\}$ and that Theorem \ref{thm:main} (b) deals with $\theta \approx 0,\pi$. We start this section with general lemma valid for all $\theta$. 
\begin{lemma}\label{lemma:exact identity Rnk}
Let $\theta \in [0,2\pi)$. The following identity holds
\begin{align} \label{RNk bfkappa wt new}
& R_{N,k}(z_1,\dots, z_k) 
= \Pf \Big[ \bigg(\begin{matrix} 
\wt{\bfkappa}_N(\zeta_j,\zeta_l) e^{-\big(|z_{j}|^{2}+|z_{l}|^{2}-2e^{2i\theta}z_{j}z_{l}+(1-e^{2i\theta})a_{N}N+\sqrt{2a_{N}N}(z_{j}+z_{l})(1-e^{2i\theta})\big)} 
\smallskip 
\\
\wt{\bfkappa}_N(\bar{\zeta}_j,\zeta_l)  e^{-\big(|z_{j}|^{2}+|z_{l}|^{2}-2\overline{z}_{j}z_{l}\big)} 
\end{matrix} \cdots  \\
& \cdots \begin{matrix}
\wt{\bfkappa}_N(\zeta_j,\bar{\zeta}_l)   e^{-\big(|z_{j}|^{2}+|z_{l}|^{2}-2z_{j}\overline{z}_{l}\big)} \\
\wt{\bfkappa}_N(\bar{\zeta}_j,\bar{\zeta}_l)   e^{-\big(|z_{j}|^{2}+|z_{l}|^{2}-2e^{-2i\theta}\overline{z}_{j}\overline{z}_{l}+(1-e^{-2i\theta})a_{N}N+\sqrt{2a_{N}N}(\overline{z}_{j}+\overline{z}_{l})(1-e^{-2i\theta})\big)} 
\end{matrix} \bigg) \Big]_{ j,l=1 }^k \prod_{j=1}^{k} \frac{\bar{\zeta}_j-\zeta_j}{\gamma_N}.
\end{align}
\end{lemma}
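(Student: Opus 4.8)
The plan is to start from the exact Pfaffian formula \eqref{Rk bfkappa wh} for $R_{N,k}$, which is already established from the skew-orthogonal polynomial formalism, and to massage the Gaussian prefactors $e^{-\frac{a_{N}N}{2}(|\zeta_{j}|^{2}+|\zeta_{l}|^{2})}$ into the form claimed in \eqref{RNk bfkappa wt new}. The only change between \eqref{Rk bfkappa wh} and the statement is that (i) each entry $\wh{\bfkappa}_N$ is replaced by $\wt{\bfkappa}_N$ via \eqref{bfkappa wt}, which reintroduces a factor $e^{a_{N}N\,\xi\,\eta}$ where $(\xi,\eta)$ is the relevant pair of arguments ($\zeta_j,\zeta_l$; $\zeta_j,\bar\zeta_l$; $\bar\zeta_j,\zeta_l$; $\bar\zeta_j,\bar\zeta_l$), and (ii) the common scalar prefactor $e^{-\frac{a_{N}N}{2}(|\zeta_{j}|^{2}+|\zeta_{l}|^{2})}$ is distributed into the $2\times 2$ block entrywise. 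Since inside a $k\times k$ Pfaffian of $2\times2$ blocks one may multiply the entire $j$-th ``row-block'' and $j$-th ``column-block'' by the same scalar without changing $\Pf\,[\cdots]\prod_j(\bar\zeta_j-\zeta_j)/\gamma_N$ up to the bookkeeping already present (this is the standard conjugation invariance of the Pfaffian, exactly as in Remark~\ref{remark:non-uniqueness of the kernel}), the content of the lemma is purely the algebraic identity, entry by entry, that
\[
e^{-\tfrac{a_{N}N}{2}(|\zeta_{j}|^{2}+|\zeta_{l}|^{2})}\, e^{a_{N}N\,\xi\eta} \;=\; e^{-(|z_{j}|^{2}+|z_{l}|^{2}) + (\text{the exponent written in \eqref{RNk bfkappa wt new}})},
\]
for each of the four choices of $(\xi,\eta)$, after inserting $\zeta_{j}=p+p\gamma_{N}z_{j}$ with $p=e^{i\theta}$ and $\gamma_{N}=\sqrt{2/(a_{N}N)}$.

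Concretely, first I would record the elementary expansions $a_{N}N|\zeta_{j}|^{2}=a_{N}N\,|p+p\gamma_Nz_j|^2 = a_{N}N + \sqrt{2a_{N}N}\,(z_{j}+\bar z_{j}) + 2|z_{j}|^{2}$ (using $|p|=1$, $\gamma_N^2 a_N N/2 = 1$, and $p\bar p=1$), and $a_{N}N\,\zeta_{j}\zeta_{l}=e^{2i\theta}\big(a_{N}N + \sqrt{2a_{N}N}(z_{j}+z_{l}) + 2z_{j}z_{l}\big)$, together with the conjugate identity $a_{N}N\,\bar\zeta_{j}\bar\zeta_{l}=e^{-2i\theta}\big(a_{N}N+\sqrt{2a_{N}N}(\bar z_{j}+\bar z_{l})+2\bar z_{j}\bar z_{l}\big)$ and the mixed ones $a_{N}N\,\zeta_{j}\bar\zeta_{l}=a_{N}N+\sqrt{2a_{N}N}\cdot\tfrac{1}{2}(\ldots)+2z_{j}\bar z_{l}+\ldots$ — I would be careful here because $\zeta_j\bar\zeta_l = (p+p\gamma_N z_j)(\bar p + \bar p\gamma_N \bar z_l)$ and $p\bar p=1$, so actually $a_N N\,\zeta_j\bar\zeta_l = a_N N + \sqrt{2 a_N N}(z_j+\bar z_l) + 2 z_j\bar z_l$ with \emph{no} $e^{\pm 2i\theta}$ factor, which is exactly why the off-diagonal blocks in \eqref{RNk bfkappa wt new} carry no $\theta$-dependent terms while the diagonal blocks do. Then for each block entry I would substitute into $-\tfrac{a_{N}N}{2}(|\zeta_{j}|^{2}+|\zeta_{l}|^{2}) + a_{N}N\,\xi\eta$ and check that the $a_N N$-order terms, the $\sqrt{a_N N}$-order terms and the $z$-bilinear terms combine into precisely the exponent displayed in \eqref{RNk bfkappa wt new}; the $|z_j|^2$ and $|z_l|^2$ pieces coming from $-\tfrac{a_N N}{2}|\zeta_j|^2$ etc.\ produce the uniform $e^{-(|z_j|^2+|z_l|^2)}$ factor, while the remaining pieces produce the bracketed correction.

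The main (and essentially only) obstacle is bookkeeping: one must correctly attribute the $-\tfrac{a_{N}N}{2}|\zeta_{j}|^{2}$ contributions between the four block entries so that, after the entrywise redistribution, the resulting matrix is genuinely obtained from the matrix in \eqref{Rk bfkappa wh} by a row/column rescaling $V=\mathrm{diag}(v_1,\ldots,v_k)$ (acting on $2\times2$ blocks) with the \emph{same} $v_j$ on rows and columns, since only then does $\Pf(V M V^{T})=\det(V)\,\Pf(M)$ combine harmlessly with $\prod_j(\bar\zeta_j-\zeta_j)/\gamma_N$. In fact here the redistribution is not a clean symmetric rescaling — the four entries of the $(j,l)$ block get different exponents — so the cleanest justification is the direct one: expand both sides of \eqref{Rk bfkappa wh} and \eqref{RNk bfkappa wt new} and verify that each $2\times2$ block matches after using \eqref{bfkappa wt}, $\wt{\bfkappa}_N(\xi,\eta)=e^{-a_NN\xi\eta}\wh{\bfkappa}_N(\xi,\eta)$. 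I would present it in that form: state the four exponential identities as a displayed computation, note the cancellation of the $a_N N$ and $\sqrt{a_N N}$ terms against those reintroduced by passing $\wh{\bfkappa}_N\to\wt{\bfkappa}_N$, and conclude that \eqref{RNk bfkappa wt new} is just \eqref{Rk bfkappa wh} rewritten. No convergence or asymptotics enter; the identity is exact and holds for all $\theta\in[0,2\pi)$ and all $N,k$, which is why it is placed before the case split into (a) and (b).
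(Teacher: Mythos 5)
Your outline identifies the correct starting point (the exact Pfaffian formula~\eqref{Rk bfkappa wh}), the correct substitution $\wt{\bfkappa}_N(\xi,\eta)=e^{-a_NN\xi\eta}\wh{\bfkappa}_N(\xi,\eta)$, and the correct expansions of $a_NN|\zeta_j|^2$, $a_NN\zeta_j\zeta_l$, $a_NN\zeta_j\bar\zeta_l$. However, the key step where you ``verify that each $2\times2$ block matches'' entry by entry is \emph{false}, and your proof would grind to a halt if you carried it out. Take the top-right entry: with $\xi=\zeta_j$, $\eta=\bar\zeta_l$,
\begin{align*}
-\tfrac{a_NN}{2}\big(|\zeta_j|^2+|\zeta_l|^2\big)+a_NN\,\zeta_j\bar\zeta_l
= -\big(|z_j|^2+|z_l|^2-2z_j\bar z_l\big) + i\,\sqrt{2a_NN}\big(\Im z_j - \Im z_l\big),
\end{align*}
because at order $\sqrt{a_NN}$ the terms $-\tfrac12\sqrt{2a_NN}(z_j+\bar z_j+z_l+\bar z_l)$ and $+\sqrt{2a_NN}(z_j+\bar z_l)$ do \emph{not} cancel; they leave the purely imaginary $i\sqrt{2a_NN}(\Im z_j-\Im z_l)$. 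Similarly the top-left entry acquires an extra $i\sqrt{2a_NN}\,\Im(z_j+z_l)$, the bottom-right $-i\sqrt{2a_NN}\,\Im(z_j+z_l)$, and the bottom-left $-i\sqrt{2a_NN}(\Im z_j-\Im z_l)$. These residual factors are not present in \eqref{RNk bfkappa wt new}, so the blocks do not match.

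The missing idea---which is exactly what the paper's short proof supplies via \eqref{prefactor 11}--\eqref{prefactor 12} and the comment ``the terms $i\im(z+w)$ and $i\im(z+\bar w)$ cancel out when computing the Pfaffian''---is that these residuals form a cocycle that vanishes inside the Pfaffian. Setting $s_j:=\sqrt{2a_NN}\,\Im z_j$, the actual $2k\times 2k$ matrix obtained from \eqref{Rk bfkappa wh} equals $\widehat{D}\,M\,\widehat{D}$, where $M$ is the matrix displayed in \eqref{RNk bfkappa wt new} and $\widehat{D}=\mathrm{diag}\big(e^{is_1},e^{-is_1},\dots,e^{is_k},e^{-is_k}\big)$; since $\det(\widehat D)=\prod_j e^{is_j}e^{-is_j}=1$ one has $\Pf(\widehat D M\widehat D)=\det(\widehat D)\,\Pf(M)=\Pf(M)$. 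You actually brush up against this argument (``conjugation invariance of the Pfaffian, exactly as in Remark~\ref{remark:non-uniqueness of the kernel}'') but then discard it with ``not a clean symmetric rescaling''---whereas it \emph{is} a clean conjugation, just not by a scalar per block: the two rows of block $j$ must be multiplied by $e^{is_j}$ and $e^{-is_j}$ respectively, which is precisely why the residuals carry opposite signs in the two rows of each block and cancel in the determinant. Without recognizing this step, the proof is incomplete.
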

\begin{proof}
Using \eqref{Rk bfkappa wh} and \eqref{bfkappa wt}, we first rewrite $R_{N,k}$ as
\begin{equation} \label{RNk bfkappa wt}
R_{N,k}(z_1,\dots, z_k) 
= \Pf \Big[ e^{ -\frac{a_NN}{2}(|\zeta_j|^2+|\zeta_l|^2) }   \begin{pmatrix} 
\wt{\bfkappa}_N(\zeta_j,\zeta_l) e^{a_N N \zeta_j \zeta_l} & \wt{\bfkappa}_N(\zeta_j,\bar{\zeta}_l)   e^{a_N N \zeta_j  \bar{\zeta}_l}
\smallskip 
\\
\wt{\bfkappa}_N(\bar{\zeta}_j,\zeta_l)  e^{ a_NN \bar{\zeta}_j \zeta_l } & \wt{\bfkappa}_N(\bar{\zeta}_j,\bar{\zeta}_l)   e^{ a_NN \bar{\zeta}_j \bar{\zeta}_l }
\end{pmatrix}   \Big]_{ j,l=1 }^k \prod_{j=1}^{k} \frac{\bar{\zeta}_j-\zeta_j}{\gamma_N}.
\end{equation}
In the spirit of \eqref{rescaling}, given $z,w \in \mathbb{C}$, let us define $\zeta$ and $\eta$ by
\begin{equation} \label{zeta eta}
\zeta=e^{i\theta} \Big( 1+ \sqrt{ \frac{2}{a_N N} }\,z \Big), \qquad \eta=e^{i\theta} \Big( 1+ \sqrt{ \frac{2}{a_N N} }\,w \Big).
\end{equation}
A direct computation shows that
\begin{align}
\frac{a_NN}{2}(|\zeta|^2+|\eta|^2-2\zeta \eta) &= |z|^2+ |w|^2 -2e^{2i\theta} zw+(1-e^{2i\theta})a_NN \nonumber \\
&\quad +\sqrt{2a_NN}\Big( (z+w) (1-e^{2i\theta})-i \im(z+w) \Big), \label{prefactor 11} \\
\frac{a_NN}{2}(|\zeta|^2+|\eta|^2-2\zeta \bar{\eta} )&=|z|^2+|w|^2-2z\bar{w}-\sqrt{2a_NN} i\im(z+\bar{w}). \label{prefactor 12}
\end{align}
Furthermore, when using \eqref{prefactor 11} and \eqref{prefactor 12} in \eqref{RNk bfkappa wt}, the terms $i\im(z+w)$ and $i\im(z+\bar{w})$ cancel out when computing the Pfaffian in \eqref{RNk bfkappa wt}. The claim follows.
\end{proof}

\subsection{Proof of Theorem~\ref{thm:main} (a)} \label{Subsec_cplex limit}
As in the statement of Theorem~\ref{thm:main} (a), here $\theta \in [0,2\pi) \setminus \{0,\pi \}$. Let us write
\begin{align}
 \label{KN kappa}
K_N(z,w)&:= \frac{\sqrt{2a_NN}\sin \theta}{i} e^{2z\bar{w}} \wt{\bfkappa}_N(\zeta,\bar{\eta}),
\\
\label{eN kappa}
e_N(z,w)&:=  \frac{\sqrt{2a_NN}\sin \theta}{i} e^{  (e^{2i\theta}-1) (a_NN+\sqrt{2a_NN}(z+w) )+2e^{2i\theta} zw } \wt{\bfkappa}_N(\zeta,\eta),
\end{align}
with $\zeta$ and $\eta$ as in \eqref{zeta eta}. 

\begin{proof}[Proof of Theorem~\ref{thm:main} (a)]
Combining Lemma \ref{lemma:exact identity Rnk} with \eqref{prefactor 11}, \eqref{prefactor 12}, \eqref{KN kappa}, \eqref{eN kappa}, and 
	\begin{equation} \label{zeta im}
		\frac{\bar{\zeta}_j-\zeta_j}{\gamma_N} = \frac{ 2\sin \theta }{i \gamma_N}+ e^{-i\theta} \bar{z}_j-e^{i\theta} z_j = \frac{ 2\sin \theta }{i \gamma_N}+o(1), \quad \mbox{as } N \to \infty,
	\end{equation}
	we obtain after a computation that
	\begin{equation} \label{RNk cplx K e}
		R_{N,k}(z_1,\dots,z_k) = \Pf \Big[ e^{-|z_j|^2-|z_l|^2}  \begin{pmatrix}
			e_N(z_j,z_l) & K_N(z_j,z_l)
			\smallskip 
			\\
			-K_N(z_l,z_j) & -\overline{e_N(z_j,z_l)}
		\end{pmatrix} \Big]_{j,l=1}^k \,+o(1),
	\end{equation}
	where the $o(1)$-term is uniform on compact subsets of $\C$. It turns out that
\begin{equation} \label{KN eN asymp in proof}
K_N(z,w) =K^\C(z,w)+o(1), \qquad e_N(z,w)=o(1), \qquad \mbox{as } N \to \infty,
\end{equation}
uniformly for $z,w$ in compact subsets of $\mathbb{C}$, where $K^\C$ is defined in \eqref{K AH}. We postpone the proof of \eqref{KN eN asymp in proof} to Lemma \ref{Lem_eK asym} below. Using \eqref{RNk cplx K e} and \eqref{KN eN asymp in proof}, we obtain 
\begin{equation}
\begin{split}
R_{N,k}(z_1,\dots,z_k) = \Pf \Big[ e^{-|z_j|^2-|z_l|^2}  \begin{pmatrix}
0 & K^\C(z_j,z_l) \smallskip \\
-K^\C(z_l,z_j) & 0
\end{pmatrix} \Big]_{j,l=1}^k \,+o(1), \qquad \mbox{as } N \to \infty
\end{split}
\end{equation}
uniformly for $z_{1},\ldots,z_{k}$ in compact subsets of $\mathbb{C}$. Using row and column operations, we get
\begin{equation}
R_{N,k}(z_1,\dots,z_k) =  (-1)^{\frac{k(k-1)}{2}} \Pf 
\begin{pmatrix}
0 & M \\
-M^T & 0
\end{pmatrix} \,+o(1), \qquad M:=\Big( e^{-|z_j|^2-|z_l|^2} K^\C(z_j,z_l) \Big)_{j,l=1}^k.
\end{equation}
The claim now follows from the algebraic identity 
\begin{equation} \label{Pf block det}
(-1)^{ \frac{k(k-1)}{2}}	\Pf\begin{pmatrix}
0 & M \\
-M^T & 0
\end{pmatrix}= \det(M).
\end{equation}
\end{proof}

We now prove \eqref{KN eN asymp in proof}.
 
\begin{lem} \label{Lem_eK asym}
Let $\theta \in [0,2\pi)\setminus \{0,\pi\}$. As $N \to \infty$, 
\begin{equation} \label{KN eN asymp}
K_N(z,w) =K^\C(z,w)+o(1), \qquad e_N(z,w)=o(1),
\end{equation}
uniformly for $z,w$ in compact subsets of $\C$, where $K^\C$ is defined in \eqref{K AH}.
\end{lem}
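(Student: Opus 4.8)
The plan is to prove the two asymptotic statements by inserting the formula for $\wt{\bfkappa}_N$ obtained from Lemma~\ref{Lem_bfkappa exp} and the definition \eqref{bfkappa wt}, and then carrying out a Riemann-sum analysis of the resulting double sum. Recall that by Lemma~\ref{Lem_bfkappa exp} and \eqref{bfkappa wt}, $\wt{\bfkappa}_N(\zeta,\eta) = e^{-2\mu\nu}(\wh{\boldsymbol G}_N(\zeta,\eta)-\wh{\boldsymbol G}_N(\eta,\zeta))$ with $\wh{\boldsymbol G}_N(\zeta,\eta)=\sqrt{\pi}\sum_{k=0}^{N-1}\frac{\mu^{2k+1+b_NN}}{\Gamma(k+\frac32+\frac{b_NN}{2})}\,e^{\nu^2}\big(\mathrm Q(k+\frac{b_NN}{2}+1,\nu^2)-\mathrm Q(\frac{b_NN}{2},\nu^2)\big)$, where $\mu,\nu$ are as in \eqref{mu nu}; for $K_N$ we take $\eta=\bar\eta$-valued arguments and for $e_N$ the arguments $\zeta,\eta$ both near $e^{i\theta}$. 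The key point distinguishing the case $\theta\neq0,\pi$ is that $\mu$ and $\nu$ (or $\bar\nu$) then lie on rays that are \emph{not} conjugate to each other, so that the quadratic form $|\zeta|^2+|\eta|^2-2\zeta\eta$ (resp.\ the analogous off-diagonal combination) has a genuinely negative real part uniformly in $N$; this is precisely what makes the ``wrong'' interaction term decay and will be what forces $e_N\to 0$.

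First I would make the change of summation index that turns the sum into a Riemann sum: write $k = \lfloor b_NN/2 \rfloor + $ something, or more cleanly set $s = k/N$ and use the fact that $b_NN/2 \sim N/(2\rho^2)\cdot\rho^2 = $ (after the precise bookkeeping with \eqref{Q aN bN}) of the same order $N$, so that the ratio of Gamma functions $\mu^{2k+1+b_NN}/\Gamma(k+\frac32+\frac{b_NN}{2})$ becomes, via Stirling, an explicit function of $s$ times a Gaussian factor in the fluctuation variable. The magnitude $|\mu|^{2} = \frac{a_NN}{2}|\zeta|^2 \approx \frac{a_NN}{2}(1 + 2\sqrt{2/(a_NN)}\,\mathrm{Re}\,z) = \frac{a_NN}{2} + \sqrt{2a_NN}\,\mathrm{Re}\,z + O(1)$, and one checks that the summand is concentrated (in $k$) in a window of width $O(\sqrt N)$ around a saddle, with the location of the saddle depending on $\mathrm{Re}\,z$. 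Inside that window the incomplete-Gamma factors $\mathrm Q(k+\frac{b_NN}{2}+1,\nu^2)$ and $\mathrm Q(\frac{b_NN}{2},\nu^2)$ transition from $0$ to $1$ (resp.\ stay essentially constant); since $\nu^2 = \frac{a_NN}{2}\eta^2$ has real part $\approx\frac{a_NN}{2}$ again, the transition of the regularized Gamma function across $\mathrm Q(a,x)$ with $a \approx x$ is governed by an error-function profile, by the standard uniform asymptotics of $\Gamma(a,x)$ recalled in Appendix~\ref{appendix_gamma}. Assembling the Gaussian weight coming from Stirling with the $\erfc$-profile of the incomplete Gamma factors, and replacing the $k$-sum by an integral over the rescaled variable, produces exactly the integral representation $\frac12(\erfc(z+\bar w - 2a) - \erfc(z+\bar w+2a))$ after absorbing $e^{2z\bar w}$ and the $\frac{\sqrt{2a_NN}\sin\theta}{i}$ prefactor, which matches \eqref{K AH}. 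This is the computation for $K_N$; for $e_N$ the same analysis applies but the prefactor $e^{(e^{2i\theta}-1)(a_NN + \cdots)}$ has modulus $e^{a_NN(\cos2\theta-1)} \cdot(\text{lower order})$, which is exponentially small since $\cos2\theta < 1$ for $\theta\neq0,\pi$, and this exponential smallness dominates the at-most-polynomial size of $\wt{\bfkappa}_N(\zeta,\eta)$, giving $e_N\to0$.

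The main obstacle I expect is making the Riemann-sum approximation uniform and rigorous: one must (i) truncate the $k$-sum to the $O(\sqrt N)$ window around the saddle with a controlled error, using that outside the window the Stirling quotient decays super-polynomially; (ii) control the error in the uniform asymptotics of $\mathrm Q(a,x)$ when $a$ and $x$ are both large and their difference is of order $\sqrt a$ — this is where one invokes the precise expansions from Appendix~\ref{appendix_gamma}, including the first correction term, to get the stated $o(1)$ rather than just convergence; and (iii) handle the fact that $\eta$ carries its own fluctuation $w$, so the $\erfc$-argument genuinely involves $z+\bar w$ and not just $z$. A clean way to organize (i)–(iii) is to prove a single lemma giving, uniformly for $k$ in the relevant range and for $z,w$ in a fixed compact set, an expansion of the general summand of the form $\frac{1}{N}\big(\text{smooth function of }k/N\big)\big(1+O(N^{-1/2})\big)$ after the appropriate rescaling, and then sum. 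The telescoping/antisymmetrization $\wh{\boldsymbol G}_N(\zeta,\eta)-\wh{\boldsymbol G}_N(\eta,\zeta)$ should be kept until the end, since in the determinantal regime one of the two terms is subdominant and it is cleaner to see the cancellation structure at the level of the final integral. A subsidiary check is that the $\sqrt{2a_NN}\sin\theta$ normalization exactly cancels the $1/\gamma_N$ factors coming from $\prod_j\frac{\bar\zeta_j-\zeta_j}{\gamma_N}$ in \eqref{zeta im}, so that $R_k^{\mathbb C}$ comes out with the correct normalization; this is bookkeeping but worth stating explicitly.
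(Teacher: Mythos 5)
Your overall strategy of inserting the incomplete-Gamma representation from Lemma~\ref{Lem_bfkappa exp} and carrying out a Riemann-sum analysis is the same high-level route as the paper. However, there are two genuine gaps, both of which affect the mechanism of the proof rather than just its rigor.

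\textbf{First}, your claim that ``since $\nu^2 = \frac{a_NN}{2}\eta^2$ has real part $\approx\frac{a_NN}{2}$ again, the transition of the regularized Gamma function $\mathrm Q(a,x)$ with $a \approx x$ is governed by an error-function profile'' is incorrect precisely in the regime of part (a). For $\theta\notin\{0,\pi\}$ the ratio $\nu^2/\tilde a$ (with $\tilde a = k+\frac{b_NN}{2}+1$) is $\approx e^{\pm2i\theta}$, hence bounded away from $1$, and $\Re\nu^2 \approx \frac{a_NN}{2}\cos2\theta\neq \frac{a_NN}{2}$. In that regime $\mathrm{Q}(\tilde a,\tilde a z)$ does not exhibit an $\erfc$ transition; it is governed by the ``outer'' asymptotics of the incomplete Gamma (the paper's Lemma~\ref{Lemma:ameur gamma}, i.e.\ \eqref{Q asymp outer}), which produce an exponential times an algebraic factor, not an $\erfc$. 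The error function in $K^\C$ enters at a completely different stage: after substituting the outer asymptotics, each summand $s_{k,1}$ is a smooth function of $k/N$, and the $\erfc$ appears because the resulting Riemann sum is the truncated Gaussian integral $\int_0^1 e^{\alpha x-\beta x^2}\,dx$. The $\erfc$-transition mechanism you describe is the one that governs part (b) of Theorem~\ref{thm:main} (where $\theta\approx 0,\pi$ and $\nu^2/\tilde a\to 1$), and invoking it here conflates the two regimes.

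\textbf{Second}, your argument that $e_N\to 0$ because the prefactor $e^{(e^{2i\theta}-1)(a_NN+\cdots)}$ has modulus $e^{a_NN(\cos 2\theta-1)}$ and ``this exponential smallness dominates the at-most-polynomial size of $\wt{\bfkappa}_N(\zeta,\eta)$'' rests on a false premise. When $\zeta$ and $\eta$ are both near $e^{i\theta}$ with $\theta\neq 0,\pi$ (as opposed to being near conjugate points), the normalization $e^{-a_NN\zeta\eta}$ is \emph{not} the right one to render the kernel $O(1)$: one has $\Re(a_NN\zeta\eta)\approx a_NN\cos 2\theta$ while the natural weight is $e^{-\frac{a_NN}{2}(|\zeta|^2+|\eta|^2)}\approx e^{-a_NN}$, so $\wt{\bfkappa}_N(\zeta,\eta)$ is in fact exponentially large, of order $e^{a_NN(1-\cos 2\theta)}$ up to algebraic factors, and the two exponentials roughly cancel. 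The paper instead expands the combination $e^{(e^{2i\theta}-1)(\ldots)}e^{-a_NN\zeta\eta}\wh{\boldsymbol G}_N(\zeta,\eta)$ explicitly and observes that after the exact cancellation of exponentials the remaining sum carries oscillating phases $e^{2i\theta k}$ and $e^{4i\theta k}$ which are controlled separately; the smallness of $e_N$ is a consequence of this structure, not of a naive exponential bound dominating a polynomial one.

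Once these two points are repaired, a Riemann-sum proof in the spirit you outline would indeed recover the paper's argument; but as written, both the source of the $\erfc$ and the mechanism for $e_N\to0$ are misidentified.
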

\begin{proof}
Recall that $K_N(z,w)$ is defined in \eqref{KN kappa} in terms of $ \wt{\bfkappa}_N(\zeta,\bar{\eta})$, where $\zeta$ and $\eta$ are given by \eqref{zeta eta}.
Recall also from \eqref{bfkappa wt} that 
\begin{equation} \label{bfkappa wt G}
\wt{\bfkappa}_N(\zeta,\overline{\eta}):=e^{-a_NN \zeta \overline{\eta}}\wh{\bfkappa}_N(\zeta,\overline{\eta}) = e^{-a_NN \zeta \overline{\eta}} \Big( \wh{\boldsymbol{G}}_N(\zeta,\overline{\eta})-\wh{\boldsymbol{G}}_N(\overline{\eta},\zeta) \Big),
\end{equation} 
where $\wh{\bfkappa}_N$ and $\wh{\boldsymbol{G}}_N$ are defined in the statement of Lemma~\ref{Lem_bfkappa exp}. The first step of the proof consists in obtaining the large $N$ asymptotics of the summand in the second expression of \eqref{G bold hat} that is valid uniformly for $k\in \{0,\ldots,N-1\}$. For this, we use \eqref{Q aN bN}, \eqref{zeta eta} and Stirling's formula, and obtain after a direct computation that 
\begin{align}
\sqrt{\pi}\frac{ ( \sqrt{\frac{a_NN}{2}}\, \zeta )^{2k+1+b_NN} }{  \Gamma(k+\tfrac{3}{2}+\tfrac{b_N N}{2}) } &= \frac{\rho}{ N } e^{i\theta(2k+1+b_NN)}  \exp\Big( \frac{N^2}{2\rho^2}+ \frac{\sqrt{2}z}{\rho} N-z^2-\sqrt{2}\rho z-\frac{\rho^2}{4} \Big) \nonumber \\
&\quad \times  \exp\Big( \frac{(2\sqrt{2}\rho z+\rho^2) k}{N}-\rho^2 \Big(\frac{k}{N}\Big)^2 \Big)  \Big(1+O(N^{-1})\Big), \quad \mbox{as } N \to \infty \label{lol4}
\end{align}
uniformly for $k\in \{0,\dots,N-1\}$. 
Applying \eqref{Q asymp outer}, we have 
\begin{align}
& e^{\frac{a_NN}{2} \bar{\eta}^2 } \mathrm{Q}\Big(k+\tfrac{b_NN}{2}+1, \tfrac{a_NN}{2} \overline{\eta}^2\Big)  = \frac{1}{ \sqrt{\pi} } \frac{\rho}{N} \frac{ e^{-i\theta  (2k+2+b_NN) } }{ e^{-2i\theta}-1 } \exp\Big( \frac{ (2\sqrt{2} \rho \bar{w}+\rho^2 )k }{N}-\rho^2\Big(\frac{k}{N}\Big)^2  \Big) \nonumber \\
&\quad \times   \exp\Big( \frac{N^2}{ 2\rho^2 }+\frac{\sqrt{2} \bar{w} }{\rho} N-\bar{w}^2-\sqrt{2}\rho\bar{w}-\frac{\rho^2}{4}    \Big)  \Big(1+O(N^{-1})\Big)  \label{ameur cron}
\end{align}
as $N \to \infty$ uniformly for $k\in \{-1,0,\dots,N-1\}$. 
Combining the above equations with 
 \begin{equation}
	a_NN \zeta \bar{\eta} 
	=\frac{N^2}{\rho^2}+ \frac{\sqrt{2}}{ \rho } (z+\bar{w}) N +2 z \bar{w},  \qquad \frac{ e^{-i\theta}   }{ e^{-2i\theta}-1 } = \frac{i}{2\sin \theta}, 
\end{equation}
we obtain 
\begin{align*}
& s_{k,1} := \frac{\sqrt{2a_NN}\sin \theta}{i} e^{2z\bar{w}}  e^{-a_NN \zeta \bar{\eta} } \sqrt{\pi} \frac{ ( \sqrt{\frac{a_NN}{2}}\, \zeta )^{2k+1+b_NN} }{  \Gamma(k+\tfrac{3}{2}+\tfrac{b_N N}{2}) }  \exp\Big( \frac{a_NN}{2}  \bar{\eta}^2 \Big)  \mathrm{Q}\Big(k+\tfrac{b_NN}{2}+1, \tfrac{a_NN}{2}  \bar{\eta}^2\Big) 
\\
&= \frac{1}{\sqrt{2\pi}} \frac{\rho}{N}   \exp\Big( -z^2-\bar{w}^2-\sqrt{2}\rho(z+\bar{w})-\frac{\rho^2}{2}+\frac{ (2\sqrt{2} (z+\bar{w}) \rho+2\rho^2 )k }{N}-2\rho^2\Big(\frac{k}{N}\Big)^2  \Big)   \Big(1+O(N^{-1})\Big). \\
& s_{k,2}:= \frac{\sqrt{2a_NN}\sin \theta}{i} e^{2z\bar{w}}  e^{-a_NN \zeta \bar{\eta} } \sqrt{\pi} \frac{ ( \sqrt{\frac{a_NN}{2}}\, \zeta )^{2k+1+b_NN} }{  \Gamma(k+\tfrac{3}{2}+\tfrac{b_N N}{2}) }  \exp\Big( \frac{a_NN}{2}  \bar{\eta}^2 \Big)  \mathrm{Q}\Big(\tfrac{b_NN}{2}, \tfrac{a_NN}{2}  \bar{\eta}^2\Big) \\
&= \frac{1}{\sqrt{2 \pi}} \frac{\rho}{N}  e^{ i \theta(2k+2) }  \exp\Big( -z^2-\bar{w}^2-\sqrt{2}\rho(z+\bar{w})-\frac{\rho^2}{2}  \Big)   \Big(1+O(N^{-1})\Big),
\end{align*}
as $N \to \infty$ uniformly for $k \in \{0,\ldots,N-1\}$. Since $\sum_{k=0}^{N-1}   e^{ i \theta(2k+2) }  = O(1)$ as $N \to \infty$, we have $\sum_{k=0}^{N-1}s_{k,2} = O(N^{-1})$ as $N \to \infty$. Hence, using a first order Riemann sum approximation, we obtain
\begin{align} 
& \frac{\sqrt{2a_NN}\sin \theta}{i} e^{2z\bar{w}} e^{-a_NN \zeta \bar{\eta} }  \, \wh{\boldsymbol{G}}_N(\zeta,\bar{\eta}) = \sum_{k=0}^{N-1} (s_{k,1}-s_{k,2}) = O(N^{-1}) + \sum_{k=0}^{N-1} s_{k,1} \nonumber \\
& = \frac{1}{\sqrt{2\pi}}\rho \exp\Big( -z^2-\bar{w}^2-\sqrt{2}\rho(z+\bar{w})-\frac{\rho^2}{2}\Big) \int_0^1 \exp\Big( (2\sqrt{2} (z+\bar{w}) \rho+2\rho^2 )x-2\rho^2 x^2  \Big) \,dx + O(N^{-1}). \nonumber
\end{align}
The above integral can be evaluated explicitly,
\begin{equation}
\int_0^1 \exp\Big( (2\sqrt{2} (z+\bar{w}) \rho+2\rho^2 )x-2\rho^2 x^2  \Big) \,dx= \sqrt{ \frac{\pi}{2} } \frac{1}{\rho} \exp\Big(  z^2+\bar{w}^2+\sqrt{2}\rho(z+\bar{w})+\frac{\rho^2}{2} \Big) K^{\mathbb{C}}(z,w),
\end{equation} 
where $K^{\mathbb{C}}$ is defined in \eqref{K AH}, and therefore
\begin{align} 
\frac{\sqrt{2a_NN}\sin \theta}{i} e^{2z\bar{w}} e^{-a_NN \zeta \bar{\eta} }  \, \wh{\boldsymbol{G}}_N(\zeta,\bar{\eta}) & = \frac{1}{2} K^{\mathbb{C}}(z,w)+O(N^{-1}). \label{G bold hat sum zeta bar eta}
\end{align}
By interchanging $\zeta$ and $\bar{\eta}$ in the above computations and using $e^{i\theta}/(e^{2i\theta}-1)=-i/(2\sin \theta)$, we also obtain 
\begin{equation}  \label{G bold hat sum bar eta zeta}
\frac{\sqrt{2a_NN}\sin \theta}{i} e^{2z\bar{w}}  e^{-a_NN \zeta \bar{\eta} }  \, \wh{\boldsymbol{G}}_N(\bar{\eta},\zeta)= -\frac{1}{2} K^{\mathbb{C}}(z,w)+O(N^{-1}).
\end{equation}
Then by combining \eqref{KN kappa}, \eqref{bfkappa wt G}, \eqref{G bold hat sum zeta bar eta} and \eqref{G bold hat sum bar eta zeta}, we obtain the first asymptotics of \eqref{KN eN asymp}. To prove the second part of \eqref{KN eN asymp}, we use the formula \eqref{eN kappa} for $e_{N}$ together with \eqref{bfkappa wt G} (with $\overline{\eta}$ replaced by $\eta$) to write
\begin{align}\label{eN in proof}
 \hspace{0.7cm}  e_N(z,w):=  \frac{\sqrt{2a_NN}\sin \theta}{i} e^{  (e^{2i\theta}-1) (a_NN+\sqrt{2a_NN}(z+w) )+2e^{2i\theta} zw } e^{-a_NN \zeta \eta} \Big( \wh{\boldsymbol{G}}_N(\zeta,\eta)-\wh{\boldsymbol{G}}_N(\eta,\zeta) \Big).
\end{align}
The asymptotics of $\wh{\boldsymbol{G}}_N$ can be obtained using \eqref{lol4} and \eqref{ameur cron} (with $\overline{\eta}$, $\overline{w}$ and $-\theta$ replaced by $\eta$, $w$ and $\theta$, respectively). The above exponential $e^{  (e^{2i\theta}-1) (a_NN+\sqrt{2a_NN}(z+w) )+2e^{2i\theta} zw }$ get perfectly cancelled in the asymptotics; indeed, using $a_NN \zeta \eta = e^{2i\theta} \big( \frac{N^2}{\rho^2}+ \frac{\sqrt{2}}{ \rho } (z+w) N +2 z w \big)$, we obtain
\begin{align*}
& \frac{\sqrt{2a_NN}\sin \theta}{i} e^{  (e^{2i\theta}-1) (a_NN+\sqrt{2a_NN}(z+w) )+2e^{2i\theta} zw } e^{-a_NN \zeta \eta} \wh{\boldsymbol{G}}_N(\zeta,\eta) \\
& = \frac{1}{N}\frac{\sqrt{2} \, \rho \sin \theta}{ \sqrt{\pi} \, i} \sum_{k=0}^{N-1}  e^{i\theta(2k+1+b_NN)} \exp\Big( -z^2-w^2-\sqrt{2}\rho (z+w)-\frac{\rho^2}{2} \Big) \nonumber \\
&\quad \times \bigg[ \exp\Big( \frac{(2\sqrt{2}\rho (z+w)+2\rho^2) k}{N}-2 \Big(\frac{\rho k}{N}\Big)^2 \Big)  \frac{ e^{i\theta  (2k+2+b_NN) } }{ e^{2i\theta}-1 } -  \frac{ e^{i\theta  b_NN } }{ e^{2i\theta}-1 }  \bigg]  \Big(1+O(N^{-1})\Big),
\end{align*}
as $N \to +\infty$ uniformly for $k \in \{0,\ldots,N-1\}$. By \eqref{eN in proof}, we thus have
\begin{align*}
e_{N}(z,w) = \frac{1}{N} O \bigg( \sum_{k=0}^{N-1} N^{-1} \bigg) = O(N^{-1}), \qquad \mbox{as } N \to + \infty.
\end{align*}
\end{proof}

\subsection{Proof of Theorem~\ref{thm:main} (b)}\label{Subsec_real limit}
Recall that $t \in \mathbb{R}$ and $\theta_{N} := \gamma_{N}t = \frac{\sqrt{2} \, \rho}{N}t$.  The two cases $p=e^{i\theta_N}$ and $p=-e^{i\theta_N}$ are similar, so to avoid repetition we will only consider the case $p:=e^{i\theta_{N}}$. 
Recall that $a_N$ and $b_N$ are defined by \eqref{Q aN bN}. In this subsection, given $z, w \in \mathbb{C}$, we define $\zeta$ and $\eta$ as in \eqref{zeta eta} with $\theta= \theta_{N}$, namely
\begin{align}\label{zeta eta theta N}
\zeta=e^{i\theta_{N}} \Big( 1+ \sqrt{ \frac{2}{a_N N} }\,z \Big), \qquad \eta=e^{i\theta_{N}} \Big( 1+ \sqrt{ \frac{2}{a_N N} }\,w \Big),
\end{align}
so that $\mu$ and $\nu$ in \eqref{mu nu} become 
\begin{equation} \label{mu nu N other}
\mu:=\sqrt{ \frac{a_NN}{2} } \zeta =e^{i\theta_{N}}\Big(z+ \sqrt{ \frac{a_N N}{2} }\Big), \qquad \nu:=\sqrt{ \frac{a_NN}{2} } \eta =e^{i\theta_{N}}\Big(w+ \sqrt{ \frac{a_N N}{2} }\Big).
\end{equation}
In particular, $\zeta,\eta,\mu,\nu$ always depend on $N$ in this subsection, although this is not indicated in the notation. 
The following lemma is a rather direct consequence of Proposition~\ref{Prop_CDI} and Lemma \ref{lemma:exact identity Rnk}. 

\begin{lem} \label{Lem_CDI real}
Let $t \in \mathbb{R}$ and $\rho>0$ be fixed, and $p:=e^{i\theta_{N}}$. As $N \to \infty$
\begin{align}
R_{N,k}(z_1,\dots, z_k) & = \Pf \Big[ e^{ -|z_j+it|^2-|z_l+it|^2 }   \begin{pmatrix} 
\wt{\bfkappa}_N(\zeta_j,\zeta_l) e^{2(z_j+it)(z_l+it)} & \wt{\bfkappa}_N(\zeta_j,\bar{\zeta}_l) e^{2(z_j+it)(\bar{z}_l-it)} \smallskip \\
\wt{\bfkappa}_N(\bar{\zeta}_j,\zeta_l) e^{2(\bar{z}_j-it)(z_l+it) } & \wt{\bfkappa}_N(\bar{\zeta}_j,\bar{\zeta}_l) e^{ 2 (\bar{z}_j-it)(\bar{z}_l-it)  }
\end{pmatrix}   \Big]_{ j,l=1 }^k  \nonumber
\\
&\quad \times \prod_{j=1}^{k} (\bar{z}_j-z_j-2it) + o(1), 	\label{lol5}
\end{align}
uniformly for $z_{1},\ldots,z_{k}$ in compact subsets of $\mathbb{C}$, where $\zeta_{j}:= p+ p \gamma_N z_j$ for $j=1,\ldots,k$. Furthermore, for any $z,w \in \mathbb{C}$, we have
\begin{align}
e^{-i\theta_{N}} \frac{d}{dz} \wt{\bfkappa}_N(\zeta,\eta)&= 2e^{i\theta_{N}}(z-w) \wt{\bfkappa}_N(\zeta,\eta) + 2 \Big( \mathrm{Q}(2N+b_NN,2\mu \nu)-\mathrm{Q}(b_NN, 2\mu \nu) \Big) \nonumber 
\\
& - 2\sqrt{\pi} \,e^{e^{2i\theta_{N}}(z-w)^2}\frac{ \mu^{2N+b_N N}   }{    \Gamma(N+\tfrac{1}{2}+\tfrac{b_N N}{2}) } \,e^{-\mu^2} \Big( \mathrm{Q}(N+\tfrac{b_NN}{2}, \nu^2)-\mathrm{Q}(\tfrac{b_NN}{2},\nu^2) \Big) \nonumber
\\
& - 2  \sqrt{\pi} \,e^{e^{2i\theta_{N}}(z-w)^2} \frac{  \mu^{b_N N-1} }{ \Gamma(\frac{b_N N }{2}) } \,e^{-\mu^2}  \Big( \mathrm{Q}(N+\tfrac{b_NN+1}{2}, \nu^2)-\mathrm{Q}(\tfrac{b_NN+1}{2},\nu^2) \Big), \label{CDI kappa tilde}
\end{align}
where $\zeta =\zeta(z)$ and $\eta=\eta(w)$ are as in \eqref{zeta eta theta N}.

\end{lem}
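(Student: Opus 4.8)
\textbf{Proof plan for Lemma~\ref{Lem_CDI real}.}

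The plan is to derive the two displayed formulas by combining the exact identities already established. The second formula \eqref{CDI kappa tilde} is the easier of the two: it is essentially just a re-parametrization of Proposition~\ref{Prop_CDI}. I would start from \eqref{CDI bfkappa}, specialize $\theta = \theta_N$, and substitute the relations \eqref{zeta eta theta N} and \eqref{mu nu N other}. The chain rule gives $\partial_\zeta = e^{-i\theta_N}\sqrt{\tfrac{a_NN}{2}}\,\tfrac{d}{dz}$, which accounts for the prefactor $e^{-i\theta_N}$ on the left (after clearing the $\sqrt{2/(a_NN)}$). For the first term on the right of \eqref{CDI bfkappa} one computes $\mu - \nu = e^{i\theta_N}(z-w)$ directly from \eqref{mu nu N other}, giving the coefficient $2e^{i\theta_N}(z-w)$. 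For the last two terms, note that $\nu^2 - 2\mu\nu$ appears in \eqref{CDI bfkappa}; writing $\nu^2 - 2\mu\nu = (\nu-\mu)^2 - \mu^2 = e^{2i\theta_N}(w-z)^2 - \mu^2$ and using $(w-z)^2 = (z-w)^2$ converts the exponential factors in \eqref{CDI bfkappa} into $e^{e^{2i\theta_N}(z-w)^2} e^{-\mu^2}$, exactly as written in \eqref{CDI kappa tilde}. This step is routine bookkeeping.

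For the first formula \eqref{lol5}, I would invoke Lemma~\ref{lemma:exact identity Rnk} with $\theta = \theta_N$ and track how the prefactors in the Pfaffian simplify as $N\to\infty$. The key point is that $\theta_N = \gamma_N t = \tfrac{\sqrt{2}\rho}{N}t \to 0$, so all the $\theta_N$-dependent factors in Lemma~\ref{lemma:exact identity Rnk} that look like $(1-e^{\pm 2i\theta_N}) a_N N$ and $\sqrt{2a_NN}(z_j+z_l)(1-e^{\pm 2i\theta_N})$ must be expanded. Since $1 - e^{2i\theta_N} = -2i\theta_N + O(\theta_N^2) = -\tfrac{2\sqrt{2}i\rho t}{N} + O(N^{-2})$ and $a_N N = N^2/\rho^2$, one gets $(1-e^{2i\theta_N})a_N N = -\tfrac{2\sqrt{2}i t N}{\rho} + O(1)$, which combines with $\sqrt{2a_NN} = \tfrac{\sqrt{2}N}{\rho}$ factors. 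The upshot should be that $e^{2i\theta_N} z_j z_l \to z_j z_l$ and the remaining exponent reorganizes precisely into the shift $z_j \mapsto z_j + it$, $\bar z_j \mapsto \bar z_j - it$: indeed $|z_j|^2 + |z_l|^2 - 2z_j z_l$ together with the linear-in-$(z_j+z_l)$ correction of size $it$ and a constant $-2t^2$-type term assembles into $|z_j+it|^2 + |z_l+it|^2 - 2(z_j+it)(z_l+it)$ up to the symmetric imaginary pieces that cancel in the Pfaffian (just as in the proof of Lemma~\ref{lemma:exact identity Rnk}). Similarly $\tfrac{\bar\zeta_j - \zeta_j}{\gamma_N} = \tfrac{2\sin\theta_N}{i\gamma_N} + e^{-i\theta_N}\bar z_j - e^{i\theta_N} z_j \to 2t + \bar z_j - z_j = \bar z_j - z_j - 2it$, which gives the product $\prod_j(\bar z_j - z_j - 2it)$.

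The main obstacle I anticipate is the careful combinatorics of the $(1,1)$ and $(2,2)$ entries: one must verify that the $O(N)$ and $O(N^0)$ terms in the exponents (coming from $(1-e^{\pm 2i\theta_N})a_NN$ expanded to two orders, plus the $\sqrt{2a_NN}$-linear terms expanded to one order) conspire to produce exactly the completed squares $e^{2(z_j+it)(z_l+it)}$ after factoring out $e^{-|z_j+it|^2 - |z_l+it|^2}$, with no leftover $N$-dependent divergence and with all purely-imaginary leftovers being of the row/column form $c(z_j) + c(z_l)$ so that they drop out of the Pfaffian (via the conjugation $\bfkappa_N \mapsto g_N(\zeta)g_N(\eta)\bfkappa_N$ invariance of Remark~\ref{remark:non-uniqueness of the kernel}, or directly). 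I would organize this by writing $\zeta = e^{i\theta_N}(1 + \gamma_N z)$ and computing $\tfrac{a_NN}{2}(|\zeta|^2 + |\eta|^2 - 2\zeta\eta)$ exactly, then Taylor-expanding in $\theta_N$ while keeping $z,w$ fixed; the identity $e^{i\theta_N} = 1 + i\theta_N\sqrt{2/(a_NN)}\cdot(\text{stuff})$ is not quite right so one should just expand $e^{i\theta_N}$ honestly. The cancellation of imaginary parts works exactly as in Lemma~\ref{lemma:exact identity Rnk}, so the genuinely new content is only confirming that the real $N$-divergent pieces cancel and the finite remainder is the claimed shifted Gaussian, which is a finite computation.
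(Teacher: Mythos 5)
Your proposal is correct and matches the paper's proof: \eqref{CDI kappa tilde} is the direct substitution $\zeta=e^{i\theta_N}(1+\gamma_N z)$, $\eta=e^{i\theta_N}(1+\gamma_N w)$ into Proposition~\ref{Prop_CDI} using $\partial_\zeta = e^{-i\theta_N}\sqrt{a_NN/2}\,\partial_z$, $\mu-\nu=e^{i\theta_N}(z-w)$ and $\nu^2-2\mu\nu=e^{2i\theta_N}(z-w)^2-\mu^2$, and \eqref{lol5} comes from Lemma~\ref{lemma:exact identity Rnk} with $\theta=\theta_N$ followed by the large-$N$ Taylor expansion of the exponents (using $(1-e^{2i\theta_N})a_NN=-\tfrac{2\sqrt{2}itN}{\rho}+4t^2+O(N^{-1})$ and $\sqrt{2a_NN}(1-e^{2i\theta_N})=-4it+O(N^{-1})$), exactly as you outline, with the row/column imaginary pieces cancelling in the Pfaffian. (Minor slip: in your limit for the prefactor you should have $\tfrac{2\sin\theta_N}{i\gamma_N}\to\tfrac{2t}{i}=-2it$, not $2t$, though your final expression $\bar z_j-z_j-2it$ is nevertheless correct.)
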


\begin{proof}
The expansion \eqref{lol5} directly follows from Lemma \ref{lemma:exact identity Rnk} with $\theta=\theta_{N}$. To obtain the exponentials inside the Pfaffian in \eqref{lol5}, we have used the following large $N$ expansion 
\begin{align*}
&|z_{j}|^{2}+|z_{l}|^{2}-2e^{2i\theta_N}z_{j}z_{l}+(1-e^{2i\theta_N})a_{N}N+\sqrt{2a_{N}N}(z_{j}+z_{l})(1-e^{2i\theta_N})\\
&=  |z_{j}+it|^{2}+|z_{l}+it|^{2}-2 (z_{j}+it)(z_{l}+it) -2it \Re(z_j+z_l)  -\frac{ 2\sqrt{2} t N }{\rho } i +o(1),
\end{align*}
the identity $|z_{j}|^{2}+|z_{l}|^{2}-2z_{j}\overline{z}_{l} =  |z_{j}+it|^{2}+|z_{l}+it|^{2}-2(z_{j}+it)(\overline{z}_{l}-it)- 2it \Re(z_j-z_l)$, and the fact that the terms containing $\Re(z_j+z_l)$, $\Re(z_j-z_l)$ and $\frac{2\sqrt{2}tN}{\rho}i$ cancel out when computing the Pfaffian. 
The differential identity \eqref{CDI kappa tilde} immediately follows from \eqref{zeta eta theta N}, $\partial_{\zeta} = e^{-i\theta_{N}} \sqrt{ \frac{a_N N}{2} }\partial_{z}$ and Proposition~\ref{Prop_CDI}.
\end{proof}

We now derive the large $N$ asymptotics of $\pa_z \wt{\bfkappa}_N(\zeta,\eta)$ using the right-hand side of \eqref{CDI kappa tilde}. 

\begin{lem} \label{Lem_CDI largeN real}
Let $t \in \mathbb{R}$ and $\rho>0$ be fixed. Let $z_{t}:=z+it$ and $w_{t}:=w+it$. As $N \to \infty$, we have
\begin{align}
\pa_z \wt{\bfkappa}_N(\zeta,\eta)&= 2(z_{t}-w_{t}) \wt{\bfkappa}_N(\zeta,\eta) +  \erfc(z_{t}+w_{t}-\tfrac{\rho}{\sqrt{2}})-  \erfc(z_{t}+w_{t}+\tfrac{\rho}{\sqrt{2}})  \nonumber \\
& -  \frac{e^{(z_{t}-w_{t})^2}}{\sqrt{2}} \Big( e^{ -(\sqrt{2}z_{t}-\frac{\rho}{2})^2 }+ e^{ -(\sqrt{2}z_{t}+\frac{\rho}{2})^2 }  \Big)\Big( \erfc(\sqrt{2}w_{t}-\tfrac{\rho}{2})-\erfc(\sqrt{2}w_{t}+\tfrac{\rho}{2}) \Big)+o(1), \label{kappa tilde ODE lim}
\end{align}
uniformly for $z,w$ in compact subsets of $\C$, where $\zeta$ and $\eta$ are as in \eqref{zeta eta theta N}.
\end{lem}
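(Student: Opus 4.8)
The strategy is to take the explicit differential identity \eqref{CDI kappa tilde} and send $N\to\infty$ term by term, tracking the four groups on its right-hand side. The left-hand side $e^{-i\theta_N}\frac{d}{dz}\wt{\bfkappa}_N(\zeta,\eta)$ has $e^{-i\theta_N}=1+o(1)$ and $e^{i\theta_N}(z-w)=(z-w)+o(1)$, so the first term becomes $2(z-w)\wt{\bfkappa}_N$, and once we recenter via $z_t=z+it$, $w_t=w+it$ (recall $\theta_N=\gamma_N t$, so the shift $e^{i\theta_N}$ acts, to leading order, as a translation by $it$ in the rescaled variables — this is exactly the $z+it$ bookkeeping already used in Lemma~\ref{Lem_CDI real} and in the statement of Theorem~\ref{thm:main}(b)), we get the $2(z_t-w_t)\wt{\bfkappa}_N$ term. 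The remaining work is purely asymptotic analysis of incomplete gamma functions.

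\textbf{Key steps.} First, for the term $2\big(\mathrm{Q}(2N+b_NN,2\mu\nu)-\mathrm{Q}(b_NN,2\mu\nu)\big)$: using \eqref{Q aN bN}, $b_NN = \frac{N^2}{\rho^2}-N$, and \eqref{mu nu N other}, one computes $2\mu\nu = 2e^{2i\theta_N}(z+\sqrt{a_NN/2})(w+\sqrt{a_NN/2})$, whose leading behaviour is $\frac{N^2}{\rho^2}+\frac{\sqrt2}{\rho}(z+w)N+2zw+\cdots$ up to the $e^{2i\theta_N}$ factor which contributes an extra $2i t N/\rho$-type phase. The parameters $2N+b_NN = \frac{N^2}{\rho^2}+N$ and $b_NN=\frac{N^2}{\rho^2}-N$ both sit in the regime where $\mathrm{Q}(a,x)$ transitions from $1$ to $0$ as $x$ crosses $a$; the relevant scale is $\sqrt{a}\sim N/\rho$, and since $a - 2\mu\nu$ is of order $N$ (explicitly, for $a=2N+b_NN$ one gets $a-2\mu\nu \approx N - \frac{\sqrt2}{\rho}(z+w)N\cdot(\ldots)$), after dividing by the Gaussian width $\sqrt{a}$ one lands in the error-function regime. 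Concretely, the uniform asymptotics for the incomplete gamma function in the transition regime (of the type recalled in Appendix~\ref{appendix_gamma}, e.g. $\mathrm{Q}(a,a+\sqrt{2a}\,\xi)\to\frac12\erfc(\xi)$) give $\mathrm{Q}(2N+b_NN,2\mu\nu)\to\frac12\erfc\big(z_t+w_t+\frac{\rho}{\sqrt2}\big)$ and $\mathrm{Q}(b_NN,2\mu\nu)\to\frac12\erfc\big(z_t+w_t-\frac{\rho}{\sqrt2}\big)$, so that $2$ times the difference yields exactly $\erfc(z_t+w_t-\frac{\rho}{\sqrt2})-\erfc(z_t+w_t+\frac{\rho}{\sqrt2})$, matching the second line of \eqref{kappa tilde ODE lim}. (One must carefully check that the spurious phases from $e^{2i\theta_N}$ cancel or are absorbed; this is the same cancellation mechanism already exploited in Lemma~\ref{Lem_CDI real}.)

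\textbf{The two boundary terms.} For the third and fourth lines of \eqref{CDI kappa tilde} one handles the prefactor and the bracket separately. For the prefactor $\sqrt{\pi}\,\mu^{2N+b_NN}e^{-\mu^2}/\Gamma(N+\tfrac12+\tfrac{b_NN}{2})$: Stirling's formula applied to $\Gamma(N+\tfrac12+\tfrac{b_NN}{2})=\Gamma(\tfrac{N^2}{2\rho^2}+\tfrac12)$ together with $\mu^{2N+b_NN} = \mu^{N^2/\rho^2+N}$ and $e^{-\mu^2}$, with $\mu = e^{i\theta_N}(z+\sqrt{a_NN/2})$, produces — after the dust settles — a clean Gaussian $\frac{1}{\sqrt2}e^{-(\sqrt2 z_t-\rho/2)^2}$ (this is the analogue of the computation \eqref{lol4}); similarly $\sqrt{\pi}\,\mu^{b_NN-1}e^{-\mu^2}/\Gamma(\tfrac{b_NN}{2}) = \sqrt{\pi}\,\mu^{N^2/\rho^2-N-1}e^{-\mu^2}/\Gamma(\tfrac{N^2}{2\rho^2}-\tfrac{N}{2})$ gives $\frac{1}{\sqrt2}e^{-(\sqrt2 z_t+\rho/2)^2}$. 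Meanwhile the brackets $\mathrm{Q}(N+\tfrac{b_NN}{2},\nu^2)-\mathrm{Q}(\tfrac{b_NN}{2},\nu^2)$ and $\mathrm{Q}(N+\tfrac{b_NN+1}{2},\nu^2)-\mathrm{Q}(\tfrac{b_NN+1}{2},\nu^2)$ are again in the error-function transition regime: $\nu^2 = e^{2i\theta_N}(w+\sqrt{a_NN/2})^2$ has leading term $\frac{N^2}{2\rho^2}+\cdots$, the upper/lower parameters differ from $\nu^2$ by amounts of order $N$, and one gets $\to\frac12\erfc(\sqrt2 w_t-\tfrac\rho2) - \frac12\erfc(\sqrt2 w_t+\tfrac\rho2)$ for each (the $\pm\tfrac12$ shift in the parameter being asymptotically negligible). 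The factor $e^{e^{2i\theta_N}(z-w)^2}\to e^{(z_t-w_t)^2}$ since $(z-w)^2=(z_t-w_t)^2$ and $e^{2i\theta_N}\to1$. Collecting: each of the two boundary contributions becomes $2\cdot\frac{1}{\sqrt2}e^{(z_t-w_t)^2}e^{-(\sqrt2 z_t\mp\rho/2)^2}\cdot\frac12\big(\erfc(\sqrt2 w_t-\tfrac\rho2)-\erfc(\sqrt2 w_t+\tfrac\rho2)\big)$, and their sum with the overall minus sign in \eqref{CDI kappa tilde} reproduces precisely the third line of \eqref{kappa tilde ODE lim}.

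\textbf{Main obstacle.} The delicate point is establishing the incomplete-gamma asymptotics \emph{uniformly} in the complex translation data ($z,w$ ranging over compacts, with the $it$-shift built in through $\theta_N$), and in particular controlling the oscillatory phases $e^{2i\theta_N}$, $\mu^{2N+b_NN}$, etc., which are each of modulus potentially growing or oscillating wildly but must combine to give something bounded. The proof will lean on the fact — already used repeatedly in Lemma~\ref{Lem_eK asym} — that $a_NN\zeta\eta$, $a_NN\zeta\bar\eta$, $\mu^2$, $\nu^2$ all have explicit leading expansions in which the large ($O(N^2)$ and $O(N)$) terms are real or cancel against the Stirling contributions, leaving only the $O(1)$ terms that survive in the limit. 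I would organize the computation so that one first writes each of the four right-hand-side terms of \eqref{CDI kappa tilde} as (explicit large-$N$ prefactor)$\times$(bounded incomplete-gamma ratio), verifies the large prefactors telescope to $O(1)$, and only then invokes the transition-regime asymptotics for $\mathrm{Q}$ from Appendix~\ref{appendix_gamma}. The rest is bookkeeping of the $z\mapsto z+it$ recentering, identical in spirit to what was already done in Lemma~\ref{Lem_CDI real}.
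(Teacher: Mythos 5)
Your approach mirrors the paper's proof: start from the Christoffel--Darboux identity \eqref{CDI kappa tilde}, apply the transition-regime asymptotics $\mathrm{Q}(s+1,s+\sqrt{2s}\,\xi)=\tfrac12\erfc(\xi)+O(1/\sqrt{s})$ (paper's \eqref{Q erfc}, from \cite[Eq.(8.11.10)]{olver2010nist}) to the four $\mathrm{Q}$-brackets, use Stirling's formula for the two $\Gamma$-prefactors, and note that $e^{2i\theta_N}\to1$ while the phase in $\mu,\nu$ produces the $+it$ recentering. Two bookkeeping slips would need correcting in a full writeup. First, your two intermediate limits are swapped: since $2N+b_NN=\tfrac{N^2}{\rho^2}+N$ lies $\approx N$ \emph{above} the mean of $2\mu\nu$ while $b_NN=\tfrac{N^2}{\rho^2}-N$ lies $\approx N$ \emph{below} it, one actually gets $\mathrm{Q}(2N+b_NN,2\mu\nu)\to\tfrac12\erfc(z_t+w_t-\tfrac{\rho}{\sqrt2})$ and $\mathrm{Q}(b_NN,2\mu\nu)\to\tfrac12\erfc(z_t+w_t+\tfrac{\rho}{\sqrt2})$; as written, your two claims contradict the very conclusion you draw from them in the same sentence. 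Second, $N+\tfrac12+\tfrac{b_NN}{2}=\tfrac{N^2}{2\rho^2}+\tfrac{N}{2}+\tfrac12$, not $\tfrac{N^2}{2\rho^2}+\tfrac12$; the dropped $\tfrac{N}{2}$ is precisely what makes the Stirling computation of $\sqrt{\pi}\,\mu^{2N+b_NN}e^{-\mu^2}/\Gamma(N+\tfrac12+\tfrac{b_NN}{2})$ land on the Gaussian $\tfrac{1}{\sqrt2}e^{-(\sqrt2 z_t-\rho/2)^2}$ you quote. Neither issue is a conceptual gap --- the route is the right one --- but both must be fixed for the proof to stand.
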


\begin{proof}[Proof of Lemma~\ref{Lem_CDI largeN real}]
By \eqref{Q aN bN} and \eqref{mu nu}, we have 
\begin{align*}
& 2N+b_NN= \frac{N^2}{\rho^2}+N, & & b_NN=\frac{N^2}{\rho^2}-N, \\
& 2\mu \nu= \frac{N^2}{\rho^2}+\frac{\sqrt{2}}{\rho} (z_{t}+w_{t})N+2z_{t}w_{t}+o(1), & & \nu^2= \frac{N^2}{2\rho^2}+\frac{\sqrt{2}}{\rho} w_{t} N+w_{t}^2 +o(1)
\end{align*}
as $N \to \infty$ uniformly for $z,w$ in compact subsets of $\C$. By \cite[Eq.(8.11.10)]{olver2010nist}, 
\begin{equation} \label{Q erfc}
\mathrm{Q}(s+1,s+\sqrt{2s}z)= \frac12 \erfc(z)+ \frac{1}{3} \sqrt{ \frac{2}{\pi } } (1+z^2)e^{-z^2} \frac{1}{\sqrt{s}} +  O(1/s), \qquad s\to +\infty 
\end{equation}
uniformly for $z$ in compact subsets of $\C$. 
It readily follows from \eqref{Q erfc} that
\begin{align}
& \mathrm{Q}(2N+b_NN,2\mu \nu)-\mathrm{Q}(b_NN, 2\mu \nu) = \frac12 \Big( \erfc(z_{t}+w_{t}-\tfrac{\rho}{\sqrt{2}})-  \erfc(z_{t}+w_{t}+\tfrac{\rho}{\sqrt{2}}) \Big)+o(1), \label{inho1} \\
& \mathrm{Q}(N+\tfrac{b_NN}{2}, \nu^2)-\mathrm{Q}(\tfrac{b_NN}{2},\nu^2) = \frac12 \Big( \erfc(\sqrt{2}w_{t}-\tfrac{\rho}{2})-\erfc(\sqrt{2}w_{t}+\tfrac{\rho}{2}) \Big) +o(1), \label{inho2} \\
& \mathrm{Q}(N+\tfrac{b_NN+1}{2}, \nu^2)-\mathrm{Q}(\tfrac{b_NN+1}{2},\nu^2)  = \frac12 \Big( \erfc(\sqrt{2}w_{t}-\tfrac{\rho}{2})-\erfc(\sqrt{2}w_{t}+\tfrac{\rho}{2}) \Big)+o(1), \label{inho3}
\end{align}
as $N \to \infty$ uniformly for $z,w$ in compact subsets of $\C$. Also, by Stirling's formula, we have
\begin{equation} \label{inho2-2 3}
\frac{ \mu^{2N+b_N N}  e^{-\mu^2} }{    \Gamma(N+\tfrac{1}{2}+\tfrac{b_N N}{2}) } = \frac{e^{ -(\sqrt{2}z_{t}-\frac{\rho}{2})^2 }}{\sqrt{2\pi}} +o(1), \qquad \frac{  \mu^{b_N N-1} }{ \Gamma(\frac{b_N N }{2}) } \,e^{-\mu^2}  = \frac{e^{ -(\sqrt{2}z_{t}+\frac{\rho}{2})^2 }}{\sqrt{2\pi}} +o(1),
\end{equation}
as $N \to \infty$ uniformly for $z$ in compact subsets of $\C$. Combining \eqref{inho1}, \eqref{inho2}, \eqref{inho3} and \eqref{inho2-2 3}, we obtain \eqref{kappa tilde ODE lim}.
\end{proof}

We now finish the proof of Theorem~\ref{thm:main} (b).

\begin{proof}[Proof of Theorem~\ref{thm:main} (b)]
The proof can be summarized as follows: we first use Lemma \ref{Lem_CDI largeN real} to obtain large $N$ asymptotics for $\wt{\bfkappa}_N(\zeta,\eta)$. We then substitute these asymptotics in \eqref{lol5} to obtain the leading order large $N$ behavior of $R_{N,k}(z_1,\dots, z_k)$.

\medskip In Lemma~\ref{Lem_CDI largeN real}, we have derived \eqref{kappa tilde ODE lim}, which can be seen as a family of ODE (indexed by $N$) of the form $\partial_{z}\wt{\bfkappa}_N = c_{0}(z) \wt{\bfkappa}_N + c_{1}(z) + \mathcal{E}_{N}(z)$ where $\mathcal{E}_{N}(z) \to 0$ as $N \to + \infty$ uniformly for $z$ in compact subsets of $\mathbb{C}$.
By \cite[Lemma 3.10]{byun2021universal}, the limit 
\begin{align*}
\widetilde{\kappa}(z,w) := \lim_{N\to +\infty} \wt{\bfkappa}_N(\zeta,\eta) =  \lim_{N\to +\infty} \wt{\bfkappa}_N\big(e^{i\theta_{N}}(1+\gamma_{N}z),e^{i\theta_{N}}(1+\gamma_{N}w)\big)
\end{align*}
exists for all $z\in \mathbb{C}$, is analytic and satisfies $\partial_{z}\widetilde{\kappa}=c_{0}(z)\widetilde{\kappa}+c_{1}(z)$ and $\partial_{z}\widetilde{\kappa}|_{z=w}=0$. More precisely, we have
\begin{align}
\pa_z \wt{\kappa}(z,w) & = 2(z_{t}-w_{t}) \wt{\kappa}(z,w) +  \erfc(z_{t}+w_{t}-\tfrac{\rho}{\sqrt{2}})-  \erfc(z_{t}+w_{t}+\tfrac{\rho}{\sqrt{2}}) \nonumber \\
& -  \frac{e^{(z_{t}-w_{t})^2}}{\sqrt{2}} \Big( e^{ -(\sqrt{2}z_{t}-\frac{\rho}{2})^2 }+ e^{ -(\sqrt{2}z_{t}+\frac{\rho}{2})^2 }  \Big)\Big( \erfc(\sqrt{2}w_{t}-\tfrac{\rho}{2})-\erfc(\sqrt{2}w_{t}+\tfrac{\rho}{2}) \Big). \label{ODE wt kappa}
\end{align}
For a given $w$, we view \eqref{ODE wt kappa} as a first order ODE in $z$ with the initial condition $\wt{\kappa}(w,w)=0$. Since $c_{0}(z)$ and $c_{1}(z)$ are analytic, uniqueness of the solution to this ODE follows from standard theory. 

To obtain the solution of \eqref{ODE wt kappa}, we first rewrite it as
\begin{align}
\pa_z \wt{\kappa}(z,w) & = 2(z_{t}-w_{t}) \wt{\kappa}(z,w) + e^{(z-w)^{2}}\big(\partial_{z}F_{1}(z_{t},w_{t}) + \partial_{z}F_{2}(z_{t},w_{t})\big), \label{ODE wt kappa new}
\end{align}
where
\begin{align*}
& F_{1}(z,w):=\frac{ 1 }{\sqrt{2}} \int_{-a}^a  \big( e^{-2(z-u)^2  } \erfc( \sqrt{2}(w-u) )- e^{-2(w-u)^2  } \erfc( \sqrt{2}(z-u) ) \big) du, \\
& F_{2}(z,w) := \frac{\sqrt{\pi}}{4}  \Big[\erfc(\sqrt{2}(z+a)) \erfc(\sqrt{2}(w-a)) - \erfc(\sqrt{2}(z-a)) \erfc(\sqrt{2}(w+a))  \Big].
\end{align*}
Indeed, using integration by parts, we obtain
\begin{align}
\partial_z F_{1}(z,w) & =
e^{-(z-w)^2} \Big( \erfc(z+w-2a)-\erfc(z+w+2a) \Big) \nonumber \\
&-\frac{ 1   }{ \sqrt{2} } \Big( e^{-2(z-a)^2} \erfc(\sqrt{2}(w-a)) - e^{-2(z+a)^2}\erfc( \sqrt{2}(w+a) ) \Big), \label{F deri}  \\
\partial_z F_{2}(z,w) & = \frac{1}{\sqrt{2}} \Big( e^{-2(z-a)^2} \erfc(\sqrt{2}(w+a)) - e^{-2(z+a)^2}\erfc( \sqrt{2}(w-a) ) \Big). \label{F2 deri}
\end{align}
Note that $F_{1}(w,w)=F_{2}(w,w)=0$. It is now readily checked that the unique solution of \eqref{ODE wt kappa new} satisfying $\wt{\kappa}(w,w)=0$ is given by
\begin{equation}
\wt{\kappa}(z,w):=e^{(z-w)^2} \Big[ F_{1}(z_{t},w_{t}) +F_{2}(z_{t},w_{t})  \Big] = e^{-2z_{t}w_{t}} \kappa^{\mathbb{R}}(z_{t},w_{t}),
\end{equation}
where $\kappa^{\mathbb{R}}(z,w)$ is given by \eqref{kappa Wronskian}. 
\end{proof}

\subsection{Proof of Theorem~\ref{thm:main} (c)} \label{Subsec_scaling interpolation}
We first obtain the large $t$ asymptotics of 
\begin{align*}
e^{-|z+it|^2-|w+it|^2}\kappa^{\mathbb{R}}(z+it,\overline{w}-it) \qquad \mbox{ and } \qquad e^{-|z+it|^2-|w+it|^2}\kappa^{\mathbb{R}}(z+it,w+it),
\end{align*}
where $\kappa^{\mathbb{R}}$ is defined in \eqref{kappa Wronskian}. Using the well-known $z\to + \infty$ asymptotics of $\erfc(z)$ (see e.g. \cite[Eq.(7.12.1)]{olver2010nist}), we obtain
\begin{align*}
& \sqrt{\pi} e^{-|z+it|^2-|w+it|^2} e^{(z+it)^2+(\bar{w}-it)^2}  \int_{-a}^a W(f_{\bar{w}-it},f_{z+it})(u) \, du =  \frac{i}{2t}   e^{ -|z|^2-|w|^2 } K^\C(z,w) \, c(z,w) + O(t^{-2}),
\end{align*}
as $t \to \infty$ uniformly for $z$ and $w$ in compact subsets of $\mathbb{C}$, where $c(z,w):=e^{ -(z+\bar{z})it+(w+\bar{w})it }$ satisfies $c(z,w)=1/c(w,z)$ is a therefore an unimportant cocycle. We also have  
\begin{equation}
\sqrt{\pi}	e^{-|z+it|^2-|w+it|^2} e^{(z+it)^2+(\bar{w}-it)^2}   \Big( f_w(a)f_z(-a) -f_z(a)f_w(-a) \Big) =O(t^{-2}),  \quad \mbox{as } t \to \infty.
\end{equation}
Then by \eqref{kappa Wronskian}, we have 
\begin{equation}
e^{-|z+it|^2-|w+it|^2}   \kappa^\R (z+it,\bar{w}-it) = \frac{i}{2t} e^{ -|z|^2-|w|^2 } K^\C(z,w) \, c(z,w) \,+O(t^{-2}), \quad \mbox{as } t \to \infty,
\end{equation}
uniformly for $z$ and $w$ in compact subsets of $\mathbb{C}$. Similarly, we have 
\begin{equation}
e^{-|z+it|^2-|w+it|^2}  \kappa^\R (z+it,w+it) =O(t^{-2}), \quad \mbox{as } t \to \infty,
\end{equation}
uniformly for $z$ and $w$ in compact subsets of $\mathbb{C}$. Let us write 
\begin{equation}
M:=\Big( c(z_j,z_l)\,e^{-|z_j|^2-|z_l|^2} K^\C(z_j,z_l) \Big)_{j,l=1}^k. 
\end{equation}
Combining the above expansions with \eqref{RNk near real} and \eqref{K AH}, and performing elementary row and column operations, we obtain that as $t \to \infty$, 
\begin{equation}
\begin{split}
R_k^{\mathbb{R}}(z_1+it,\dots,z_k+it) & =  (-1)^{\frac{k(k-1)}{2}} \Pf 
\begin{pmatrix}
0 & M\\
-M^T & 0
\end{pmatrix} +o(1), \\
&=  \det (M) +o(1) = \det \Big( e^{-|z_j|^2-|z_l|^2} K^\C(z_j,z_l) \Big)_{j,l=1}^k +o(1), 
\end{split}
\end{equation}
which gives \eqref{Pf to det}. Here, the second identity follows from  \eqref{Pf block det}, whereas the third one follows from the fact that the cocycle $c$ cancels out when forming the determinant. 
The proof is complete.

\subsection{Proof of Proposition~\ref{Cor_Chiral lim}}\label{Subsec_chiral limit}
We start with an auxiliary lemma. 

\begin{lem} \label{Lem_Pf det iden}
For any function $f: \C^2 \to \C$ satisfying $f(x,y)=f(y,x)$, any $x_{1},\ldots,x_{k}\in \mathbb{C}$ and any $y_{1},\ldots,y_{k} \in \mathbb{C}\setminus \{0\}$, we have
\begin{equation} \label{Pf det iden}
\Pf \Big[ f(x_j,x_l) \begin{pmatrix} 
\frac{y_j-y_l}{2y_j y_l}  & \frac{y_j+y_l}{2 y_j y_l } \smallskip \\
-\frac{y_j+y_l}{2 y_j y_l }  & -\frac{y_j-y_l}{2 y_j y_l } \end{pmatrix} \Big]_{j,l=1}^k \, \prod_{j=1}^k y_j = \det \Big[ f(x_j,x_l) \Big]_{j,l=1}^k.
\end{equation}
\end{lem}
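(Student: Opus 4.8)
The plan is to exhibit, for each $j\in\{1,\dots,k\}$, an explicit invertible $2\times2$ matrix $T_j$ with $\det T_j=y_j$ that simultaneously puts all the $2\times2$ blocks into the standard symplectic form, after which \eqref{Pf det iden} reduces to a clean block-Pfaffian identity. Write $M_{jl}$ for the $2\times2$ matrix appearing in \eqref{Pf det iden}, so that the left-hand side of \eqref{Pf det iden} equals $\bigl(\prod_{j=1}^{k}y_j\bigr)\Pf(\mathcal A)$ with $\mathcal A:=[\,f(x_j,x_l)M_{jl}\,]_{j,l=1}^{k}$. Since $\tfrac{y_j\pm y_l}{2y_jy_l}=\tfrac12\bigl(\tfrac1{y_l}\pm\tfrac1{y_j}\bigr)$, the block has the rank-two form $M_{jl}=\tfrac12\bigl(\tfrac1{y_l}pq^{T}-\tfrac1{y_j}qp^{T}\bigr)$ with $p=(1,-1)^{T}$, $q=(1,1)^{T}$, and this is what makes the simultaneous standardisation possible. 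Concretely, set
\[
T_j:=\frac12\begin{pmatrix}1 & -1\\ 2y_j & 2y_j\end{pmatrix},\qquad \det T_j=y_j,\qquad J:=\begin{pmatrix}0 & 1\\ -1 & 0\end{pmatrix};
\]
a short direct computation then gives $T_j M_{jl} T_l^{T}=J$ for all $j,l$. Hence, with $\mathcal T:=\operatorname{diag}(T_1,\dots,T_k)$ (a block-diagonal $2k\times2k$ matrix), the congruence $\mathcal T\mathcal A\mathcal T^{T}$ has $(j,l)$-block $f(x_j,x_l)J$, and since $\Pf(\mathcal T\mathcal A\mathcal T^{T})=\det(\mathcal T)\Pf(\mathcal A)=\bigl(\prod_j y_j\bigr)\Pf(\mathcal A)$ we obtain $\bigl(\prod_j y_j\bigr)\Pf(\mathcal A)=\Pf[\,f(x_j,x_l)J\,]_{j,l=1}^{k}$.

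It then remains to prove that $\Pf[\,c_{jl}J\,]_{j,l=1}^{k}=\det[\,c_{jl}\,]_{j,l=1}^{k}$ for any symmetric matrix $C=(c_{jl})$; taking $c_{jl}=f(x_j,x_l)$ and combining with the previous paragraph yields \eqref{Pf det iden}. Let $P$ be the permutation matrix reordering the $2k$ indices from $(1,1),(1,2),(2,1),(2,2),\dots$ to $(1,1),(2,1),\dots,(k,1),(1,2),(2,2),\dots,(k,2)$, i.e.\ moving all first components ahead of all second components; this permutation has $\tfrac{k(k-1)}{2}$ inversions, so $\det P=(-1)^{k(k-1)/2}$. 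Using $J_{11}=J_{22}=0$, $J_{12}=-J_{21}=1$ and $C^{T}=C$, conjugating $[\,c_{jl}J\,]$ by $P$ produces the block matrix $\left(\begin{smallmatrix}0 & C\\ -C^{T} & 0\end{smallmatrix}\right)$. Therefore, by $\Pf(PXP^{T})=\det(P)\Pf(X)$ and the algebraic identity \eqref{Pf block det} (with $M=C$),
\[
(-1)^{k(k-1)/2}\,\Pf[\,c_{jl}J\,]_{j,l=1}^{k}=\Pf\begin{pmatrix}0 & C\\ -C^{T} & 0\end{pmatrix}=(-1)^{k(k-1)/2}\det C,
\]
so the two signs cancel and $\Pf[\,c_{jl}J\,]_{j,l=1}^{k}=\det C$, which completes the argument.

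The proof is entirely elementary linear algebra, and I do not expect a serious obstacle. The one inventive point is finding $T_j$ in the first paragraph, but it is essentially forced: one wants $\det T_j=y_j$ together with ``$T_jM_{jl}T_l^{T}$ independent of $y_j$ and $y_l$'', and diagonalising the rank-two structure of $M_{jl}$ points straight to the stated $T_j$. The only place demanding a little care is checking that the two factors $(-1)^{k(k-1)/2}$ in the second paragraph — the sign of the reordering permutation and the sign in \eqref{Pf block det} — really do cancel.
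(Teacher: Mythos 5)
Your proof is correct, and it takes a genuinely different (and in one place more careful) route than the paper. For the first half you construct an explicit block-diagonal congruence $\mathcal T=\operatorname{diag}(T_1,\dots,T_k)$ with $T_jM_{jl}T_l^{T}=J$ and $\det T_j=y_j$, and then invoke the identity $\Pf(\mathcal T\mathcal A\mathcal T^{T})=\det(\mathcal T)\Pf(\mathcal A)$; this pins down the sign exactly. The paper instead passes to determinants: it applies elementary row and column operations (which are not applied symmetrically and hence break skew-symmetry along the way) to show $\det F=\det\bigl[f_{jl}J\bigr]\prod_j y_j^{-2}$, and then converts back to Pfaffians via $\Pf(B)^2=\det(B)$. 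That last step only determines $\Pf(F)\prod_j y_j$ up to sign; the paper implicitly resolves the sign, presumably by the explicitly displayed $k=1,2$ cases, but does not say so. Your congruence argument sidesteps this ambiguity entirely. For the second half, reducing $\Pf\bigl[c_{jl}J\bigr]$ to $\det C$, you and the paper both reorder indices to reach $\left(\begin{smallmatrix}0 & C\\ -C^T & 0\end{smallmatrix}\right)$ and apply \eqref{Pf block det}; you make the reordering permutation and its inversion count explicit (I verified that the number of inversions is $\binom{k}{2}$, as you state), whereas the paper summarizes this as ``row and column operations.'' I also checked your explicit $T_j$: indeed $\det T_j=y_j$ and $T_jM_{jl}T_l^{T}=J$, so the construction works as claimed.
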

\noindent To be concrete, for $k=1,2$ formula \eqref{Pf det iden} reads as follows: letting $f_{jk}:=f(x_j,x_k)$, 
\begin{equation}
	\Pf  \begin{pmatrix}
	0 & f_{11} \frac{1}{y_1}
	\\
	-f_{11} \frac{1}{y_1} & 0
\end{pmatrix} y_1 = f_{11}, \; \Pf  \begin{pmatrix}
	0 & f_{11} \frac{1}{y_1} &  f_{12} \frac{y_1-y_2}{2 y_1y_2 }  &  f_{12} \frac{y_1+y_2}{2y_1y_2}
	\smallskip 
	\\
	-f_{11} \frac{1}{y_1} & 0 & - f_{12} \frac{y_1+y_2}{2y_1y_2} & - f_{12} \frac{y_1-y_2}{2y_1y_2}
		\smallskip 
	\\
	-f_{12} \frac{y_1-y_2}{2y_1y_2} & f_{12} \frac{y_1+y_2}{2y_1y_2} & 0 & f_{22} \frac{1}{y_2}
		\smallskip 
	\\
	-f_{12} \frac{y_1+y_2}{2y_1y_2} & f_{12} \frac{y_1-y_2}{2y_1y_2} & -f_{22}\frac{1}{y_2} & 0
\end{pmatrix}  y_1y_2=  f_{11} f_{22}-f_{12}^2.
\end{equation}

\begin{proof}[Proof of Lemma~\ref{Lem_Pf det iden}]
	
	Let 
	\begin{equation}
	F:= \Big[ f_{jl}  \begin{pmatrix} 
		\frac{y_j-y_l}{2y_j y_l}  & \frac{y_j+y_l}{2 y_j y_l } 
		\smallskip 
		\\
		-\frac{y_j+y_l}{2 y_j y_l }  & -\frac{y_j-y_l}{2 y_j y_l } 
	\end{pmatrix}    \Big]_{j,l=1}^k=  \Big[ \frac{f_{jl}}{2}  \begin{pmatrix} 
	\frac{1}{y_l}-\frac{1}{y_j}  & \frac{1}{y_l}+\frac{1}{y_j} 
	\smallskip 
	\\
	-\frac{1}{y_l}-\frac{1}{y_j}   & 	\frac{1}{y_j}-\frac{1}{y_l} 
\end{pmatrix}    \Big]_{j,l=1}^k . 
	\end{equation}
	We first show that 
\begin{equation} \label{Pf y indep}
	\Pf \,(F) \prod_{j=1}^k y_j= \Pf  \Big[ f_{jl}  \begin{pmatrix} 
		0  & 1
		\smallskip 
		\\
		-1  & 0
	\end{pmatrix}    \Big]_{j,l=1}^k .
\end{equation}
For this, recall that for any $k\times k$ matrix $A=(a_{j,l})_{j,l=1}^k$ and any $2k \times 2k$  skew-symmetric matrix $B$,  
\begin{equation} \label{det def}
\det (A)= \sum_{ \sigma \in S_k } \textup{sgn} (\sigma) \prod_{j=1}^k a_{ j,\sigma(j) }, \qquad \Pf\,(B)^2=\det (B),
\end{equation}
where $S_k$ is the symmetric group of all permutations of size $k$. Using row and column operations, we observe that
\begin{equation}
\begin{split}
\det (F) & = \det  \Big[ \frac{f_{jl}}{2}  \begin{pmatrix} 
y_l^{-1}-y_j^{-1}  & y_l^{-1}+y_j^{-1} \smallskip \\
-y_l^{-1}-y_j^{-1} & y_j^{-1}-y_l^{-1}
\end{pmatrix} \Big]_{j,l=1}^k =  \det  \Big[ \frac{f_{jl}}{2}  \begin{pmatrix} 
-2y_j^{-1}  & 2y_j^{-1} \smallskip \\
-y_l^{-1}-y_j^{-1} & y_j^{-1}-y_l^{-1}
\end{pmatrix} \Big]_{j,l=1}^k
\\
&= \det  \Big[ \frac{f_{jl}}{2}  \begin{pmatrix} 
0  & 2y_j^{-1} \smallskip \\
-2y_l^{-1}   & y_j^{-1}-y_l^{-1}
\end{pmatrix} \Big]_{j,l=1}^k =  \det  \Big[ \frac{f_{jl}}{2} \begin{pmatrix} 
0  & 2y_j^{-1} \smallskip \\
-2y_l^{-1}  & -y_l^{-1}
\end{pmatrix}  \Big]_{j,l=1}^k  \\
&=   \det  \Big[ f_{jl}  \begin{pmatrix} 
0  & y_j^{-1} \smallskip \\
-y_l^{-1} & 0
\end{pmatrix} \Big]_{j,l=1}^k = \det  \Big[ f_{jl}  \begin{pmatrix} 
0  & 1 \smallskip \\
-1 & 0
\end{pmatrix} \Big]_{j,l=1}^k  \prod_{j=1}^k y_j^{-2}. 
    \end{split}
    \end{equation}
Combining the above with the second identity in \eqref{det def} yields \eqref{Pf y indep}. Finally, using again row and column operations, we obtain 
\begin{equation}\label{lol6}
\begin{split}
\Pf  \Big[ f_{jl} \begin{pmatrix} 
0  & 1 \smallskip  \\
-1  & 0 \end{pmatrix} \Big]_{j,l=1}^k  &=  (-1)^{ \frac{k(k-1)}{2} } \Pf  \Big[   \begin{pmatrix} 
0  & M \smallskip  \\
-M^T  & 0
\end{pmatrix} \Big]_{j,l=1}^k , \qquad M:= (  f_{jl} )_{j,l=1}^k.
\end{split}
\end{equation}
The desired identity \eqref{Pf det iden} follows from \eqref{Pf y indep}, \eqref{lol6} and \eqref{Pf block det}. 
\end{proof}

We are now ready to prove Proposition~\ref{Cor_Chiral lim}.

\begin{proof}[Proof of Proposition~\ref{Cor_Chiral lim}]

Throughout the proof, let $x_j:= \re z_j$ and $y_j=\im z_j$ ($j=1,\dots,k$). Using the definition of $R_{k}^{\mathbb{R}}$, the left-hand side of \eqref{convergence in distribution} can be rewritten as
\begin{equation}\label{lol9}
\wt{R}_k(z_1,\dots,z_k):=\frac{1}{a^{2k}} R_k^{\mathbb{R}}(\frac{z_1}{a},\dots, \frac{z_k}{a}) = \frac{1}{a^{3k}}  \Pf \Big[ e^{-\frac{|z_j|^2+|z_l|^2}{a^2}} 
\begin{pmatrix} 
\kappa^{\mathbb{R}}(\frac{z_j}{a},\frac{z_l}{a}) & \kappa^{\mathbb{R}}(\frac{z_j}{a},\frac{\bar{z}_l}{a})  \smallskip  \\
\kappa^{\mathbb{R}}(\frac{\bar{z}_j}{a},\frac{z_l}{a}) & \kappa^{\mathbb{R}}(\frac{\bar{z}_j}{a},\frac{\bar{z}_l}{a}) 
\end{pmatrix} 
\Big]_{j,l=1}^k \prod_{j=1}^k (-2i \, y_j).
\end{equation} 
Let us first compute the leading order behavior of $\frac{1}{a^3} e^{-\frac{|z|^2+|w|^2}{a^2}} \kappa^{\mathbb{R}}(\frac{z}{a},\frac{w}{a})$ as $a \to 0$. Using the well-known $z\to + \infty$ asymptotics of $\erfc(z)$ (see e.g. \cite[Eq.(7.12.1)]{olver2010nist}), we find  
\begin{equation}
f_{z/a}(u)=\frac12 \erfc( \sqrt{2}( \tfrac{z}{a}-u ) ) =  \frac{1}{2\sqrt{2\pi}} \frac{a}{z} \, e^{ -2 ( \tfrac{z}{a}-u )^2 } \,(1+o(1)), \qquad \mbox{as } a \to 0,
\end{equation}
uniformly for $u$ in compact subsets of $\mathbb{C}$. Thus we have 
\begin{align}
& \frac{1}{a^{3}}\sqrt{\pi}e^{-\frac{|z|^2+|w|^2-z^2-w^2}{a^2}} \Big( f_{w/a}(a)f_{z/a}(-a)-f_{z/a}(a)f_{w/a}(-a) \Big) \nonumber
\\
& =  -\sqrt{\pi}e^{-\frac{|z|^2+|w|^2+z^2+w^2}{a^2}} \frac{1}{8\pi} \frac{1}{azw}\Big(  e^{4(z-w)}-e^{-4(z-w)}  \Big) \,(1+o(1)), \qquad \mbox{as } a \to 0. \label{lol7}
\end{align}
On the other hand, 
\begin{equation}
\begin{split}
W(f_{w/a},f_{z/a})(u)&= -\sqrt{ \frac{2}{\pi} } \Big(  e^{-2( \frac{w}{a}-u )^2 } f_{z/a}(u)- e^{-2( \frac{z}{a}-u )^2 } f_{w/a}(u)  \Big)
\\
&= -\frac{a}{2\pi} \Big( \frac{1}{z}-\frac{1}{w}    \Big) \, e^{-2( \frac{z}{a}-u )^2-2( \frac{w}{a}-u )^2 } \,(1+o(1)), \quad \mbox{as } a \to 0
\end{split}
\end{equation}
uniformly for $u$ in compact subsets of $\mathbb{C}$. Thus we have 
\begin{equation}
\begin{split}
\frac{1}{a^{3}}\sqrt{\pi}e^{ -\frac{|z|^2+|w|^2-z^2-w^2}{a^2} }W(f_{w/a},f_{z/a})(au) 
&= -\frac{1}{2a^{2}\sqrt{\pi}} \Big( \frac{1}{z}-\frac{1}{w}    \Big) \,e^{ -\frac{|z|^2+|w|^2+z^2+w^2}{a^2} }\, e^{4u(z+w)} \,(1+o(1)).
\end{split}
\end{equation}
This gives  
\begin{equation}
\begin{split}
& \frac{1}{a^{3}}\sqrt{\pi} e^{ -\frac{|z|^2+|w|^2-z^2-w^2}{a^2} } \int_{-a}^a W(f_{w/a},f_{z/a})(u)\,du = \frac{\sqrt{\pi}}{a^{2}} e^{ -\frac{|z|^2+|w|^2-z^2-w^2}{a^2} }  \int_{-1}^1 W(f_{w/a},f_{z/a})(au)\,du 
\\
& = - e^{ -\frac{|z|^2+|w|^2+z^2+w^2}{a^2} } \,\frac{1}{2a\sqrt{\pi}}\Big( \frac{1}{z}-\frac{1}{w}    \Big) \int_{-1}^1  \, e^{4u(z+w)} \,du \,(1+o(1))
\\
& = - e^{ -\frac{|z|^2+|w|^2+z^2+w^2}{a^2} } \,\frac{1}{8a\sqrt{\pi}} \frac{1}{zw} \frac{w-z}{z+w} \Big( e^{4(z+w)}-e^{-4(z+w)} \Big) \,(1+o(1)) \quad \mbox{as } a \to 0.
\end{split}
\end{equation}
Summing the last asymptotic formula with \eqref{lol7} (and using \eqref{kappa Wronskian}) gives
\begin{align}
& \wt{\kappa}(z,w):=	\frac{1}{a^3} e^{-\frac{|z|^2+|w|^2}{a^2}} \kappa^{\mathbb{R}}(\frac{z}{a},\frac{w}{a})  \nonumber  \\
 &=  \frac{2}{a\sqrt{\pi}} \,e^{-\frac{|z|^2+|w|^2+z^2+w^2}{a^2}} \,  \frac{z-w}{4zw} \Big( \frac{ \sinh(4(z+w)) }{2(z+w)} - \frac{  \sinh(4(z-w))  }{2(z-w)}  +o(1) \Big), \quad \mbox{as } a \to 0. \label{kappa wt asymp a 0}
\end{align}
We now use the Gaussian approximation of the Dirac delta: for any continuous function $f:\mathbb{R}\to \mathbb{C}$ with compact support and any fixed $\lambda>0$, we have
\begin{equation} \label{Gaussian Dirac}
\int_{-\infty}^{+\infty}\frac{1}{a \sqrt{\pi}} e^{-\lambda(x/a)^2}f(x)dx = \frac{f(0)}{\sqrt{\lambda}} +o(1) \qquad  \mbox{as } a \to 0.
\end{equation}
For short, in what follows we will denote the above as $\frac{1}{a \sqrt{\pi}} e^{-\lambda(x/a)^2} \overset{d}{\longrightarrow} \frac{\delta(x)}{\sqrt{\lambda}}$ as $a \to 0$. 

Before considering the general case $k \in \mathbb{N}_{>0}$, it is instructive to first look at the simpler cases $k=1,2$.
For $k=1$, by \eqref{lol9}, \eqref{kappa wt asymp a 0} and \eqref{Gaussian Dirac}, 
\begin{equation}
\begin{split}
\wt{R}_1(z_1)  & =\wt{\kappa}(z_1,\bar{z}_1) (-2i y_1)=  \frac{1}{a\sqrt{\pi}} \,e^{-4(x_1/a)^2 } \frac{y_1^2}{x_1^2+y_1^2}
\Big( \frac{ \sinh(8x_1) }{2x_1} - \frac{  \sin(8y_1)  }{2y_1}  +o(1) \Big)
\\
& \overset{d}{\longrightarrow}  \frac{ \delta(x_1) }{ 2 }  \frac{y_1^2}{x_1^2+y_1^2}
\Big( \frac{ \sinh(8x_1) }{2x_1} - \frac{  \sin(8y_1)  }{2y_1}  +o(1) \Big) = \delta(x_1) \,  K^{\mathrm{\chi sin}}(y_1,y_1), \qquad  \mbox{as } a \to 0.
\end{split}
\end{equation}
For $k=2$, using an exact computation of the Pfaffian in \eqref{lol9}, we get
\begin{equation}
\begin{split}
\wt{R}_2(z_1,z_2)= \Big( \wt{\kappa}(z_1,\bar{z}_1)\wt{\kappa}(z_2,\bar{z}_2)-|\wt{\kappa}(z_1,z_2)|^2+|\wt{\kappa}(z_1,\bar{z}_2)|^2 \Big) \, 4y_1y_2.
\end{split}
\end{equation}
By \eqref{kappa wt asymp a 0}, as $a \to 0$, the terms in the above right-hand side have the same exponential factor $\frac{4}{a^2\pi} e^{-4(x_1/a)^2-4(x_2/a)^2}$ in their asymptotics. Using \eqref{Gaussian Dirac}, we obtain after some computation that
\begin{equation}
\wt{R}_2(z_{1},z_{2})   \overset{d}{\longrightarrow}  \delta (x_1) \delta(x_2) \Big( K^{\mathrm{ \chi sin}}(y_1,y_1) K^{\mathrm{ \chi sin}}(y_2,y_2)- K^{\mathrm{ \chi sin}}(y_1,y_2)^2  \Big),  \qquad  \mbox{as } a \to 0.
\end{equation}

We now turn to the general case. 
For $k \in \mathbb{N}_{>0}$, using \eqref{kappa wt asymp a 0} and 
\begin{equation}
	\prod_{j=1}^k \frac{2}{a\sqrt{\pi}} e^{-4(x_j/a)^2}    \overset{d}{\longrightarrow} 	\prod_{j=1}^k \delta (x_j), 
\end{equation}
we obtain
\begin{align}
& \wt{R}_k(z_1,\dots,z_k)  = \Pf \Big[ 
\begin{pmatrix} 
\wt{\kappa}(z_j,z_l) & \wt{\kappa}(z_j,\bar{z}_l) \smallskip \\
\wt{\kappa}(\bar{z}_j,z_l) & \wt{\kappa}(\bar{z}_j,\bar{z}_l) 
\end{pmatrix} 
\Big]_{j,l=1}^k \prod_{j=1}^k (-2i \, y_j) \\
& \overset{d}{\longrightarrow}   \Pf \Big[  \begin{pmatrix}
\frac{y_j-y_l}{2y_j y_l} K^{\mathrm{\chi sin}}(y_j,y_l) & \frac{y_j+y_l}{2y_j y_l} K^{\mathrm{\chi sin}}(y_j,y_l) \smallskip \\
-\frac{y_j+y_l}{2y_j y_l} K^{\mathrm{\chi sin}}(y_j,y_l) & -\frac{y_j-y_l}{2y_j y_l} K^{\mathrm{\chi sin}}(y_j,y_l)
\end{pmatrix} \Big]_{j,l=1}^k  \prod_{j=1}^k   y_j\, \delta (x_j),  \qquad  \mbox{as } a \to 0.
\end{align}
Now Lemma~\ref{Lem_Pf det iden} completes the proof. 
\end{proof}

\section{Proof of Theorem \ref{Thm_partition functions}: semi-large gap probabilities} \label{Section_Parition functions}

The first step in proving Theorem \ref{Thm_partition functions} is to obtain exact identities for $\mathbb{P}_{N}^{1}$, $\mathbb{P}_{N}^{2}$ and $\mathbb{P}_{N}^{12}$. For this, we will rely on the following well-known formula for partition functions of planar symplectic ensembles. 

\begin{lem}(See e.g. \cite[Remark 2.5 and Corollary 3.3]{akemann2021skew})\label{lemma:partition exact}

\noindent Let $\mathsf{w}$ be a rotation-invariant weight with sufficient decay at $\infty$,
\begin{align*}
\int_{0}^{+\infty} r^{j}\mathsf{w}(r)dr < + \infty, \qquad \mbox{for all } j \geq 0.
\end{align*}
Then the partition function
\begin{align}\label{tilde ZN}
\tilde{Z}_{N} := \frac{1}{N!} \int_{\C}\ldots \int_{\C} \prod_{1 \leq j<k \leq N} \abs{\zeta_j-\zeta_k}^2 \abs{\zeta_j-\overline{\zeta}_k}^2 \prod_{j=1}^{N} \abs{\zeta_j-\overline{\zeta}_j}^2 \mathsf{w}(\zeta_j)  \,  dA(\zeta_j)
\end{align}
can be rewritten as
\begin{equation} \label{ZN symplectic}
\tilde{Z}_N =  \prod_{k=0}^{N-1} \tilde{r}_k, \qquad \mbox{where } \quad \tilde{r}_{k} := 2\int_\C |\zeta|^{4k+2} \mathsf{w}(\zeta) \, dA(\zeta) = 4\int_{0}^{+\infty} r^{4k+3}\mathsf{w}(r)dr.
\end{equation}
\end{lem}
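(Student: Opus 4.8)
The statement is a standard consequence of the Pfaffian/skew-orthogonal structure for planar symplectic ensembles, and the plan is to reproduce (in a self-contained way) the computation that already underlies \eqref{bfR Pfa}--\eqref{GN}. First I would recall that the integrand in \eqref{tilde ZN}, namely $\prod_{j<k}\abs{\zeta_j-\zeta_k}^2\abs{\zeta_j-\overline{\zeta}_k}^2\prod_{j}\abs{\zeta_j-\overline{\zeta}_j}^2$, is (up to sign) the square of a confluent $2N\times 2N$ Vandermonde-type determinant in the variables $\zeta_1,\overline{\zeta}_1,\ldots,\zeta_N,\overline{\zeta}_N$; concretely it equals $\prod_{j=1}^N(\overline{\zeta}_j-\zeta_j)\cdot\det\bigl[\,\zeta_k^{j-1}\ \ \ \overline{\zeta}_k^{j-1}\,\bigr]$ where the rows are indexed by $j=1,\ldots,2N$ and the two columns displayed are the ``$\zeta_k$-column'' and ``$\overline{\zeta}_k$-column'' for each $k=1,\ldots,N$. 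Replacing the monomials $\zeta^{j-1}$ by an arbitrary monic family of polynomials $p_{j-1}$ of degree $j-1$ via row operations does not change the determinant, so I am free to take $p_{j}=q_{j}$, the monic skew-orthogonal polynomials of the excerpt (which exist by the cited Gram--Schmidt argument, or simply by \eqref{skew op_rad} when $\mathsf{w}$ is rotation-invariant).

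Next I would invoke the de Bruijn integration formula: for a skew-symmetric ``pairing'' of the form $\langle f,g\rangle_s=\int_{\mathbb C}\bigl(f(\zeta)g(\overline\zeta)-g(\zeta)f(\overline\zeta)\bigr)(\zeta-\overline\zeta)\mathsf w(\zeta)\,dA(\zeta)$, one has
\begin{equation*}
\frac{1}{N!}\int_{\mathbb C^N}\prod_{j=1}^N(\overline\zeta_j-\zeta_j)\,\det\bigl[q_{j-1}(\zeta_k)\ \ q_{j-1}(\overline\zeta_k)\bigr]_{j=1,\dots,2N}\prod_{j=1}^N\mathsf w(\zeta_j)\,dA(\zeta_j)=\Pf\bigl[\langle q_{j-1},q_{k-1}\rangle_s\bigr]_{j,k=1}^{2N}.
\end{equation*}
This is the $2N$-dimensional analogue of the classical de Bruijn identity and is proved by expanding the determinant over permutations, symmetrising over the $N!$ labellings together with the $2^N$ choices of which of $\{\zeta_k,\overline\zeta_k\}$ plays which role, and recognising the result as a sum over perfect matchings, i.e. a Pfaffian. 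I would include a short proof of this identity (or cite it from the skew-orthogonal literature, e.g. the references already used for \eqref{bfkappaN skewOP}), taking care that the weight $(\zeta-\overline\zeta)\mathsf w(\zeta)$ and the prefactor $\prod_j(\overline\zeta_j-\zeta_j)$ combine correctly — this bookkeeping of signs and of the factor $\prod_j(\overline\zeta_j-\zeta_j)$ is the step most prone to error, and I would do it carefully for $N=1$ first as a sanity check.

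Finally, by the skew-orthogonality relations \eqref{lol1}, the $2N\times 2N$ matrix $\bigl[\langle q_{j-1},q_{k-1}\rangle_s\bigr]$ is block-diagonal with $2\times2$ blocks $\begin{pmatrix}0&\tilde r_k\\-\tilde r_k&0\end{pmatrix}$, $k=0,\ldots,N-1$, whose Pfaffian is $\prod_{k=0}^{N-1}\tilde r_k$. Since the pairing here uses $\mathsf w$ in place of $e^{-NQ_N}$, the skew-norms $\tilde r_k$ are given by the same formula as \eqref{rk h2k+1}--\eqref{hk Qradial} with $e^{-NQ_N}$ replaced by $\mathsf w$, namely $\tilde r_k=2\int_{\mathbb C}|\zeta|^{4k+2}\mathsf w(\zeta)\,dA(\zeta)$; passing to polar coordinates and using $dA(\zeta)=r\,dr\,d\phi/\pi$ gives $\tilde r_k=4\int_0^\infty r^{4k+3}\mathsf w(r)\,dr$, which is exactly \eqref{ZN symplectic}. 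The hypothesis $\int_0^\infty r^{j}\mathsf w(r)\,dr<\infty$ for all $j$ guarantees all these integrals (and the interchange of sum/integral implicit in the de Bruijn step) are finite, completing the proof. The main obstacle is purely organisational: getting the confluent Vandermonde factorisation and the de Bruijn identity stated with the correct placement of the Jacobian-like factor $\prod_j(\overline\zeta_j-\zeta_j)$ and the correct overall sign; once that is pinned down the rest is immediate from \eqref{lol1}.
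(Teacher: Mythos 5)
The paper does not give its own proof of this lemma; it simply cites \cite[Remark 2.5 and Corollary 3.3]{akemann2021skew}. Your reconstruction --- rewrite the interaction as a confluent Vandermonde in $(\zeta_1,\bar\zeta_1,\ldots,\zeta_N,\bar\zeta_N)$, pass to monic skew-orthogonal polynomials by row operations, apply de Bruijn's integration identity to reduce the $N$-fold integral to a Pfaffian of skew inner products, and then diagonalise that Pfaffian into $2\times2$ blocks via the skew-orthogonality relations \eqref{lol1} --- is precisely the standard argument underlying the cited result, and it is correct. The only thing worth pinning down is the sign bookkeeping you already flag: with your ordering the Vandermonde factorisation yields $(-1)^N\prod_j(\bar\zeta_j-\zeta_j)\cdot\det[\,\cdots\,]$, and the de Bruijn pairing with measure $(\bar\zeta-\zeta)\mathsf w\,dA$ is $-\langle\cdot,\cdot\rangle_s$ so its Pfaffian picks up another $(-1)^N$; the two signs cancel (equivalently, swap the two columns for each $k$ to match the paper's convention directly), so the final identity $\tilde Z_N=\prod_{k=0}^{N-1}\tilde r_k$ comes out as stated.
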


\begin{lem} \label{Lem_ZN repre}
For any $\rho >0$ and $N \in \mathbb{N}_{>0}$, the following identities hold:
\begin{align}
 \log \mathbb{P}_{N}^{1} &= \sum_{j=0}^{N-1} \log \Big( 1-  \frac{\gamma(2+2j+b_N N,b_NN )}{\Gamma(2+2j+b_N N)} \Big),  \label{PN1 gamma}
 \\
 \log \mathbb{P}_{N}^{2} & = \sum_{j=0}^{N-1} \log \Big(   \frac{\gamma(2+2j+b_N N,b_NN +2N)}{\Gamma(2+2j+b_N N)} \Big), \label{PN2 gamma}
 \\
 \log \mathbb{P}_{N}^{12} & = \sum_{j=0}^{N-1} \log \Big(  \frac{\gamma(2+2j+b_N N,b_NN +2N)}{\Gamma(2+2j+b_N N)}-\frac{\gamma(2+2j+b_N N,b_NN )}{\Gamma(2+2j+b_N N)} \Big),  \label{PN12 gamma}
\end{align}
where $\gamma$ is defined in \eqref{incom Gamma}.
\end{lem}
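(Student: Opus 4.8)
The plan is to express each of $\mathbb{P}_N^1$, $\mathbb{P}_N^2$, $\mathbb{P}_N^{12}$ as a ratio of a ``hole-restricted'' partition function to the full partition function, and then to evaluate both using Lemma~\ref{lemma:partition exact}.

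First I would set $\mathsf{w}(\zeta):=e^{-NQ_N(\zeta)}=|\zeta|^{2b_NN}e^{-a_NN|\zeta|^2}$, so that by \eqref{Gibbs} the normalisation constant is $Z_N=N!\,\tilde{Z}_N[\mathsf{w}]$ with $\tilde{Z}_N[\,\cdot\,]$ as in \eqref{tilde ZN}. For any rotation-invariant Borel set $\mathcal{A}\subset\mathbb{C}$, integrating \eqref{Gibbs} over $\mathcal{A}^N$ gives
\begin{equation*}
\mathbb{P}\big(\zeta_1,\ldots,\zeta_N\in\mathcal{A}\big)=\frac{\tilde{Z}_N[\mathsf{w}\,\mathbbm{1}_{\mathcal{A}}]}{\tilde{Z}_N[\mathsf{w}]}.
\end{equation*}
Taking $\mathcal{A}=\{|\zeta|\ge r_1\}$, $\{|\zeta|\le r_2\}$ and $\{r_1\le|\zeta|\le r_2\}$ recovers $\mathbb{P}_N^1$, $\mathbb{P}_N^2$ and $\mathbb{P}_N^{12}$ by \eqref{PN 1}--\eqref{PN 12}. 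Since $b_N>0$ for $N$ large, the truncated weight $\mathsf{w}\,\mathbbm{1}_{\mathcal{A}}$ is rotation-invariant with $\int_0^\infty r^j\,(\mathsf{w}\,\mathbbm{1}_{\mathcal{A}})(r)\,dr<\infty$ for all $j\ge 0$, so Lemma~\ref{lemma:partition exact} applies to both numerator and denominator. Using \eqref{ZN symplectic}, the powers $(a_NN)^{-(2k+2+b_NN)}$ cancel in each ratio and one obtains
\begin{equation*}
\mathbb{P}\big(\zeta_1,\ldots,\zeta_N\in\mathcal{A}\big)=\prod_{k=0}^{N-1}\frac{\int_{I_{\mathcal{A}}}r^{4k+3+2b_NN}e^{-a_NNr^2}\,dr}{\int_0^{\infty}r^{4k+3+2b_NN}e^{-a_NNr^2}\,dr},
\end{equation*}
where $I_{\mathcal{A}}$ is $[r_1,\infty)$, $[0,r_2]$ or $[r_1,r_2]$, respectively.

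Next, I would compute the radial integrals. The substitution $t=a_NNr^2$ gives, for $0\le s\le s'\le\infty$,
\begin{equation*}
\int_s^{s'}r^{4k+3+2b_NN}e^{-a_NNr^2}\,dr=\frac{\gamma\big(2k+2+b_NN,\,a_NN(s')^2\big)-\gamma\big(2k+2+b_NN,\,a_NNs^2\big)}{2(a_NN)^{2k+2+b_NN}},
\end{equation*}
with the convention $\gamma(a,\infty)=\Gamma(a)$ and recalling $\gamma(a,0)=0$. By \eqref{droplet annulus} the relevant endpoints are $a_NNr_1^2=b_NN$ and $a_NNr_2^2=(2+b_N)N=2N+b_NN$. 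Hence the $k$-th factor of the product equals $\Gamma(2k+2+b_NN,b_NN)/\Gamma(2k+2+b_NN)$ for $\mathbb{P}_N^1$ (where $\Gamma(a,x)=\Gamma(a)-\gamma(a,x)$), equals $\gamma(2k+2+b_NN,2N+b_NN)/\Gamma(2k+2+b_NN)$ for $\mathbb{P}_N^2$, and equals $[\gamma(2k+2+b_NN,2N+b_NN)-\gamma(2k+2+b_NN,b_NN)]/\Gamma(2k+2+b_NN)$ for $\mathbb{P}_N^{12}$. Taking logarithms and relabelling $k\to j$ gives \eqref{PN1 gamma}, \eqref{PN2 gamma} and \eqref{PN12 gamma}.

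There is no genuinely hard step here; the argument is essentially bookkeeping. The only two points deserving a word of justification are the first displayed identity (conditioning the Pfaffian point process on the hole event equals the ratio of partition functions, which is immediate from \eqref{Gibbs} and $Z_N=N!\,\tilde{Z}_N[\mathsf{w}]$) and the verification that the truncated weights satisfy the hypotheses of Lemma~\ref{lemma:partition exact}, which is trivial once $b_N>0$.
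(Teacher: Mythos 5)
Your proof is correct and follows essentially the same route as the paper's: express each gap probability as a ratio of a hole-restricted partition function to the full one, apply Lemma~\ref{lemma:partition exact} to both, cancel the $(a_NN)$-powers, and evaluate the radial integrals via the substitution $t=a_NNr^2$ together with $a_NNr_1^2=b_NN$ and $a_NNr_2^2=2N+b_NN$. The only cosmetic difference is that you phrase the three cases in terms of a generic rotation-invariant set $\mathcal{A}$, while the paper writes out three truncated weights $\mathsf{w}_1,\mathsf{w}_2,\mathsf{w}_{12}$ explicitly.
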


\begin{proof}
Let 
\begin{align*}
& \mathsf{w}_{1}(\zeta) := e^{-NQ_{N}(\zeta)} \begin{cases}
0, & \mbox{if } |\zeta| \leq r_{1}, \\
1, & \mbox{otherwise},
\end{cases} & & \mathsf{w}_{2}(\zeta) := e^{-NQ_{N}(\zeta)} \begin{cases}
0, & \mbox{if } |\zeta| \geq r_{2}, \\
1, & \mbox{otherwise},
\end{cases} \\
& \mathsf{w}_{12}(\zeta) := e^{-NQ_{N}(\zeta)} \begin{cases}
0, & \mbox{if } |\zeta| \in [0,r_{1}]\cup[r_{2},\infty], \\
1, & \mbox{otherwise}.
\end{cases} 
\end{align*}
By the definitions \eqref{PN 1}--\eqref{PN 12}, we have $\mathbb{P}_{N}^{1}=Z_{N}^{1}/Z_{N}$, $\mathbb{P}_{N}^{2}=Z_{N}^{2}/Z_{N}$ and $\mathbb{P}_{N}^{12}=Z_{N}^{12}/Z_{N}$, where $Z_{N}$, $Z_{N}^{1}$, $Z_{N}^{2}$, $Z_{N}^{12}$ are given by the right-hand side of \eqref{tilde ZN} with $\mathsf{w}$ replaced by $e^{-NQ_{N}}$, $\mathsf{w}_{1}$, $\mathsf{w}_{2}$, $\mathsf{w}_{12}$, respectively. Combining Lemma \ref{lemma:partition exact} with
\begin{align}
& 2 \int_{r_1}^\infty r^{4k+3+2b_NN} e^{-a_N N r^2}\,dr=\frac{ \Gamma(2+2k+b_N N,a_NN \,r_1^2)  } { (a_N N)^{2+2k+b_N N}  },
\\
& 2 \int_{0}^{r_2} r^{4k+3+2b_NN} e^{-a_N N r^2}\,dr=\frac{ \gamma(2+2k+b_N N,a_NN \,r_2^2)  } { (a_N N)^{2+2k+b_N N}  },
\\
&2 \int_{r_1}^{r_2} r^{4k+3+2b_NN} e^{-a_N N r^2}\,dr=\frac{ \gamma(2+2k+b_N N,a_NN \,r_2^2)-\gamma(2+2k+b_N N,a_NN \,r_1^2)  } { (a_N N)^{2+2k+b_N N}  },
\end{align}
we obtain
\begin{align}
& Z_N= \frac{2^N}{(a_NN)^{(b_N+1)N^2+N}} \prod_{k=0}^{N-1}  \Gamma(2+2k+b_N N),   \label{ZN rep}
\\
& Z_N^1=  \frac{2^N}{(a_NN)^{(b_N+1)N^2+N}} \prod_{k=0}^{N-1}  \Gamma(2+2k+b_N N,b_NN ) ,  \label{ZN 1 rep}
\\
& Z_N^2 =  \frac{2^N}{(a_NN)^{(b_N+1)N^2+N}} \prod_{k=0}^{N-1}  \gamma(2+2k+b_N N, b_NN+2N ) ,  \label{ZN 2 rep}
\\
& Z_N^{12} =  \frac{2^N}{(a_NN)^{(b_N+1)N^2+N}} \prod_{k=0}^{N-1} \Big( \gamma(2+2k+b_N N, b_NN+2N )-\gamma(2+2k+b_N N, b_NN )  \Big),   \label{ZN 12 rep}
\end{align}
and the claim follows.
\end{proof}

\begin{proof}[Proof of Theorem~\ref{Thm_partition functions}]
By Lemmas \ref{Lem_ZN repre} and \ref{lemma: uniform}, we have
\begin{align*}
\log \mathbb{P}_{N}^{1} & = \sum_{j=0}^{N-1} \log \Big( 1-  \frac{1}{2}\erfc\Big(-\eta_{j,1} \sqrt{\tilde{a}_{j}/2}\Big) + R_{\tilde{a}_{j}}(\eta_{j,1}) \Big),  
\\
\log \mathbb{P}_{N}^{2} & = \sum_{j=0}^{N-1} \log \Big(   \frac{1}{2}\erfc\Big(-\eta_{j,2} \sqrt{\tilde{a}_{j}/2}\Big) - R_{\tilde{a}_{j}}(\eta_{j,2}) \Big), 
\\
\log \mathbb{P}_{N}^{12} & = \sum_{j=0}^{N-1} \log \Big(  \frac{1}{2}\erfc\Big(-\eta_{j,2} \sqrt{\tilde{a}_{j}/2}\Big) - R_{\tilde{a}_{j}}(\eta_{j,2}) -  \frac{1}{2}\erfc\Big(-\eta_{j,1} \sqrt{\tilde{a}_{j}/2}\Big) + R_{\tilde{a}_{j}}(\eta_{j,1}) \Big),
\end{align*}
where 
\begin{align}
& \tilde{a}_{j} = 2+2j+b_{N} N, \qquad \lambda_{j,1}=\frac{b_{N} N}{\tilde{a}_{j}}, \qquad \lambda_{j,2} = \frac{b_{N}N + 2N}{\tilde{a}_{j}}, \label{Aj lambda j12} \\
& \eta_{j,k} = (\lambda_{j,k}-1)\sqrt{\frac{2(\lambda_{j,k}-1-\log \lambda_{j,k})}{(\lambda_{j,k}-1)^{2}}}, \qquad k=1,2. \label{eta jk}
\end{align}
We complete the proof and give full details only for $\mathbb{P}_{N}^{12}(\rho)$. The proofs for $\mathbb{P}_{N}^{1}(\rho)$ and $\mathbb{P}_{N}^{2}(\rho)$ are similar (and simpler), so we omit them. 
Using \eqref{Q aN bN}, \eqref{Aj lambda j12} and \eqref{eta jk}, we find
\begin{align}
& \lambda_{j,1}=1-\frac{2\rho^2(1+j)}{N^2}+O(N^{-3}), & & \lambda_{j,2}= 1+\frac{2\rho^2}{N}-\frac{2\rho^2(1+j-\rho^2)}{N^2}+O(N^{-3}),
\\
& \eta_{j,1}= -\frac{2(1+j)\rho^2}{N^2}+O(N^{-3}), & & \eta_{j,2}=\frac{2\rho^2}{N}-\frac{2\rho^2(3+3j+\rho^2)}{3N^2}+O(N^{-3}),
\end{align} 
as $N \to \infty$. It follows that
\begin{equation}
\begin{split}
-\eta_{j,1} \sqrt{\tilde{a}_{j}/2} &=\sqrt{2}\rho \frac{1+j}{N} +\frac{(1+j)\rho^3}{\sqrt{2}N^2}+O(N^{-3}), 
\\
-\eta_{j,2} \sqrt{\tilde{a}_{j}/2} &= -\sqrt{2}\rho \Big(1-\frac{j}{N}\Big) +\sqrt{2} \rho \frac{ 1+\rho^2/6 }{N}+\frac{ \rho^3(12+12j-11\rho^2) }{ 36\sqrt{2} N^2 }+O(N^{-3}),
\end{split}
\end{equation}
as $N \to \infty$. Using the above and \eqref{asymp of Ra}, we then get
\begin{equation}
\begin{split}
&\quad \log \Big(  \frac{1}{2}\erfc\Big(-\eta_{j,2} \sqrt{\tilde{a}_{j}/2}\Big) - R_{\tilde{a}_{j}}(\eta_{j,2}) -  \frac{1}{2}\erfc\Big(-\eta_{j,1} \sqrt{\tilde{a}_{j}/2}\Big) + R_{\tilde{a}_{j}}(\eta_{j,1}) \Big)
\\
&= f_{1}(j/N) + \frac{f_{2}(j/N)}{N} + O(N^{-2}),
\end{split}
\end{equation}
uniformly for $j \in \{0,1,\ldots,N-1\}$, where
\begin{align*}
f_{1}(x) &:= \log \Big( \frac{1}{2}\mathrm{erfc}(-\sqrt{2} \, \rho (1-x)) - \frac{1}{2}\mathrm{erfc}(\sqrt{2} \, \rho x) \Big), \\
f_{2}(x) &:=  \frac{-\rho}{3\sqrt{2\pi}}\frac{e^{-2 \rho^{2} x^{2}}(2x^{2} \rho^{2} - 3 \rho^{2} x - 5)+e^{-2\rho^{2}(1-x)^{2}}(5+\rho^{2}(1+x-2x^{2}))}{\frac{1}{2}\mathrm{erfc}(-\sqrt{2} \, \rho (1-x)) - \frac{1}{2}\mathrm{erfc}(\sqrt{2} \, \rho x)}.
\end{align*}
Since $\{f_{1}^{(\ell)},f_{2}^{(\ell)}\}_{\ell=0,1,2}$ are continuous and bounded on $[0,1]$, it follows from \cite[Lemma 3.4]{charlier2021large} (with $A=a_{0}=0$, $B=1$ and $b_{0}=-1$) that as $N \to + \infty$,
\begin{align*}
&\sum_{j=0}^{N-1}f_{1}(j/N) = N \int_{0}^{1}f_{1}(x) \, dx + \frac{f_{1}(0)-f_{1}(1)}{2} + O(N^{-1}), 
\\
&\frac{1}{N}\sum_{j=0}^{N-1}f_{2}(j/N) = \int_{0}^{1}f_{2}(x) \, dx + O(N^{-1}).
\end{align*}
This yields
\begin{align*}
\log \Prob_N^{12}(\rho) &= N \int_{0}^{1} \log \Big( \frac{1}{2}\mathrm{erfc}(-\sqrt{2}\, \rho (1-x))-\frac{1}{2}\mathrm{erfc}(\sqrt{2}\, \rho x) \Big)\, dx + \frac{1}{2}\log \Big( \frac{\mathrm{erfc}(-\sqrt{2}\, \rho)-1}{1-\mathrm{erfc}(\sqrt{2}\, \rho)} \Big)
\\
&+\frac{2\rho}{3\sqrt{2\pi}}  \int_{0}^{1}  \frac{ e^{-2\rho^{2}x^{2}}(5+3\rho^{2} x - 2\rho^{2}x^{2}) - e^{-2\rho^{2}(1-x)^{2}}(5+(1+x-2x^{2})\rho^{2})}{ \mathrm{erfc}(-\sqrt{2}\, \rho (1-x))-\mathrm{erfc}(\sqrt{2}\, \rho x) } \, dx+ O(N^{-1})
\end{align*}
as $N \to + \infty$. Since $\erfc(-z)=2-\erfc(z)$, 
\begin{equation}
\log \Big( \frac{\mathrm{erfc}(-\sqrt{2}\, \rho)-1}{1-\mathrm{erfc}(\sqrt{2}\, \rho)} \Big) = 0. 
\end{equation}
Using also
\begin{equation}
\int_{0}^{1} \frac{ e^{-2\rho^{2}x^{2}}(5+3\rho^{2} x - 2\rho^{2}x^{2}) }{ \mathrm{erfc}(-\sqrt{2}\, \rho (1-x))-\mathrm{erfc}(\sqrt{2}\, \rho x) } \, dx=  \int_{0}^{1}  \frac{ e^{-2\rho^{2}(1-x)^{2}}(5+(1+x-2x^{2})\rho^{2})}{ \mathrm{erfc}(-\sqrt{2}\, \rho (1-x))-\mathrm{erfc}(\sqrt{2}\, \rho x) } \, dx. 
\end{equation}
we obtain \eqref{PN 12 asym thm}. 
\end{proof}

\appendix

\section{Uniform asymptotics of the incomplete gamma function}\label{appendix_gamma}

In this appendix, we collect some known asymptotics of the incomplete gamma function.

\begin{lem}(Taken from \cite[Section 8.11.7]{olver2010nist}, \cite{MR45253} and \cite[(1.32) and below]{ameur2021szego}). \label{Lemma:ameur gamma} Let $E_{\mathrm{sz}}$ be the exterior region of the Szeg\H o curve $\{ z\in \C: |z| \le 1, |z e^{1-z}|=1 \}$. Note that $\{z:|\arg(z-1)|<\frac{3\pi}{4}\} \subset E_{\mathrm{sz}}$. As $\tilde{a} \to +\infty$,  
\begin{equation} \label{Q asymp outer gen}
\mathrm{Q}( \tilde{a} , \tilde{a} z) \sim \frac{ \tilde{a}^{\tilde{a}-1} }{ \Gamma(\tilde{a}) } e^{-\tilde{a}z} \frac{z^{\tilde{a}}}{z-1} \sum_{k=0}^{+\infty} \frac{b_k(z)}{ (z-1)^{2k+1} } \frac{ (-1)^k }{ \tilde{a}^k }
\end{equation}
uniformly for $z$ in compact susbets of $E_{\mathrm{sz}}$. The coefficients $b_k(z)$ are defined recursively by 
\begin{equation}
b_0(z)=1, \qquad b_k(z)=z(1-z)b'_{k-1}(z)+(2k-1) zb_{k-1}(z), \quad (k=1,2,\dots). 
\end{equation}
In particular, as $\tilde{a}\to + \infty$ we have 
\begin{equation} \label{Q asymp outer}
	\mathrm{Q}( \tilde{a} , \tilde{a} z) = \frac{1}{\sqrt{2\pi \tilde{a}}}\, e^{\tilde{a}-\tilde{a} z} \frac{z^{\tilde{a}} }{z-1}\Big(1-\Big(\frac{1}{12}+\frac{z}{(z-1)^2}\Big)\frac{1}{\tilde{a}} +\Big(\frac{1}{288}+\frac{z(2z+1)}{(z-1)^4}\Big) \frac{1}{\tilde{a}^2}+O(\frac{1}{\tilde{a}^3}) \Big),
\end{equation}
uniformly for $z$ in compact susbets of $E_{\mathrm{sz}}$. 	
\end{lem}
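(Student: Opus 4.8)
The plan is to derive \eqref{Q asymp outer gen} from a Laplace/steepest-descent analysis of the integral representation $\Gamma(\tilde{a},w)=\int_{w}^{\infty}t^{\tilde{a}-1}e^{-t}\,dt$, and then to read off \eqref{Q asymp outer} by combining \eqref{Q asymp outer gen} with Stirling's series for $1/\Gamma(\tilde{a})$. First I would substitute $t=\tilde{a}s$ to write
\[
\Gamma(\tilde{a},\tilde{a}z)=\tilde{a}^{\tilde{a}}\int_{z}^{\infty}s^{\tilde{a}-1}e^{-\tilde{a}s}\,ds=\tilde{a}^{\tilde{a}}\int_{z}^{\infty}\frac{1}{s}\,e^{-\tilde{a}\psi(s)}\,ds,\qquad \psi(s):=s-\log s,
\]
the contour being the steepest-descent path of $\psi$ issuing from $z$. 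Since $\psi'(s)=(s-1)/s$, the change of variables $u=\psi(s)-\psi(z)$ gives $ds/s=du/(s-1)$ and $e^{-\tilde{a}\psi(z)}=z^{\tilde{a}}e^{-\tilde{a}z}$, hence
\[
\Gamma(\tilde{a},\tilde{a}z)=\tilde{a}^{\tilde{a}}\,z^{\tilde{a}}e^{-\tilde{a}z}\int_{0}^{\infty}g(u)\,e^{-\tilde{a}u}\,du,\qquad g(u):=\frac{1}{s(u)-1},
\]
where $s(u)$ is the analytic branch of the inverse of $s\mapsto\psi(s)-\psi(z)$ with $s(0)=z$. A uniform version of Watson's lemma then yields $\int_{0}^{\infty}g(u)e^{-\tilde{a}u}\,du\sim\sum_{k\ge0}g^{(k)}(0)\,\tilde{a}^{-k-1}$, and dividing by $\Gamma(\tilde{a})$ gives an expansion of $\mathrm{Q}(\tilde{a},\tilde{a}z)$ of the claimed shape.

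Next I would identify the coefficients. Writing $\phi(s)=1/(s-1)$ and using $ds/du=1/\psi'(s)=s/(s-1)$, one checks by induction that $\tfrac{d^{k}}{du^{k}}\phi=(-1)^{k}p_{k}(s)/(s-1)^{2k+1}$ with $p_{0}\equiv1$, and one extra differentiation produces $p_{k+1}(s)=s(1-s)p_{k}'(s)+(2k+1)s\,p_{k}(s)$, which is exactly the recursion defining $b_{k}$; evaluating at $u=0$ (so $s=z$) gives $g^{(k)}(0)=(-1)^{k}b_{k}(z)/(z-1)^{2k+1}$, whence \eqref{Q asymp outer gen}. The explicit formula \eqref{Q asymp outer} then follows by inserting $b_{0}=1$, $b_{1}(z)=z$, $b_{2}(z)=z(2z+1)$, using $\tilde{a}^{\tilde{a}-1}/\Gamma(\tilde{a})=\tilde{a}^{\tilde{a}}/\Gamma(\tilde{a}+1)=\tfrac{e^{\tilde{a}}}{\sqrt{2\pi\tilde{a}}}\bigl(1-\tfrac{1}{12\tilde{a}}+\tfrac{1}{288\tilde{a}^{2}}+O(\tilde{a}^{-3})\bigr)$, and multiplying out the two power series.

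The genuinely delicate point — and the step I expect to be the main obstacle — is justifying the contour deformation together with the uniformity of the remainder in $z$ over compact subsets of $E_{\mathrm{sz}}$. Concretely, one must verify that for $z$ in a fixed compact $E\subset E_{\mathrm{sz}}$ the level set $\{\mathrm{Im}\,\psi(s)=\mathrm{Im}\,\psi(z)\}$ through $z$ possesses an ascending branch running from $z$ to $+\infty$ that avoids the saddle $s=1$, the logarithmic cut, and itself; that along this path $\mathrm{Re}\,u$ increases strictly to $+\infty$, so that $u\mapsto s(u)$ is well defined on $[0,\infty)$; and that $g$ extends holomorphically to a neighbourhood of $[0,\infty)$ which can be taken uniform for $z\in E$, with derivatives bounded and at most polynomial growth in $u$ — these are precisely the hypotheses under which the remainder in Watson's lemma is $O(\tilde{a}^{-K})$ uniformly. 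The defining inequality of the exterior region, namely $\mathrm{Re}\,\psi(z)<\psi(1)=1$ (equivalently $|ze^{1-z}|>1$) when $|z|\le1$, with no constraint when $|z|\ge1$, is exactly what prevents the steepest-descent path from being trapped inside the unit disk; granting this classical geometric fact of Szeg\H{o}, everything else is routine, and as an alternative one could simply invoke \cite[Section 8.11.7]{olver2010nist} and \cite{MR45253} directly.
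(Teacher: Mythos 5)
The paper does not actually prove this lemma; it is cited from DLMF 8.11.7, Tricomi \cite{MR45253}, and Ameur--Cronvall \cite{ameur2021szego}, so there is no in-paper argument to compare against. Your sketch is precisely the steepest-descent / Watson's-lemma derivation that underlies those references: the substitution $u=\psi(s)-\psi(z)$, the inductive identification of $g^{(k)}(0)=(-1)^kb_k(z)/(z-1)^{2k+1}$ via the recursion $p_{k+1}=s(1-s)p_k'+(2k+1)sp_k$, the multiplication by the Stirling series, and the correct diagnosis that the only nontrivial point is the existence and $z$-uniform control of the ascending steepest-descent contour, which is exactly where the Szeg\H o-curve hypothesis and the refinement of \cite{ameur2021szego} come in.

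One thing you should have flagged, though: carried to the end, your own computation does \emph{not} reproduce \eqref{Q asymp outer gen} as printed. You obtain
\begin{equation}
\mathrm{Q}(\tilde{a},\tilde{a}z)\sim\frac{\tilde{a}^{\tilde{a}-1}}{\Gamma(\tilde{a})}\,e^{-\tilde{a}z}\,z^{\tilde{a}}\sum_{k=0}^{\infty}\frac{(-1)^k b_k(z)}{(z-1)^{2k+1}\,\tilde{a}^k},
\end{equation}
which agrees with DLMF 8.11.7 and, after multiplying by the Stirling series $\tilde{a}^{\tilde{a}-1}/\Gamma(\tilde{a})=\tfrac{e^{\tilde{a}}}{\sqrt{2\pi\tilde{a}}}\bigl(1-\tfrac{1}{12\tilde{a}}+\tfrac{1}{288\tilde{a}^2}+O(\tilde{a}^{-3})\bigr)$, matches the leading and $O(\tilde{a}^{-1})$ terms of \eqref{Q asymp outer}. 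But \eqref{Q asymp outer gen} as written in the lemma carries an extra prefactor $\tfrac{1}{z-1}$ in front of the sum, so it is off from the above (and from DLMF) by a factor of $(z-1)^{-1}$, and it is internally inconsistent with \eqref{Q asymp outer}: these are typos, and a careful derivation such as yours should surface them rather than silently "deriving" a formula that it actually contradicts. A smaller but similar point: the "multiplying out" you describe actually gives the $\tilde{a}^{-2}$ coefficient $\tfrac{1}{288}+\tfrac{z}{12(z-1)^2}+\tfrac{z(2z+1)}{(z-1)^4}$, whereas \eqref{Q asymp outer} omits the cross term $\tfrac{z}{12(z-1)^2}$. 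Neither typo affects the paper's use of the lemma (only the $k=0,1$ terms of \eqref{Q asymp outer} enter Lemma~\ref{Lem_eK asym}), but your proposal should either state the corrected formulas or note the discrepancy explicitly.
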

The asymptotic expansion \eqref{Q asymp outer gen} is stated in \cite{MR45253} for $|\arg(z-1)|<\frac{3\pi}{4}$ and extended in \cite[(1.32) and below]{ameur2021szego} for $z \in E_{\mathrm{sz}}$. Lemma \ref{Lemma:ameur gamma} is used in Subsection \ref{Subsec_cplex limit}, see \eqref{ameur cron}. In fact, to prove Theorem~\ref{thm:main} (a) for a given $\theta \in [0,2\pi)\setminus \{0,\pi\}$, we need \eqref{Q asymp outer} with $z$ close to $e^{2i\theta}$. Hence, to handle the case $\theta \notin [\frac{\pi}{4},\frac{3\pi}{4}]\cup [\frac{5\pi}{4},\frac{7\pi}{4}]$, the results from \cite{MR45253} are enough for us; however for the other case $\theta \in [\frac{\pi}{4},\frac{3\pi}{4}]\cup [\frac{5\pi}{4},\frac{7\pi}{4}]$ we rely on \cite{ameur2021szego}.

\begin{lemma}\label{lemma: uniform}(Taken from \cite[Section 11.2.4]{Temme}).
	For $\tilde{a}>0$ and $z>0$, we have
	\begin{align*}
		& \frac{\gamma(\tilde{a},z)}{\Gamma(\tilde{a})} = \frac{1}{2}\mathrm{erfc}(-\eta \sqrt{\tilde{a}/2}) - R_{\tilde{a}}(\eta), \qquad R_{\tilde{a}}(\eta) = \frac{e^{-\frac{1}{2}\tilde{a} \eta^{2}}}{2\pi i}\int_{-\infty}^{\infty}e^{-\frac{1}{2}\tilde{a} u^{2}}g(u)du,
	\end{align*}
	where $g(u) = \frac{dt}{du}\frac{1}{\lambda-t}+\frac{1}{u+i \eta}$,
	\begin{align}\label{lol8}
		& \lambda = \frac{z}{\tilde{a}}, \quad \eta = (\lambda-1)\sqrt{\frac{2 (\lambda-1-\log \lambda)}{(\lambda-1)^{2}}}, \quad u = -i(t-1)\sqrt{\frac{2(t-1-\log t)}{(t-1)^{2}}},
	\end{align}
	where $\mathrm{sign} (\eta) = \mathrm{sign}(\lambda-1)$, and $\mathrm{sign}(u) = \mathrm{sign}(\im t)$ with $t \in \mathcal{L}:=\{\frac{\theta}{\sin \theta} e^{i\theta}: -\pi < \theta < \pi\}$ and $u \in \mathbb{R}$ (in particular $u = -i(t-1)+\bigO((t-1)^{2})$ as $t \to 1$). Furthermore, 
	\begin{align}\label{asymp of Ra}
		& R_{\tilde{a}}(\eta) \sim \frac{e^{-\frac{1}{2}\tilde{a} \eta^{2}}}{\sqrt{2\pi \tilde{a}}}\sum_{j=0}^{+\infty} \frac{c_{j}(\eta)}{\tilde{a}^{j}} \qquad \mbox{as } \tilde{a} \to + \infty
	\end{align}
	uniformly for $z \in [0,\infty)$, where all coefficients $c_{j}(\eta)$ are bounded functions of $\eta \in \mathbb{R}$ (i.e. bounded for $\lambda \in [0,\infty)$) and given by
	\begin{align}\label{recursive def of the cj}
		c_{0} = \frac{1}{\lambda-1} - \frac{1}{\eta}, \qquad c_{j} = \frac{1}{\eta} \frac{d}{d\eta}c_{j-1}(\eta) + \frac{\gamma_{j}}{\lambda-1}, \; j \geq 1,
	\end{align}
	where the $\gamma_{j}$ are the Stirling coefficients
	\begin{align*}
		\gamma_{j} = \frac{(-1)^{j}}{2^{j} \, j!} \bigg[ \frac{d^{2j}}{dx^{2j}} \bigg( \frac{1}{2}\frac{x^{2}}{x-\log(1+x)} \bigg)^{j+\frac{1}{2}} \bigg]_{x=0}.
	\end{align*}
\end{lemma}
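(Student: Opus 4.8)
The plan is to reconstruct Temme's classical ``uniform saddle-versus-pole'' analysis of the incomplete gamma ratio; since the lemma is quoted verbatim from \cite[Section 11.2.4]{Temme}, in the paper itself one would simply cite it, but here is the route I would take. First I would set $\lambda = z/\tilde a$ and rescale $t = \tilde a\tau$ in $\gamma(\tilde a,z) = \int_0^z t^{\tilde a-1}e^{-t}\,dt$, bringing the exponent to $-\tilde a\,\phi(\tau)$ with $\phi(\tau) = \tau - \log\tau$. The features I would use are that $\phi$ has a single saddle at $\tau = 1$ with $\phi(1) = 1$, $\phi''(1) = 1$, while the endpoint of integration sits at $\tau = \lambda$: for $\lambda$ bounded away from $1$ this is a routine Laplace/Watson situation, but as $\lambda \to 1$ the endpoint collides with the saddle, and the standard remedy is Temme's substitution $\tfrac{1}{2}u^2 = \phi(\tau) - 1 = \tau - 1 - \log\tau$ with $\operatorname{sign} u = \operatorname{sign}(\tau - 1)$. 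I would check that this map is analytic and univalent on a fixed complex neighbourhood of the relevant contour, with $u = (\tau-1) - \tfrac{1}{3}(\tau-1)^2 + \cdots$ near $\tau = 1$, and that it sends $\tau = \lambda$ to the point $u = \eta$ fixed by $\tfrac{1}{2}\eta^2 = \lambda - 1 - \log\lambda$ and $\operatorname{sign}\eta = \operatorname{sign}(\lambda-1)$, i.e. the $\eta$ of the statement. Working with a loop (Hankel-type) representation of the incomplete gamma function rather than the half-line integral is what lets one write the outcome as a symmetric integral over $u \in \R$ with $t$ running along the curve $\mathcal{L} = \{\tfrac{\theta}{\sin\theta}e^{i\theta} : |\theta| < \pi\}$, along which $u$ is real.

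Next I would isolate the contribution of the simple pole of the integrand (coming from a factor $\tfrac{1}{\lambda - t}$ in that representation), which sits at $u = -i\eta$ and tends to $u = 0$ as $\lambda \to 1$: I would subtract its principal part, which is exactly the summand $\tfrac{1}{u+i\eta}$, so that $g(u) = \tfrac{dt}{du}\tfrac{1}{\lambda - t} + \tfrac{1}{u + i\eta}$ becomes analytic in a fixed neighbourhood of $u = 0$, uniformly in $\lambda \in [0,\infty)$, with at most polynomial growth along $\R$. The subtracted pole part, integrated against $e^{-\tilde a u^2/2}$ — closing the contour, picking up the residue, and combining a Gaussian integral with Stirling's formula for $\tilde a^{\tilde a}e^{-\tilde a}/\Gamma(\tilde a)$ — I expect to collapse to exactly $\tfrac{1}{2}\erfc(-\eta\sqrt{\tilde a/2})$, which yields the claimed exact identity with $R_{\tilde a}(\eta)$ equal to the leftover integral $\tfrac{e^{-\tilde a\eta^2/2}}{2\pi i}\int_{-\infty}^\infty e^{-\tilde a u^2/2}g(u)\,du$. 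Then I would apply Watson's lemma for the Gaussian kernel: Taylor-expanding $g(u) = \sum_{k\ge 0} g_k u^k$ at $u = 0$ and integrating term by term (odd terms vanish, $\int_{-\infty}^\infty e^{-\tilde a u^2/2}u^{2j}\,du = \sqrt{2\pi}\,(2j-1)!!\,\tilde a^{-j-1/2}$) gives $R_{\tilde a}(\eta) \sim \tfrac{e^{-\tilde a\eta^2/2}}{\sqrt{2\pi\tilde a}}\sum_j c_j(\eta)\tilde a^{-j}$ with $c_j$ a fixed multiple of $g_{2j}$, uniformly for $z \in [0,\infty)$ thanks to the uniform analyticity and growth bounds on $g$. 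I would pin down $c_0(\eta) = g(0) = \tfrac{1}{\lambda-1} - \tfrac{1}{\eta}$ from the local expansion of the substitution, obtain the recursion $c_j = \tfrac{1}{\eta}\tfrac{d}{d\eta}c_{j-1} + \tfrac{\gamma_j}{\lambda-1}$ by feeding in the Stirling expansion of $\tilde a^{\tilde a}e^{-\tilde a}\sqrt{2\pi/\tilde a}/\Gamma(\tilde a)$ (whose coefficients are the Stirling numbers $\gamma_j$) together with the identity that turns $u$-derivatives at $0$ into $\eta$-derivatives (the maps $\tau \mapsto u$ and $\tau \mapsto \eta$ being ``the same''), and conclude boundedness of the $c_j$ on $\R$ from the fact that the apparent poles at $\eta = 0$ (equivalently $\lambda = 1$) are removable, using $\lambda - 1 = \eta + \tfrac{1}{3}\eta^2 + \cdots$.

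The step I expect to be the main obstacle is the uniformity as $\lambda \to 1$, i.e. controlling the coalescence of the saddle $\tau = 1$ with the pole $\tau = \lambda$ (and, secondarily, the behaviour as $\lambda \to 0$ or $\lambda \to \infty$, where one must track the tail of the $\mathcal{L}$-contour). The Temme substitution is precisely the mechanism that trades this coalescence for a harmless simple pole of $g$ at a point at positive distance from the real axis, but verifying that $g$ is analytic with uniform bounds on a fixed neighbourhood of $u = 0$ and along $\R$ requires the univalence of $\tau \mapsto u$ on a fixed region, which is the one genuinely technical input. Since all of this is worked out in \cite[Section 11.2.4]{Temme}, in the write-up I would simply invoke that reference.
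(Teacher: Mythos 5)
Your proposal matches the paper's handling exactly: this lemma is imported verbatim from \cite[Section 11.2.4]{Temme} and the paper offers no proof of its own, only the citation, which is precisely what you say you would do. Your sketch of Temme's saddle-point-with-coalescing-pole argument is faithful (rescaling $t=\tilde a\tau$, the substitution $\tfrac12 u^2=\tau-1-\log\tau$, subtracting the pole at $u=-i\eta$, and Watson's lemma on the even part of $g$); the only small slip is that $g(0)=i\bigl(\tfrac{1}{\lambda-1}-\tfrac{1}{\eta}\bigr)$ so $c_0=g(0)/i$, the extra $i$ being absorbed by the $\tfrac{1}{2\pi i}$ prefactor in $R_{\tilde a}(\eta)$.
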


\subsection*{Acknowledgements} We are grateful to Gernot Akemann and Peter Forrester for their interest and helpful discussions. SB acknowledges support from the National Research Foundation of Korea, Grant NRF-2019R1A5A1028324, and Samsung Science and Technology Foundation, 	Grant SSTF-BA1401-51. CC acknowledges support from the Novo Nordisk Fonden Project, Grant 0064428, the Swedish Research Council, Grant No. 2021-04626, and the Ruth and Nils-Erik Stenb\"{a}ck Foundation.

\bibliographystyle{abbrv}
\bibliography{RMTbib}
\end{document}